\documentclass[11pt]{article}

\usepackage{fullpage}
\usepackage{soul}
\usepackage[utf8]{inputenc}
\usepackage[small]{caption}
\usepackage{xcolor}
\usepackage{amsmath}
\usepackage{booktabs}
\usepackage{enumitem}
\usepackage{framed}
\usepackage{tikz}
\usepackage{url}

\usepackage{amsthm}
\usepackage{amssymb}
\usepackage{afterpage}

\usepackage{algpseudocode,graphicx,caption,subcaption,xspace,nicefrac,mathtools,bm}

\usepackage[ruled,vlined,linesnumbered,norelsize]{algorithm2e}

\algnotext{EndFor}
\algnotext{EndIf}
\algnotext{EndWhile}
\algnotext{EndProcedure}

\newtheorem{theorem}{Theorem}[section]
\newtheorem{corollary}[theorem]{Corollary}
\newtheorem{proposition}[theorem]{Proposition}
\newtheorem{lemma}[theorem]{Lemma}

\theoremstyle{definition}
\newtheorem{definition}[theorem]{Definition}
\newtheorem{example}[theorem]{Example}

\usepackage{natbib}
\usepackage{cleveref}

\newcommand{\ot}{\leftarrow}

\allowdisplaybreaks

\usepackage{authblk}

\title{\bf Fair Division with Two-Sided Preferences\footnote{A preliminary version of this work \citep{IgarashiKaSu23} appeared in Proceedings of the 32nd International Joint Conference on Artificial Intelligence (IJCAI 2023).}}

\author[1]{Ayumi Igarashi} 
\author[1]{Yasushi Kawase}
\author[2]{Warut Suksompong}
\author[3]{Hanna Sumita}

\affil[1]{University of Tokyo}
\affil[2]{National University of Singapore}
\affil[3]{Tokyo Institute of Technology}

\date{\vspace{-1cm}}

\begin{document}

\maketitle

\begin{abstract}
We study a fair division setting in which participants are to be fairly distributed among teams, where not only do the teams have preferences over the participants as in the canonical fair division setting, but the participants also have preferences over the teams.
We focus on guaranteeing envy-freeness up to one participant (EF1) for the teams together with a stability condition for both sides.
We show that an allocation satisfying EF1, swap stability, and individual stability always exists and can be computed in polynomial time, even when teams may have positive or negative values for participants.
When teams have nonnegative values for participants, we prove that an EF1 and Pareto optimal allocation exists and, if the valuations are binary, can be found in polynomial time.
We also show that an EF1 and justified envy-free allocation does not necessarily exist, and deciding whether such an allocation exists is computationally difficult.
\end{abstract}

\section{Introduction}
\label{sec:intro}

The new season of a youth sports league is starting in three months, and the league organizers need to allocate the participating players to the available teams.
How can they accomplish this task in a satisfactory way, so that all parties involved can look forward to the upcoming season instead of grumbling about the allocation?

A principal consideration in such allocation tasks is fairness, and the problem of fairly dividing resources (in this case, participants) among interested recipients (in this case, teams) has long been studied in economics under the name of \emph{fair division} \citep{BramsTa96,Moulin03,Moulin19}. 
Ensuring fairness among teams is crucial for the sustainability of the league, the motivation for taking part, and the enhancement of competition.
Among the fairness notions that have been proposed in the literature, one of the strongest and most prominent is \emph{envy-freeness}, which stipulates that no team should envy another team based on the sets of participants that they receive.\footnote{Due to the setting that we study, throughout this paper we will use the terms \emph{team} and \emph{participant} in place of the standard fair division terms \emph{agent} and \emph{item}, respectively.}
Even though an envy-free allocation may not exist (e.g., if there is one highly-coveted superstar), an intuitive relaxation called \emph{envy-freeness up to one participant (EF1)}---that is, any envy that one team has toward another team can be eliminated upon the removal of some participant in the envied team---can always be fulfilled \citep{LiptonMaMo04,Budish11}.
Another relevant criterion is \emph{balancedness}, which requires the participants to be distributed as equally among the teams as possible.\footnote{One could view balancedness as EF1 with respect to the number of allocated participants.}
Balancedness can be especially desirable when allocating players to sports teams, as each team may need to have a fixed number of players due to the rules of the sport.
Assuming that teams have additive and nonnegative values for participants, an allocation that is both EF1 and balanced always exists and can be found efficiently via a simple round-robin algorithm (see, e.g., \citealt[p.~7]{CaragiannisKuMo19}).
In fact, this algorithm forms the basis of \emph{draft} processes used in many sports leagues around the world.\footnote{Please refer to \url{http://wikipedia.org/wiki/Draft_(sports)}.}

While EF1 provides a strong fairness guarantee with respect to the teams' preferences, it overlooks the fact that the participating players may have preferences over the teams as well, for example, depending on their familiarity with the team managers or the proximity of their residence to the training grounds.
Clearly, ignoring the preferences of the participants may lead to a suboptimal allocation.
As an extreme case, if every team is indifferent between all participants, then swapping a pair of participants keeps the teams as happy as before and may make both of the swapped participants much happier.
In addition to our sports league example, two-sided preferences also occur in the allocation of students to thesis supervisors, employees to different branches of a restaurant chain, or volunteers to community service clubs.
Moreover, the participant preferences could signify the \emph{suitability} of the teams for the participants---for instance, the ``participants'' could represent tasks (such as household chores or papers to be reviewed), and the ``teams'' have varying levels of ability to perform the tasks.
As we shall discuss in \Cref{sec:relatedwork}, with few exceptions, the vast fair division literature deals exclusively with one-sided preferences.
Can we find an allocation that is fair to the teams and, at the same time, satisfies a stability condition with respect to the preferences of both sides?

\subsection{Our Results}

As is common in fair division, we assume that the teams have additive valuations over the participants, that is, a team's value for a set of participants is equal to the sum of its values for the individual participants in this set.\footnote{As \citet{CaragiannisKuMo19} noted, the assumption of additive valuations is ``indispensable'' in practical applications, because it permits simple elicitation of the valuations.}
Some of our results allow these values to be either positive or negative; this corresponds to the allocation of \emph{indivisible goods and chores} \citep{AzizCaIg22}.
For consistency of terminology, we will use the terms \emph{nonnegative-value participants} and \emph{nonpositive-value participants} instead of \emph{goods} and \emph{chores}, respectively.
We formally define the notions that we consider and outline some relationships between them in \Cref{sec:prelim}.

In \Cref{sec:swap-stability}, we focus on \emph{swap stability}, that is, the requirement that no swap between two participants makes at least one of the four involved parties better off and none of them worse off.
First, we observe that even with nonnegative-value participants, starting with an arbitrary EF1 allocation and letting participants make beneficial swaps may result in an allocation that violates EF1.
Despite this fact, for teams with arbitrary (positive or negative) values over participants, we present a polynomial-time algorithm that produces a balanced and swap stable allocation satisfying \emph{EF[1,1]}, a relaxation of EF1 where one participant may be removed from each of the envying team and the envied team.\footnote{When both positive and negative values are allowed, EF1 permits one participant to be removed from either the envying team or the envied team (but not both) \citep{AzizCaIg22}.}
Since EF[1,1] reduces to EF1 for nonnegative-value participants as well as for nonpositive-value participants, we obtain the same result for EF1 in each of these cases.
We then note two ways in which our arbitrary-value result cannot be improved: EF[1,1] cannot be strengthened to EF1, and we cannot simultaneously attain \emph{individual stability}---the condition that no deviation of a participant to another team makes the participant better off and neither of the involved teams worse off.
Nevertheless, we show that if we give up balancedness, both of these improvements become possible: an allocation that satisfies EF1, swap stability, and individual stability exists and can be found efficiently.

\renewcommand{\arraystretch}{1.2}
\begin{table}[!t]
\centering
    \begin{tabular}{| c | c |}
    \hline
     \textbf{Properties} & \textbf{Existence} \\ \hline \hline
     EF[1,1] + balanced + swap stable & Yes (Thm.~\ref{thm:balanced})  \\ \hline
     EF1 + balanced & No (Prop.~\ref{prop:balanced-EF1})  \\ \hline
     balanced + individually stable & No (Prop.~\ref{prop:balanced-IS})   \\ \hline
     EF1 + swap stable + individually stable & Yes (Thm.~\ref{thm:individual-stable})   \\ \hline \hline
     EF1 + PO (nonnegative-value participants) & Yes (Thm.~\ref{thm:MNW}) \\ \hline
     EF1 + PO + team-PO (two teams) & Yes (Thm.~\ref{thm:gaw})   \\ \hline
     EF1 + participant-PO & No (Prop.~\ref{prop:EF1-participant-PO}) \\ \hline \hline
     EF1 + justified EF & No (Prop.~\ref{prop:justifiedEF-counterexample}) \\ \hline
    \end{tabular}
    \caption{Summary of our results on whether each combination of properties can always be satisfied simultaneously, with the corresponding theorem or proposition number.}
    \label{table:summary}
\end{table}

Next, in \Cref{sec:PO}, we consider the notion of \emph{Pareto optimality (PO)}---no allocation can make a party (i.e., either participant or team) better off without making another party worse off---which is stronger than both swap stability and individual stability.
We prove that deciding whether an allocation is PO or not is coNP-complete even for two teams with identical valuations, nonnegative-value participants, and a balanced allocation.
On the other hand, for two teams with arbitrary valuations, we show that an extension of the \emph{generalized adjusted winner procedure} of \citet{AzizCaIg22} computes an EF1 and PO allocation in polynomial time.
For any number of teams and nonnegative-value participants, we observe that an EF1 and PO allocation always exists.
Moreover, we demonstrate that such an allocation can be found efficiently in two special cases: (i) the teams have binary valuations over the participants, and (ii) there are three teams with identical valuations, and each participant has a favorite team and is indifferent between the other two teams.
We also provide a pseudopolynomial-time algorithm when the number of teams is constant.

Finally, in \Cref{sec:justifiedEF}, we examine \emph{justified envy-freeness}, a stability notion from the two-sided matching literature: participant~$p_i$ is said to have justified envy toward another participant~$p_j$ assigned to team~$k$ if $k$ prefers $p_i$ to $p_j$ and $p_i$ prefers $k$ to her assigned team.
Perhaps surprisingly, we show that an EF1 and justified envy-free allocation may not exist, even for two teams and nonnegative-value participants who all prefer the same team.
We then prove that deciding whether such an allocation exists is NP-complete even for nonnegative-value participants who have strict preferences over the teams.
On the other hand, if one adds the condition that there are two teams, we show that the problem becomes polynomial-time solvable.

Our (non-)existence results are summarized in \Cref{table:summary}.
In \Cref{app:quota}, we demonstrate how some of our notions and results can be extended to accommodate quota constraints.

\subsection{Related Work}
\label{sec:relatedwork}

Even though fair division has given rise to a sizable body of work \citep{BramsTa96,Moulin03,Moulin19}, the vast majority of the literature assumes one-sided preferences---in our terminology, the teams have preferences over the participants, but not vice versa.
A small number of recent papers have combined fairness concepts with two-sided preferences.
\citet{FreemanMiSh21} considered many-to-many matching and proposed the notion of \emph{double envy-freeness up to one match (DEF1)}, which requires EF1 to hold for both sides simultaneously.
Note that in our many-to-one setting, DEF1 is meaningless on the participant side because it is always trivially satisfied.
\citet{GollapudiKoPl20} studied many-to-many matching in a dynamic setting; their positive results primarily hold for symmetric binary valuations, which are much more restrictive than the valuations that we allow.
\citet{PatroBiGa20} investigated fairness in two-sided platforms between producers and customers, but assumed that producers are indifferent between customers.
We emphasize that none of these papers addressed a model that suitably captures our motivating examples such as the allocation of sports players to teams or volunteers to community service clubs.

While most fair division papers assume that the items (in our terminology, participants) are goods and some assume that they are chores, a recent line of work has relaxed these assumptions by allowing items to be either goods or chores, with this evaluation possibly varying across agents (in our terminology, teams) \citep{AleksandrovWa20,AzizRe20,BhaskarSrVa21,KulkarniMeTa21,AzizCaIg22,HosseiniSiVa23,BercziBeBo24}.
\citet{BogomolnaiaMoSa17} coined the term \emph{mixed manna} to describe such division problems.
\citet{AzizCaIg22} showed that an EF1 allocation always exists and can be found efficiently, and that the same is true for an EF1 and PO allocation when there are two agents.
\citet{BhaskarSrVa21} provided a polynomial-time algorithm for computing an EF1 allocation when valuations are not necessarily additive but each agent can partition the items into those that always yield nonnegative marginal utility and those that always yield nonpositive marginal utility.
\citet{BercziBeBo24} established the existence of an EF1 allocation for arbitrary set valuations in the case of two agents.

Finally, even though our setting can be seen as a many-to-one matching with two-sided preferences, to the best of our knowledge, the matching literature has not considered fairness among teams, whether using EF1 or other fairness notions.
By contrast, justified envy is commonly investigated in two-sided matching.
Indeed, it forms the basis of the stability notion in one-to-one matching famously studied by \citet{GaleSh62}.
Justified envy-freeness is also often considered in many-to-one matching \citep{AbdulkadirogluSo03,FragiadakisIwTr15,KamadaKo17,WuRo18,AbdulkadirogluChPa20,Yokoi20}.
In particular, in many-to-one matching with upper quotas (also known as the \emph{Hospitals/Residents problem}), it is viewed as a relaxation of stability: while stability disallows the existence of a participant who wants to claim a team's vacant seat, justified envy-freeness tolerates the existence of such a participant.
Thus, a justified envy-free matching may exist even when a stable matching does not.

\section{Preliminaries}
\label{sec:prelim}

For each positive integer $z$, let $[z] \coloneqq \{1,\dots,z\}$.
Let $T = [n]$ be the set of teams and $P = \{p_1,\dots,p_m\}$ the set of participants; we sometimes refer to either a team or a participant as a \emph{party}.\footnote{While we formulate our model in the language of team allocation, as we discussed in \Cref{sec:intro}, our model can be applied to a wide range of scenarios.}
Each participant $p\in P$ has a weak transitive preference $\succsim_p$ over the teams, with $\succ_p$ and $\sim_p$ denoting the strict and equivalence part of $\succsim_p$, respectively.\footnote{All notions considered in this paper take into account only the participants' ordinal preferences, so we do not assume cardinal utilities for the participants.}
The \emph{rank} of a team $i$ for a participant $p$ is defined as $1$ plus the number of teams $j$ such that $j\succ_p i$.
Each team $i\in T$ has a valuation function (or utility function) $v_i\colon 2^P\to\mathbb{R}$ over subsets of participants.
We assume that the valuations are additive, i.e., $v_i(P') = \sum_{p\in P'}v_i(\{p\})$ for all $i\in T$ and $P'\subseteq P$.
For convenience, we write $v_i(p)$ instead of $v_i(\{p\})$.
An \emph{instance} consists of the teams and participants, as well as the valuations and preferences of both sides.
Sometimes we will consider the setting of \emph{nonnegative-value participants} (resp., \emph{nonpositive-value participants}), which means that $v_i(p) \ge 0$ (resp., $v_i(p) \le 0$) for all $i\in T$ and $p\in P$.

An \emph{allocation} $A=(A_1,A_2,\ldots,A_n)$ is an ordered partition of $P$ into $n$ parts, where the part~$A_i$ is assigned to team~$i$.
We will investigate several fairness and stability notions for allocations.
A basic fairness consideration on the team side is (almost) envy-freeness.

\begin{definition}
\label{def:EF1}
An allocation $A$ is said to satisfy
\begin{itemize}
\item \emph{EF1} if for all distinct $i,j\in T$, it holds that $v_i(A_i\setminus X)\ge v_i(A_j\setminus Y)$ for some $X\subseteq A_i$ and $Y\subseteq A_j$ with $|X\cup Y|\le 1$;
\item \emph{EF[1,1]} if for all distinct $i,j\in T$, it holds that $v_i(A_i\setminus X)\ge v_i(A_j\setminus Y)$ for some $X\subseteq A_i$ and $Y\subseteq A_j$ with $|X|,|Y|\le 1$.
\end{itemize}
\end{definition}

EF1 was first studied for nonnegative-value participants by \citet{LiptonMaMo04} (though the term itself was coined by \citet{Budish11}) and subsequently extended to arbitrary-value participants by \citet{AzizCaIg22}, 
while EF[1,1] was recently introduced by \citet{ShoshanSeHa23}.
It follows immediately from the definition that EF1 implies EF[1,1].
Moreover, if all participants yield nonnegative value, there is no reason to remove a participant from $A_i$, so EF1 and EF[1,1] coincide in this case; an analogous statement holds for nonpositive-value participants with $A_i$ replaced by $A_j$.

Our next criterion is balancedness, which requires the participants to be distributed as equally among the teams as possible.

\begin{definition}
An allocation $A$ is said to be \emph{balanced} if $\big||A_i|-|A_j|\big|\le 1$ for all $i,j\in T$.
\end{definition}

Observe that if there exists a constant $c\ne 0$ such that $v_i(p) = c$ for all $i\in T$ and $p\in P$, then both EF1 and EF[1,1] coincide with balancedness.

We now define stability concepts, several of which take into account the preferences of both sides.
We say that a party is \emph{better off} (resp., \emph{worse off}) if it receives a better (resp., worse) outcome with respect to its valuation function (for a team) or preference (for a participant).

\begin{definition}
Given an allocation $A$, a swap between participants $p\in A_i$ and $p'\in A_j$ (for some $i,j\in T$) is a \emph{beneficial swap} if it makes at least one of the four involved parties better off and none of them worse off. 
A deviation of a participant~$p$ to another team is a \emph{beneficial deviation} if it makes $p$ better off and neither of the teams involved worse off.

An allocation $A$ is said to be \emph{swap stable} if it does not admit a beneficial swap.\footnote{One could define a stronger version of swap stability, where a swap is beneficial if it makes at least one of the two involved participants better off and neither of them worse off (without taking into account the team preferences). 
However, this notion is clearly incompatible with EF1 or EF[1,1].
For example, this is the case in the instance with $n = 2$ teams and $m = 4$ participants, where $v_1(p_1)=v_1(p_2)=v_2(p_3)=v_2(p_4) = 1$, $v_1(p_3) = v_1(p_4) = v_2(p_1) = v_2(p_2) = 0$, $2\succ_{p_j}1$ for $j\in\{1,2\}$, and $1\succ_{p_j}2$ for $j\in\{3,4\}$.}
It is said to be \emph{individually stable} if it does not admit a beneficial deviation.\footnote{This is analogous to the notion of \emph{contractual individual stability} in hedonic games \citep{AzizSa16}. 
If we only require that the deviation does not make the participant's new team worse off (as in \emph{individual stability} in hedonic games), then supposing that all participants yield nonnegative value, the only stable allocations are the allocations in which every participant is assigned to one of her most preferred teams.}
\end{definition}

\begin{definition}
An allocation $A$ is said to be \emph{Pareto dominated} by another allocation $A'$ if no party is worse off in $A'$ than in $A$ and at least one party is better off; in this case, $A'$ is a \emph{Pareto improvement} of $A$.
An allocation $A$ is \emph{Pareto optimal (PO)} if it is not Pareto dominated by any other allocation.

We define \emph{team-Pareto dominated}, \emph{team-Pareto optimal (team-PO)}, \emph{participant-Pareto dominated}, and \emph{participant-Pareto optimal (participant-PO)} similarly, with ``party'' replaced by ``team'' and ``participant'', respectively.
\end{definition}

Although PO clearly implies both swap stability and individual stability, it implies neither team-PO nor participant-PO.

\begin{proposition}
\label{prop:PO-team-PO}
PO does not necessarily imply team-PO.
\end{proposition}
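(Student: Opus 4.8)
The plan is to exhibit a small concrete instance together with a PO allocation that is nonetheless not team-PO. The conceptual point driving the construction is that a (full) Pareto improvement must keep \emph{every} party---teams and players alike---at least as well off, whereas a team-Pareto improvement is constrained only on the team side. Hence I want an allocation $A$ that admits a reallocation $A'$ making some team strictly better off and no team worse off (so $A$ fails team-PO), yet $A'$ strictly harms some player (so $A'$ is not a full Pareto improvement and $A$ can still be PO).

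First I would fix two teams and two players $p_1,p_2$, with additive valuations $v_1(p_1)=1$, $v_1(p_2)=2$ and $v_2(p_1)=v_2(p_2)=1$, so that team~$1$ strictly prefers $p_2$ to $p_1$ while team~$2$ is indifferent. I would set the player preferences as $1\succ_{p_1}2$ and $2\succ_{p_2}1$, so that each player is assigned to its favorite team under the allocation $A=(\{p_1\},\{p_2\})$.

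Next I would verify the two required facts. For the failure of team-PO, I would take $A'=(\{p_2\},\{p_1\})$: team~$1$'s value rises from $1$ to $2$ while team~$2$'s value stays at $1$, so $A'$ team-Pareto dominates $A$. For PO, I would enumerate the (at most four) allocations of two players to two teams and check that none Pareto dominates $A$. The swap $A'$ moves $p_1$ away from its top team and hence makes $p_1$ strictly worse off, while any allocation giving a team the empty bundle drops that team below its value of $1$ in $A$. Thus no allocation improves some party without harming another, and $A$ is PO.

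The construction itself is the main content; there is no genuine technical obstacle beyond it. The only point requiring care is that PO must be checked against \emph{all} alternative allocations rather than just the single team-improving swap, and that the blocking of $A'$ is achieved on the player side precisely because the unique value-increasing move for the teams forces an unwanted reassignment of $p_1$. With only two players this case analysis is finite and immediate.
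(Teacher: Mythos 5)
Your proof is correct and takes essentially the same approach as the paper's: a two-team, two-player instance in which each player sits at her unique favorite team (so any reallocation strictly harms a moved player, making $A$ PO), while the team side still admits an improvement blocked only by a player. The paper realizes the team-Pareto improvement by consolidating both players on team~$1$ (using a player of value $0$ to team~$2$) rather than by your swap, but this difference is cosmetic; both verifications go through.
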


\begin{proof}
Consider the following instance with $n = m = 2$:
\begin{itemize}
\item $v_1(p_1) = v_1(p_2) = v_2(p_1) = 1$ and $v_2(p_2) = 0$;
\item $1\succ_{p_1} 2$ and $2\succ_{p_2} 1$.
\end{itemize}
The allocation $A = \big(\{p_1\}, \{p_2\}\big)$ 
is team-Pareto dominated by the allocation $A'=\big(\{p_1,p_2\},\emptyset\big)$, 
so $A$ is not team-PO.
However, $A$ is PO since each participant is already assigned to her unique favorite team.
\end{proof}

\begin{proposition}
\label{prop:PO-participant-PO}
PO does not necessarily imply participant-PO.
\end{proposition}

\begin{proof}
Consider the following instance with $n = m = 2$:
\begin{itemize}
\item $v_1(p_1) = v_1(p_2) = v_2(p_1) = v_2(p_2) = 1$;
\item $1\succ_{p_1} 2$ and $1\succ_{p_2} 2$.
\end{itemize}
The allocation $A = \big(\{p_1\}, \{p_2\}\big)$  
is participant-Pareto dominated by the allocation $A'=\big(\{p_1,p_2\},\emptyset\big)$,  
so $A$ is not participant-PO.
However, one can check that $A$ is PO.
\end{proof}

On the other hand, PO is implied by the combination of team-PO and participant-PO.

\begin{proposition}
Team-PO and participant-PO together imply PO.
However, either team-PO or participant-PO alone does not necessarily imply PO.
\end{proposition}

\begin{proof}
Consider an allocation $A$ that is both team-PO and participant-PO, and assume for contradiction that it is not PO.
Then, there is an allocation $A'$ that is a Pareto improvement of $A$.
Every party is better off in $A'$ compared to in $A$, and at least one party is strictly better off.
If a team is strictly better off, then $A'$ team-Pareto dominates $A$, so $A$ is not team-PO.
If a participant is strictly better off, then $A'$ participant-Pareto dominates $A$, so $A$ is not participant-PO.
In either case, we arrive at a contradiction.

To see that team-PO does not imply PO, consider the following instance with $n = m = 2$:
\begin{itemize}
\item $v_1(p_1) = v_1(p_2) = v_2(p_1) = v_2(p_2) = 1$;
\item $1\succ_{p_1} 2$ and $2\succ_{p_2} 1$.
\end{itemize}
The allocation $A = \big(\{p_2\}, \{p_1\}\big)$  
is Pareto dominated by the allocation $A'=\big(\{p_1\},\{p_2\}\big)$,  
so $A$ is not PO.
However, one can check that $A$ is team-PO.

To see that participant-PO does not imply PO, consider the following instance with $n = m = 2$:
\begin{itemize}
\item $v_1(p_1) = v_2(p_2) = 1$ and $v_2(p_1) = v_1(p_2) = 0$;
\item $1\sim_{p_1} 2$ and $2\sim_{p_2} 1$.
\end{itemize}
The allocation $A = \big(\{p_2\}, \{p_1\}\big)$  
is Pareto dominated by the allocation $A'=\big(\{p_1\},\{p_2\}\big)$,  
so $A$ is not PO.
However, $A$ is participant-PO since both participants are indifferent between both teams.
\end{proof}

Finally, we define the concept of justified envy.

\begin{definition}
Given an allocation $A$, a participant~$p\in A_i$ is said to have \emph{justified envy} toward a participant~$p'\in A_j$ if $j\succ_p i$ and $v_j(p) > v_j(p')$.
An allocation $A$ is \emph{justified envy-free (justified EF)} if no participant has justified envy toward another participant.
\end{definition}

One could consider a weaker version of justified envy where the envy is considered justified even if the team is indifferent (i.e., $j\succ_p i$ and $v_j(p) \ge v_j(p')$); this leads to a stronger version of justified EF. 
While our non-existence result (Proposition~\ref{prop:justifiedEF-counterexample}) automatically carries over to this version, the alternative definition appears to be too strong in the sense that even in a simple instance with two teams having value $1$ for each of two participants, if both participants prefer the same team, then no EF1 and (alternative) justified EF allocation exists.
In particular, our existence result (\Cref{thm:justifiedEF-algo-two-iden}) does not hold for this version.\footnote{On the other hand, our algorithm in \Cref{thm:justifiedEF-algo-two} can be adapted to this version of justified EF.}
Moreover, our definition of justified envy corresponds to the notion of \emph{weak stability} from the stable matching literature, whereas the alternative definition does not correspond to other known stability notions such as \emph{strong stability} and \emph{super-stability} (see, e.g., the book by \citet[p.~27]{Manlove13} for details).

We end this section by showing that there are no implication relations between justified EF and the three notions PO, swap stability, and individual stability.
\begin{itemize}
\item \emph{Justified EF does not imply individual stability.}
Consider an instance with $n=2$, $m=1$, $v_1(p_1) = 1$, $v_2(p_1) = 0$, and $1\succ_{p_1}2$.
The allocation $A = (\emptyset, \{p_1\})$ is justified EF, but it is not individually stable. 
This also means that justified EF does not imply PO.
\item \emph{Justified EF does not imply swap stability.}
Consider an instance with $n = m = 2$, $v_1(p_1) = v_2(p_2) = 1$, $v_1(p_2) = v_2(p_1) = 0$, $1\sim_{p_1}2$, and $1\sim_{p_2}2$.
The allocation $A = (\{p_2\}, \{p_1\})$ is justified EF, but it is not swap stable.
\item \emph{PO does not imply justified EF.}
This follows from the fact that an EF1 and PO allocation exists for every instance with nonnegative-value participants (\Cref{thm:MNW}) but there is an instance with nonnegative-value participants in which no EF1 and justified EF allocation exists (Proposition~\ref{prop:justifiedEF-counterexample}).
This also means that neither swap stability nor individual stability implies justified EF.
\end{itemize}

\section{Swap Stability}
\label{sec:swap-stability}

In this section, we focus on swap stability.
A natural idea for obtaining an EF1 and swap stable allocation is to start with an arbitrary EF1 allocation and let participants swap as long as a beneficial swap exists.
Note that determining whether beneficial swaps exist (and, if so, finding such a swap) can be done in polynomial time since we can simply check all pairs of participants.
However, as can be seen in the following example, this approach does not always result in an EF1 allocation, even for nonnegative-value participants.

\begin{example}\label{ex:swapBreakEF1}
Consider the following instance with $n = 3$ and $m = 6$:
\begin{itemize}
\item $v_i(p_j) = 0$ for $i\in[2]$ and $j\in [6]$;
\item $v_3(p_1) = v_3(p_2) = 1$ and $v_3(p_3) = v_3(p_4) = v_3(p_5) = v_3(p_6) = 0$;
\item each participant has a unique favorite team and is indifferent between the other two teams: $p_1$ and $p_2$ prefer team~$1$, $p_4$ and $p_5$ prefer team~$2$, and $p_3$ and $p_6$ prefer team~$3$.
\end{itemize}
The allocation $A = \big(\{p_1,p_4\},\{p_2,p_5\},\{p_3,p_6\}\big)$  
is EF1.
The swap between $p_2$ and $p_4$ is the unique beneficial swap; let $A'$ be the allocation after this swap.
The allocation $A'$ is swap stable, but it is not EF1 because team~$3$ envies team~$1$ by more than one participant.
\end{example}

In spite of this example, we show that an EF[1,1] and swap stable allocation that is moreover balanced always exists and can be found efficiently.

\begin{theorem}\label{thm:balanced}
For any instance, a balanced allocation that satisfies EF[1,1] and swap stability exists and can be computed in polynomial time. 
\end{theorem}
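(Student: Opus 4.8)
The plan is to analyze Algorithm~\ref{alg:swap-stable-rr} directly, taking $f$ to be the round-robin assignment $f(q)=((q-1)\bmod n)+1$, so that position $q$ belongs to the team picking $q$-th in round-robin order. Write $\lambda_q=v_{f(q)}(\mu_q)$ for the value position $q$ receives. Since every matching in $E^*$ agrees with $\mu$ on these per-position values, each team's value for its own bundle in the returned allocation $A$ equals $\sum_{q:f(q)=i}\lambda_q$, independently of the final matching $\mu^*$. Balancedness is then immediate: $f$ spreads the $m$ positions over the $n$ teams so that any two teams differ by at most one position, and $\mu^*$ is a perfect matching (here $|Q|=|P|=m$), so $|A_i|=|f^{-1}(i)|$ differs from $|A_j|$ by at most one.

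The heart of the argument is a single structural property, which I would state and prove first: for any perfect matching $\mu^*$ in $E^*$ and any positions $q<q'$, one has $v_{f(q)}(\mu^*_{q'})\le\lambda_q$. To see this, suppose not; swapping the players at positions $q$ and $q'$ in $\mu^*$ produces a perfect matching whose value at position $q$ exceeds $\lambda_q$, while its values at positions $1,\dots,q-1$ are unchanged (they equal $\lambda_1,\dots,\lambda_{q-1}$ because $\mu^*\in E^*$). This matching is lexicographically larger than $\mu$ at coordinate $q$, contradicting the lexicographic maximality enforced in Line~\ref{line:perfect-matching}. I want to stress that this robustness is exactly what the look-ahead in the lexicographic choice buys us and is the main obstacle: the naive round-robin greedy property (each team's pick beats every still-available player) is \emph{not} preserved when players are reshuffled within $E^*$, and one can indeed reshuffle an $E^*$-matching so that a third team inherits extra high-value players and EF[1,1] breaks; only the lexicographic optimality survives arbitrary reassignment.

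Given this property, EF[1,1] for $A$ follows by the standard round-robin pairing argument applied to $\mu^*$. Fix teams $i\ne j$ with positions $q^i_1<q^i_2<\dots$ and $q^j_1<q^j_2<\dots$. If $i<j$ then $q^i_k<q^j_k$ in every round, so pairing $k$-th picks gives $v_i(\mu^*_{q^i_k})=\lambda_{q^i_k}\ge v_i(\mu^*_{q^j_k})$; summing (and, if $i$ holds one more position than $j$, possibly removing that surplus pick from $A_i$ when its value is negative) yields $v_i(A_i\setminus X)\ge v_i(A_j)$ for some $|X|\le1$. If $i>j$ I would pair $q^i_k$ with $q^j_{k+1}$, which still satisfies $q^i_k<q^j_{k+1}$; team $j$'s first pick, and (when the two teams have equally many positions) team $i$'s last pick, are left unpaired, which is precisely why one player may have to be removed from \emph{each} side, giving EF[1,1] rather than EF1. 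For nonnegative-value (resp.\ nonpositive-value) players the discarded pick is harmless, recovering EF1 as already noted in the preliminaries.

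Finally I would deduce swap stability from the two optimality conditions together. Consider a candidate beneficial swap of $p\in A_i$ (at position $q_p$) and $q\in A_j$ (at position $q_q$) in which no party is worse off. If some team were strictly better off---say $v_i(q)>v_i(p)=\lambda_{q_p}$---then performing the swap raises the value at $q_p$ above $\lambda_{q_p}$ while leaving all earlier positions' values unchanged (the only other altered position $q_q$ either follows $q_p$, or precedes it but keeps its value, since team $j$ being neither better nor worse forces $v_j(p)=\lambda_{q_q}$), again contradicting the lexicographic maximality of $\mu$. Hence no team strictly gains, so both teams are indifferent, i.e.\ $v_i(q)=\lambda_{q_p}$ and $v_j(p)=\lambda_{q_q}$; but then the swapped matching also lies in $E^*$, and since some player is made strictly better off with the other not worse off, it strictly decreases the total player rank, contradicting the minimality enforced in Line~\ref{line:matching-special}. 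Thus no beneficial swap exists. Polynomial running time follows because a lexicographically maximal matching and a minimum-rank perfect matching in $E^*$ are each computable by standard weighted bipartite matching techniques.
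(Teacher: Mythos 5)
Your proof is correct and follows the same skeleton as the paper's: balancedness from the round-robin position structure, EF[1,1] via pairing the $\ell$-th pick of the envying team against an appropriate pick of the envied team, and swap stability by first using lexicographic maximality of $\mu$ to force both involved teams to be exactly indifferent and then using the rank-minimality of $\mu^*$ to rule out any strict player improvement. The genuine difference is where the key inequality lives, and it is to your credit. The paper establishes $v_i(\mu_{n(\ell-1)+i}) \ge v_i(\mu_{n\ell+j})$ for the matching $\mu$ and then transfers sums to $\mu^*$ via equalities such as $\sum_{\ell} v_i(\mu_{n(\ell-1)+j}) = \sum_{\ell} v_i(\mu^*_{n(\ell-1)+j})$; but membership in $E^*$ only preserves each position's value for its \emph{own} team, so for $n\ge 3$ a cyclic reshuffle within $E^*$ can change which players occupy team $j$'s positions and hence change team $i$'s values for them---the cross-team transfer is not literally implied. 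Your structural lemma, $v_{f(q)}(\mu^*_{q'})\le\lambda_q$ for all $q<q'$ and for \emph{every} perfect matching in $E^*$, proved by swapping inside $\mu^*$ itself (legitimate because every $E^*$-edge at position $r$ has weight $\lambda_r$, so the swapped matching agrees with $\mu$ on all earlier coordinates) sidesteps this transfer entirely; your remark that only lexicographic optimality survives reshuffling within $E^*$ is exactly the right observation, and on this point your write-up is more careful than the paper's. One small nit in your swap-stability step: your parenthetical assumes the team at the other altered position is \emph{exactly} indifferent, which is not yet known if both teams strictly gain and that position is the earlier one; the clean fix is to run your exchange argument at the earlier of the two altered positions (if its team strictly gains, lexicographic maximality is contradicted there; otherwise its value is preserved and the contradiction occurs at the later position). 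Everything else, including the polynomial-time claim via a sequence of weighted bipartite matching computations, matches the paper.
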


Since EF[1,1] reduces to EF1 for nonnegative-value participants as well as for nonpositive-value participants, \Cref{thm:balanced} implies the following corollary.

\begin{corollary}
\label{cor:balanced}
For any nonnegative-value participant instance, a balanced allocation that satisfies EF1 and swap stability exists and can be computed in polynomial time. 
The same holds for any nonpositive-value participant instance.
\end{corollary}

\begin{algorithm*}[!tb]
  \caption{For computing an EF[1,1], swap stable, and balanced allocation}\label{alg:swap-stable-rr}
  Construct a complete bipartite graph $G=(Q,P; E)$ with weight function $w\colon E\to\mathbb{R}$ where $Q=[m]$ and $w(q,p)=v_{f(q)}(p)$ for each $(q,p)\in Q\times P$; \quad \emph{// The set $Q$ mimics the available positions within the teams}\\ 
  Compute a perfect matching $\mu\subseteq Q\times P$ such that the weight of the edge adjacent to vertex $1\in Q$ is as large as possible, and subject to this condition, the weight of the edge adjacent to vertex $2\in Q$ is as large as possible, and so on until vertex $m\in Q$; \quad \quad\quad\quad\quad  \emph{// Simulate round-robin with only the participants' values being assigned to teams}\\ \label{line:perfect-matching}
  Let $E^*=\{(q,p)\in Q\times P \mid w(q,p)=w(q,\mu_q)\}$; \quad \emph{// Keep edges that are as good for the teams as those in $\mu$}\\
  Compute a perfect matching $\mu^*$ in $G^*=(Q,P; E^*)$ such that the sum over all participants $p\in P$ of the rank of team~$f(\mu^*_p)$ for participant~$p$ is minimized\;\label{line:matching-special}
  Return the allocation $A$ such that $p$ is allocated to team~$f(q)$ for each $(q,p)\in \mu^*$\;
\end{algorithm*}

Our algorithm for \Cref{thm:balanced} proceeds in a round-robin manner.
However, instead of assigning a participant to a team in each turn as is usually done, we only assign a participant's \emph{value} to the team; this ensures that more possibilities are available in later turns.
Then, among the allocations that satisfy the determined values for teams, we compute an allocation that minimizes the sum of the participants' ranks for the teams.
Formally, the algorithm is described as Algorithm~\ref{alg:swap-stable-rr}.
For each positive integer~$q$, we denote by $f(q)$ the unique integer in~$[n]$ such that $f(q)\equiv q\pmod{n}$.
For a matching~$\mu$ with $(q,p)\in \mu$, we define the notation $\mu_q$ and $\mu_p$ so that $\mu_q = p$ and $\mu_p = q$.
Note that each $q\in Q$ corresponds to a copy of team~$f(q)$.
For example, given an instance with $n=3$ and $m=7$, we construct a bipartite graph $G$ with $|Q| = |P| = 7$.
The vertices $1,4,7\in Q$ correspond to copies of team~$1$, the vertices $2,5\in Q$ correspond to copies of team~$2$, and the vertices $3,6\in Q$ correspond to copies of team~$3$.

It is clear that the allocation produced by \Cref{alg:swap-stable-rr} is balanced.
To establish \Cref{thm:balanced}, we prove the remaining properties of the algorithm, including its polynomial running time, in the following three lemmas.

\begin{lemma}
\label{lem:EF1-1}
The output allocation $A$ of Algorithm~\ref{alg:swap-stable-rr} is EF[1,1].
\end{lemma}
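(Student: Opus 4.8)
The plan is to analyze the round-robin structure of Algorithm~\ref{alg:swap-stable-rr} and show that the EF[1,1] guarantee follows from the way the perfect matching $\mu$ is constructed greedily across the positions $1, 2, \ldots, m$ in $Q$. The key observation is that the vertices of $Q$ are grouped into copies of teams modulo $n$: vertices $i, i+n, i+2n, \ldots$ all correspond to team~$i$, and these are processed in $n$ consecutive "rounds." In each round, every team receives exactly one player (until players run out), and the greedy selection ensures each team's pick in a given round is at least as valuable to it as any player picked by a later-moving team \emph{in that same round}. This is the standard round-robin EF1 argument, but adapted to the fact that we optimize the value assigned to each position in order.

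**First I would** set up notation for the rounds. Write $A_i = \{a_i^1, a_i^2, \ldots\}$ where $a_i^r$ is the player assigned to the copy of team~$i$ that is processed in round~$r$ (i.e., the vertex $(r-1)n + i \in Q$). Because $\mu$ maximizes the weight on vertex~$1$, then vertex~$2$, and so on lexicographically, I would argue that when team~$i$ makes its round-$r$ selection, the player it receives has value to team~$i$ at least that of any player selected by a team processed later within the same round or in subsequent rounds — more precisely, $v_i(a_i^r) \ge v_i(a_j^r)$ whenever team~$j$ is processed after team~$i$ in round~$r$, and $v_i(a_i^r) \ge v_i(a_j^{r+1})$ when team~$j$ precedes team~$i$ (so $j$'s next pick is one round ahead). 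The subtlety here, compared to ordinary round-robin, is that the matching assigns values, not specific players, but since $w(q,p) = v_{f(q)}(p)$ depends only on the team $f(q)$, the lexicographic-maximum matching behaves exactly like greedy round-robin from each team's own valuation perspective.

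**The heart of the argument** is then a pairing between the rounds of the envying team~$i$ and the envied team~$j$. When comparing $A_i$ to $A_j$, I would pair team~$i$'s round-$r$ player with team~$j$'s round-$(r)$ or round-$(r{+}1)$ player depending on whether $j$ comes before or after $i$ in the processing order. If $j$ precedes $i$, then team~$j$ got a one-turn head start, so after removing $j$'s very first pick (the single player in $Y$), each remaining round-$r$ player of $j$ is dominated in value by $i$'s round-$(r{-}1)$ pick. If $j$ comes after $i$, I may need to remove $i$'s last pick (the player in $X$) to realign the rounds. Summing the per-round inequalities via additivity yields $v_i(A_i \setminus X) \ge v_i(A_j \setminus Y)$ with $|X|, |Y| \le 1$, which is exactly EF[1,1].

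**The main obstacle** I anticipate is handling the interaction between positive and negative values cleanly. In classical EF1 round-robin with nonnegative values, removing one \emph{high-value} player from the envied team suffices; but with arbitrary-sign values, the envied team $A_j$ may contain negatively-valued players (from $i$'s perspective) that actually \emph{help} team~$i$'s comparison, while $A_i$ may contain negatively-valued players that hurt it. This is precisely why we get EF[1,1] rather than EF1: we may need to simultaneously remove a bad player from $A_i$ (one with very negative $v_i$-value, put into $X$) and a good player from $A_j$ (one with high $v_i$-value, put into $Y$). I would therefore be careful to define $X$ as $i$'s round-$r$ pick at the round boundary where the pairing shifts, and $Y$ as $j$'s head-start pick, and verify the round-by-round value domination holds for both signs. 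Verifying that the greedy lexicographic matching indeed gives the per-round domination $v_i(a_i^r) \ge v_i(a_j^{r'})$ for the correctly-paired $r'$ — and that this is genuinely guaranteed by the optimization on line~\ref{line:perfect-matching} rather than merely plausible — is the step I would treat most carefully, since the value-only assignment makes it slightly less transparent than picking actual players.
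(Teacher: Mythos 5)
Your proposal is correct and follows essentially the same route as the paper's proof: the same offset pairing of team~$i$'s round-$\ell$ pick against team~$j$'s round-$(\ell{+}1)$ pick (with the same-round pairing when $i$ precedes $j$), the same exchange argument deriving per-round domination from the lexicographic optimality of $\mu$ in Algorithm~\ref{alg:swap-stable-rr}, and the same final choice of removed players ($X$ being $i$'s last pick, $Y$ being $j$'s first pick, removed simultaneously in the equal-size case). Your observation that edge weights depend only on $f(q)$, so that any perfect matching in $G^*$ gives each position the same team-side value as $\mu$, is exactly how the paper transfers the inequalities from $\mu$ to $\mu^*$.
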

\begin{proof}
By definition of EF[1,1], we need to show that, for all distinct $i,j\in T$, we have $v_i(A_i\setminus X)\ge v_i(A_j\setminus Y)$ for some $X\subseteq A_i$ and $Y\subseteq A_j$ with $|X|,|Y|\le 1$.
The statement holds trivially if $m\le n$ since each team receives at most one participant, so assume that $m > n$.
Fix arbitrary distinct $i,j\in T$.
We consider three cases based on the sizes of $A_i$ and $A_j$.
(In what follows, $\mu$ and $\mu^*$ are as defined in Algorithm~\ref{alg:swap-stable-rr}.)

First, suppose that $|A_i|=|A_j|$. Let $k \coloneqq|A_i| \ge 1$.
Then, 
\begin{align*}
    v_i(A_i\setminus\{\mu^*_{n(k-1)+i}\})
    &=   \sum_{\ell=1}^{k-1} v_i(\mu^*_{n(\ell-1)+i})\\
    &=   \sum_{\ell=1}^{k-1} v_i(\mu_{n(\ell-1)+i})\\
    &\ge   \sum_{\ell=1}^{k-1} v_i(\mu_{n\ell+j})
    = \sum_{\ell=2}^{k} v_i(\mu_{n(\ell-1)+j})
    = \sum_{\ell=2}^{k} v_i(\mu^*_{n(\ell-1)+j})
    = v_i(A_j\setminus\{\mu^*_{j}\}),
\end{align*}
where $v_i(\mu_{n(\ell-1)+i}) \ge v_i(\mu_{n\ell + j})$ holds because otherwise the weight of the edge in $\mu$ adjacent to vertex $n(\ell-1)+i \in Q$ can be increased without decreasing the weights of the edges adjacent to vertices $1,2,\dots,n(\ell-1)+i-1 \in Q$, contradicting the definition of $\mu$.

Next, suppose that $|A_i| > |A_j|$; in particular, it must be that $i < j$.
Let $|A_i|=k$ and $|A_j|=k-1$.
Applying a similar argument as in the case $|A_i| = |A_j|$, we have
\begin{align*}
    v_i(A_i\setminus\{\mu^*_{n(k-1)+i}\})
    &=   \sum_{\ell=1}^{k-1} v_i(\mu^*_{n(\ell-1)+i})\\
    &=   \sum_{\ell=1}^{k-1} v_i(\mu_{n(\ell-1)+i})
    \ge \sum_{\ell=1}^{k-1} v_i(\mu_{n(\ell-1)+j})
    = \sum_{\ell=1}^{k-1} v_i(\mu^*_{n(\ell-1)+j})
    = v_i(A_j).
\end{align*}

Finally, suppose that $|A_i| < |A_j|$; in particular, it must be that $i > j$.
Let $|A_i|=k-1$ and $|A_j|=k$.
Applying a similar argument once more, we have
\begin{align*}
    v_i(A_i)
    &=   \sum_{\ell=1}^{k-1} v_i(\mu^*_{n(\ell-1)+i})\\
    &=   \sum_{\ell=1}^{k-1} v_i(\mu_{n(\ell-1)+i})\\
    &\ge \sum_{\ell=1}^{k-1} v_i(\mu_{n\ell+j})
    = \sum_{\ell=2}^{k} v_i(\mu_{n(\ell-1)+j})
    = \sum_{\ell=2}^{k} v_i(\mu^*_{n(\ell-1)+j})
    = v_i(A_j\setminus\{\mu^*_{j}\}).
\end{align*}

Hence, in all three cases, the allocation $A$ is EF[1,1].
\end{proof}

\begin{lemma}
\label{lem:swap-stable}
The output allocation $A$ of Algorithm~\ref{alg:swap-stable-rr} is swap stable.
\end{lemma}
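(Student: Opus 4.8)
The plan is to reason about the final matching $\mu^*$ rather than about $\mu$ directly. The crucial first observation is that, by construction of $E^*$, every perfect matching in $G^*$ assigns to each vertex $q\in Q$ a player $p$ with $w(q,p)=w(q,\mu_q)$; hence all perfect matchings of $G^*$ share the same weight vector $(w(1,\mu_1),\dots,w(m,\mu_m))$, which is exactly the lexicographically maximal weight vector over all perfect matchings of $G$. In particular $\mu^*$ is itself a lexicographically weight-maximal matching, and among all such matchings it minimizes the total player rank. These are the two levers used below, applied respectively to the team side and the player side of a putative swap.

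Suppose, for contradiction, that the output allocation $A$ admits a beneficial swap between $p\in A_i$ and $q\in A_j$. We may assume $i\ne j$, since otherwise the swap changes no party's outcome and is not beneficial. Let $a=\mu^*_p$ and $b=\mu^*_q$ be the corresponding vertices of $Q$, so that $f(a)=i$, $f(b)=j$, and $a\ne b$; assume without loss of generality that $a<b$. Let $\nu$ be the matching obtained from $\mu^*$ by reassigning $q$ to $a$ and $p$ to $b$ and leaving all other pairs unchanged; $\nu$ is precisely the matching realizing the swapped allocation.

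First I would rule out that any team strictly benefits. Since the swap makes neither team worse off, $v_i(q)\ge v_i(p)$ and $v_j(p)\ge v_j(q)$, i.e.\ the weight of $\nu$ weakly exceeds that of $\mu^*$ at both positions $a$ and $b$ and agrees elsewhere. As $a<b$, the weight vector of $\nu$ is lexicographically at least that of $\mu^*$; since $\mu^*$ is lexicographically weight-maximal, equality must hold position by position, forcing $v_i(q)=v_i(p)$ and $v_j(p)=v_j(q)$. Thus both teams are in fact indifferent to the swap, $\nu$ has the same weight vector as $\mu^*$, and in particular $\nu$ is again a perfect matching of $G^*$.

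It then remains to handle the case where the strict gain comes from a player. Recalling that more preferred teams have smaller rank, the hypotheses that neither player is worse off give $\operatorname{rank}_p(j)\le\operatorname{rank}_p(i)$ and $\operatorname{rank}_q(i)\le\operatorname{rank}_q(j)$, and the requirement that some party strictly benefits forces at least one of these inequalities to be strict. Since $\nu$ agrees with $\mu^*$ on every other player, $\nu$ has strictly smaller total player rank than $\mu^*$ while remaining a perfect matching of $G^*$, contradicting the rank-minimality of $\mu^*$. Hence no beneficial swap exists. The step I expect to be the crux is the clean separation of the two optimization criteria: the observation that $\mu^*$ inherits the lexicographic weight-optimality of $\mu$ is exactly what lets the team side be settled purely by the lexicographic comparison and the player side purely by rank-minimization, with the two never interfering.
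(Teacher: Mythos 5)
Your proof is correct and takes essentially the same route as the paper's: the team side is settled by the weight-optimality inherited from $\mu$ (the paper phrases this as improving $\mu$ by replacing it with the swapped matching $\mu^{**}$, which is your lexicographic-maximality argument in different clothing), after which the swapped matching remains a perfect matching of $G^*$ and the rank-minimality of $\mu^*$ settles the player side. Your explicit preliminary observation that all perfect matchings of $G^*$ share the lex-maximal weight vector is a clean packaging of what the paper uses implicitly, but it does not change the substance of the argument.
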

\begin{proof}
Let us consider a swap between participants $\mu^*_{q}$ and $\mu^*_{r}$, where $q,r\in Q$ with $q<r$.
Suppose that this swap is \emph{possibly} a beneficial swap, i.e., 
$v_{f(q)}(\mu^*_q)\le v_{f(q)}(\mu^*_r)$,
$v_{f(r)}(\mu^*_r)\le v_{f(r)}(\mu^*_q)$,
$f(q)\precsim_{\mu^*_q}f(r)$, and 
$f(r)\precsim_{\mu^*_r}f(q)$.
We will show that this swap cannot make any of the involved parties better off.
Denote by $\mu^{**}$ the matching that results from this swap.

If $v_{f(q)}(\mu^*_q) < v_{f(q)}(\mu^*_r)$, the matching $\mu$ can be improved by using $\mu^{**}$ instead, a contradiction.
So $v_{f(q)}(\mu^*_q) = v_{f(q)}(\mu^*_r)$.
Similarly, if $v_{f(r)}(\mu^*_r) < v_{f(r)}(\mu^*_q)$, then because $v_{f(q)}(\mu^*_q) = v_{f(q)}(\mu^*_r)$, the matching $\mu$ can again be improved by using $\mu^{**}$ instead, a contradiction.
So $v_{f(r)}(\mu^*_r) = v_{f(r)}(\mu^*_q)$.
Hence, the matching $\mu^{**}$ after the swap remains a feasible perfect matching in $G^*$.
As $\mu^*$ minimizes the sum of the participants' rank for teams among the perfect matchings in $G^*$, we get $f(q)\sim_{\mu^*_q}f(r)$ and $f(r)\sim_{\mu^*_r}f(q)$.
Therefore, the swap is not a beneficial swap, and the allocation $A$ is swap stable.
\end{proof}

\begin{lemma}
\label{lem:polynomial-time}
\Cref{alg:swap-stable-rr} can be implemented to run in polynomial time.
\end{lemma}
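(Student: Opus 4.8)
The plan is to verify that each line of \Cref{alg:swap-stable-rr} runs in time polynomial in $m$ and $n$, and since the only nontrivial steps are the two matching computations on lines~\ref{line:perfect-matching} and~\ref{line:matching-special}, I would concentrate on these. Constructing the complete bipartite graph $G$ together with its $m^2$ edge weights $w(q,p)=v_{f(q)}(p)$, forming $E^*$ by comparing $w(q,p)$ with $w(q,\mu_q)$ over all pairs, and reading off the final allocation from $\mu^*$ are each immediate, costing $O(m^2)$ operations; so these contribute nothing to the difficulty.

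The easier of the two matchings is line~\ref{line:matching-special}. First I would note that $G^*$ admits a perfect matching at all, since $(q,\mu_q)\in E^*$ for every $q\in Q$, so $\mu$ itself lies in $G^*$. Assigning to each edge $(q,p)\in E^*$ the integer cost equal to the rank of team~$f(q)$ for player~$p$ (a value in $\{1,\dots,n\}$), the objective of line~\ref{line:matching-special} becomes exactly the total cost of the matching. Thus $\mu^*$ is a minimum-cost perfect matching in a bipartite graph with at most $m^2$ edges and bounded integer costs, which is computed in polynomial time by any standard weighted bipartite matching routine (e.g., the Hungarian algorithm or a min-cost flow formulation).

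The hard part will be line~\ref{line:perfect-matching}, which asks for the perfect matching whose weight sequence $(w(1,\mu_1),\dots,w(m,\mu_m))$ is lexicographically maximum. It is tempting to proceed greedily---fix $\mu_1$ to a globally best player for vertex~$1$, then $\mu_2$ to a best remaining player for vertex~$2$, and so on---but this fails in the presence of ties: committing vertex~$1$ to a player that a later vertex strictly needs can lower a subsequent entry, even though completeness of $G$ guarantees that \emph{some} perfect matching always remains. Instead, I would build the lexicographically maximum weight sequence $(W_1,\dots,W_m)$ one coordinate at a time. Having fixed the constraints $\mu_{q'}\in S_{q'}\coloneqq\{p: w(q',p)=W_{q'}\}$ for all $q'<q$, I would determine $W_q$ by scanning the candidate values $w(q,p)$ in decreasing order and, for each, testing via a single (unweighted) bipartite matching computation whether there is a perfect matching of $G$ respecting every $S_{q'}$ with $q'<q$ and assigning the corresponding player to vertex~$q$; the first feasible value is $W_q$, after which I set $S_q$ accordingly. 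This uses $O(m)$ feasibility tests per vertex and $O(m^2)$ tests overall, each a polynomial-time matching, so line~\ref{line:perfect-matching}, and hence the whole algorithm, runs in polynomial time. As an alternative one-shot reduction, one could replace each $w(q,p)$ by its rank among $\{w(q,\cdot)\}$ scaled by $(m+1)^{m-q}$ and compute a single maximum-weight bipartite matching; the resulting integer weights have $O(m\log m)$ bits, so this too is polynomial.
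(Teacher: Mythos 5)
Your proposal is correct and follows essentially the same route as the paper: the lexicographically optimal weight sequence of $\mu$ is fixed coordinate by coordinate via polynomially many bipartite matching computations (the paper runs one maximum-weight perfect matching per vertex $q$, after deleting edges inconsistent with the already-fixed coordinates and zeroing the weights at later vertices, whereas you run unweighted feasibility tests over the candidate values in decreasing order---the same greedy skeleton), and $\mu^*$ is then obtained exactly as in the paper, as a minimum-weight perfect matching in $G^*$ under the rank weights. Your one-shot alternative with weights scaled by $(m+1)^{m-q}$ is also sound (the base-$(m+1)$ encoding of per-vertex ranks makes total weight maximization equivalent to lexicographic maximization, with weights of $O(m\log m)$ bits), but it is an extra observation rather than a departure from the paper's argument.
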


\begin{proof}
We first focus on computing the matching $\mu$ in \Cref{line:perfect-matching}.
The weight $w(1,\mu_1)$ can be found by simply taking the largest weight of an edge adjacent to vertex $1\in Q$ in $G$.
Given the weights $w(1,\mu_1), \dots, w(i-1,\mu_{i-1})$, to determine $w(i,\mu_i)$, we delete all edges $(q,p)$ with $1\le q \le i-1$ such that $w(q,p) \ne w(q,\mu_q)$ from $G$, change the weight of all edges $(q,p)$ with $i+1\le q\le m$ to $0$, and compute a maximum-weight perfect matching in the resulting graph.
Note that this matching can be found in time $O(m^3)$ \citep{Tomizawa71}.

Once we have $w(1,\mu_1),\dots,w(m,\mu_m)$, we can construct~$G^*$ in \Cref{line:matching-special} by keeping only the edges in $G$ such that $w(q,p) = w(q,\mu_q)$.
Finally, to compute $\mu^*$, we reassign the weight of each edge $(q,p)$ in $G^*$ to be the rank of participant~$p$ for team~$f(q)$ and find a minimum-weight perfect matching in $G^*$; again, this matching can be found in time $O(m^3)$.
\end{proof}

Next, we observe two ways in which \Cref{thm:balanced} cannot be improved:
the condition EF[1,1] cannot be strengthened to EF1,
and it is not possible to add individual stability to the list of guarantees.
In fact, the first observation was also made by \citet{ShoshanSeHa23}, although their work only deals with one-sided preferences.

\begin{proposition}
\label{prop:balanced-EF1}
Even for two teams with identical valuations, there does not necessarily exist a balanced EF1 allocation.
\end{proposition}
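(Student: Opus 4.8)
The goal is to exhibit a single instance with two teams having identical valuations in which every balanced allocation fails EF1. The plan is to construct a small counterexample by hand rather than argue abstractly, since the claim is an existence (really non-existence) statement that is naturally witnessed by one carefully chosen instance.

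First I would look for the smallest number of players that forces a conflict. With two teams, balancedness requires $\big||A_1|-|A_2|\big|\le 1$, so with an even number $m$ of players each team must receive exactly $m/2$. The simplest hope for a counterexample is $m=2$: both teams have identical valuations, and I want to give one player much higher value than the other. Concretely, I would try identical valuations $v_1 = v_2 =: v$ with $v(p_1)$ large and $v(p_2)$ small, say $v(p_1)=2$ and $v(p_2)=0$ (or some similar high/low split). Since the single balanced allocation type assigns one player to each team, up to symmetry the team holding $p_2$ values its own bundle at $0$ and the other team's bundle at $v(p_1)$; EF1 for that team requires that after removing at most one player from the envied bundle the envy vanishes, which it does here because each bundle has only one player. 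So $m=2$ is too easy to break and EF1 is satisfied — this tells me I need bundles of size at least two so that removing one player from the envied team does not clear the whole bundle.

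So the real plan is to take $m=4$ and two teams, each receiving two players under any balanced allocation. I would choose identical valuations that make one bundle unavoidably much more valuable than its complement no matter how the four players are split two-and-two. The cleanest choice is to give three players high value and one player low value: for instance $v(p_1)=v(p_2)=v(p_3)=1$ and $v(p_4)=0$. Then in any $2$-$2$ split, one team receives two of the value-$1$ players (total value $2$) and the other team receives at most one value-$1$ player (total value $\le 1$). The team with the weaker bundle values its own set at $\le 1$ and the opposing set at $2$; after removing one player from the envied bundle, the opposing value drops to $1$, and its own value is $\le 1$, so one might worry EF1 holds. I therefore expect the main obstacle to be making the value gap strict enough that removing a single envied player still leaves envy. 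The fix is to scale: give three players value $2$ and one player value $0$, or more robustly give values so that a bundle of two coveted players dominates even after dropping one. Let me reconsider: with $v(p_1)=v(p_2)=v(p_3)=1,\ v(p_4)=0$, the deprived team's bundle is $\{p_i,p_4\}$ of value $1$, the envied bundle has value $2$, and removing one value-$1$ player from the envied bundle gives value $1$, matching — so EF1 narrowly holds. Pushing to $m=4$ with values $v(p_1)=v(p_2)=v(p_3)=2,\ v(p_4)=1$: deprived bundle value $3$, envied bundle value $4$, drop one to get $2<3$, so EF1 holds again. The obstruction is that two large bundles are too balanced.

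Hence the decisive construction needs bundle sizes large enough that one team can be starved of all the valuable players. I would use $m=4$ but concentrate value on two players that must end up together against a single opponent: set $v(p_1)=v(p_2)=1$ and $v(p_3)=v(p_4)=0$. The balanced allocation giving $\{p_1,p_2\}$ to team $1$ and $\{p_3,p_4\}$ to team $2$ makes team $2$ value its own bundle at $0$ and team $1$'s at $2$; removing one player from $\{p_1,p_2\}$ leaves value $1>0$, so team $2$ still envies — this violates EF1. The remaining step is to verify that every balanced allocation is a relabeling of this one or likewise fails: any $2$-$2$ split either groups both value-$1$ players together (the violating case above, in either orientation) or separates them, giving each team value exactly $1$, which is EF1. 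So I must argue that a valid counterexample requires ruling out the ``separated'' splits — but the proposition only claims that a balanced EF1 allocation need \emph{not} exist, i.e.\ that \emph{no} balanced allocation is EF1, so I cannot allow the separated split to be both balanced and EF1. Therefore the final plan is to force all value onto players that cannot be separated by using three players: take $m=3$, identical valuations $v(p_1)=v(p_2)=v(p_3)=1$ is useless since it is symmetric; instead the genuinely forcing instance is to make the number of high-value players odd while bundles have even parity constraints. I would settle on the standard witness with values designed so that the unique (up to symmetry) balanced allocation concentrates the coveted players, then check EF1 fails there and confirm no alternative balanced allocation exists — this final verification that the bad split is unavoidable is the step I expect to require the most care, and I would discharge it by enumerating the finitely many balanced partitions of the chosen player set.
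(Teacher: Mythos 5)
There is a decisive gap: your entire search is confined to nonnegative-value players, and in that regime no counterexample exists. For teams with additive nonnegative valuations (identical or not), the round-robin algorithm always produces an allocation that is simultaneously balanced and EF1 --- this is exactly the classical fact recalled in the paper's introduction, and it explains why every one of your candidate instances ($v=(2,0)$; $v=(1,1,1,0)$; $v=(2,2,2,1)$; $v=(1,1,0,0)$; all-ones with $m=3$) ``narrowly'' satisfied EF1: they were bound to. Your closing plan to ``settle on the standard witness'' with an odd number of high-value players is therefore not a gap in execution but an impossibility; no choice of nonnegative values can make the bad split unavoidable, because separating high-value players across the two teams (which balancedness never forbids when the valuations are identical) always yields EF1.

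The missing idea is to use a \emph{negative}-value player, which the paper's model explicitly permits ($v_i\colon 2^P\to\mathbb{R}$). The paper's witness is minimal: $n=m=2$ with both teams valuing $p_1$ at $1$ and $p_2$ at $-1$. Every balanced allocation gives one player to each team; the team holding $p_2$ has utility $-1$ while the other has $1$, and EF1 allows removing only \emph{one} player in total --- removing $p_2$ from its own bundle leaves $0<1$, while removing $p_1$ from the envied bundle leaves $-1<0$ --- so EF1 fails either way. (Removing one player from \emph{each} side would give $0\ge 0$, which is precisely why the weaker notion EF[1,1] is achievable with balancedness in Theorem~3.2, and why this proposition is stated in the mixed-sign setting at all.) Had you tested why each nonnegative attempt failed, the diagnosis --- that a single removal can always neutralize a nonnegative gap under balancedness --- would have pointed you toward mixed signs, where a removal on one side cannot repair a deficit caused by a chore on the other.
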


\begin{proof}
Consider an instance with $n = m = 2$ such that both teams have value $1$ for $p_1$ and $-1$ for $p_2$.
Clearly, no balanced allocation is EF1.
\end{proof}

\begin{proposition}
\label{prop:balanced-IS}
Even for two teams and nonnegative-value participants, there does not necessarily exist a balanced and individually stable allocation.
\end{proposition}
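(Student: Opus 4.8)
The plan is to exhibit a single small instance of two teams and nonnegative-value players in which \emph{every} balanced allocation admits a beneficial deviation, so that no balanced allocation can be individually stable. The guiding observation is that nonnegativity of the valuations makes the two ``not worse off'' requirements in the definition of a beneficial deviation highly asymmetric. If a player~$p$ sits on team~$i$ and wishes to move to team~$j$, then team~$j$ is \emph{never} harmed, since it merely acquires an additional player of nonnegative value; by contrast, team~$i$ is left no worse off only when $v_i(p)=0$, because otherwise it strictly loses the value $v_i(p)>0$. Consequently a beneficial deviation exists exactly when some player who contributes zero value to her current team strictly prefers a different team. The counterexample should therefore be engineered so that balancedness is forced to strand such a ``free-to-leave'' player on a team she does not like.

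Concretely, I would take $n=m=2$ with $v_i(p_j)=0$ for all $i\in\{1,2\}$ and $j\in\{1,2\}$, and let both players strictly prefer team~$1$, i.e.\ $1\succ_{p_1}2$ and $1\succ_{p_2}2$. First I would enumerate the balanced allocations: since $|A_1|$ and $|A_2|$ must differ by at most one and sum to~$2$, each team receives exactly one player, so the only balanced allocations are $\big(\{p_1\},\{p_2\}\big)$ and $\big(\{p_2\},\{p_1\}\big)$. Next, in either allocation exactly one player is assigned to team~$2$, and that player strictly prefers team~$1$. She has value~$0$ to team~$2$, so moving her to team~$1$ leaves team~$2$ indifferent and leaves team~$1$ indifferent as well (it gains a zero-value player), while the player herself becomes strictly better off. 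Hence each balanced allocation admits a beneficial deviation and fails to be individually stable.

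The step I expect to require the most care is verifying the deviation against the \emph{contractual} form of individual stability used in the paper, where both the departing team and the receiving team must be left no worse off---not merely the receiving team. This is precisely what pins down the all-zero valuations: were the relocated player valuable to team~$2$, her departure would make team~$2$ worse off and thereby block the deviation, so the construction must guarantee that every balanced allocation places a zero-value player on an unwanted team. Once this is arranged the remaining checks are routine, and I would close by noting for contrast that the (non-balanced) allocation $\big(\{p_1,p_2\},\emptyset\big)$ is individually stable, confirming that it is balancedness, rather than individual stability on its own, that cannot be attained here.
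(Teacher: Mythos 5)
Your proposal is correct and follows essentially the same route as the paper: a minimal $n=m=2$ counterexample in which both players strictly prefer team~$1$, balancedness forces one player onto team~$2$, and her deviation to team~$1$ is beneficial because she costs her current team nothing. The only cosmetic difference is your choice of all-zero valuations where the paper has team~$1$ value each player at~$1$ (making team~$1$ strictly better off under the deviation rather than indifferent); both instances verify the definition of beneficial deviation equally well.
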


\begin{proof}
Consider an instance with $n = m = 2$ such that team~$1$ has value~$1$ for each participant, team~$2$ has value~$0$ for each participant, and both participants strictly prefer team~$1$ to team~$2$.
The only individually stable allocation assigns both participants to team~$1$, but this allocation is not balanced.
\end{proof}

\begin{algorithm*}[!tb]
  \caption{For computing an EF1, swap stable, and individually stable allocation}\label{alg:SS-EF1}
  Partition $P$ into $P^+ \coloneqq\{p\in P \mid \max_{i\in T}v_i(p)\ge 0\}$ and $P^- \coloneqq\{p\in P \mid \max_{i\in T}v_i(p)< 0\}$\;
  Let $\widehat{P}^+$ consist of $P^+$ together with $(n-1)|P^+|+n$ dummy participants, where each dummy participant yields value $0$ to every team and is indifferent between all teams\;
  Let $\widehat{P}^-$ consist of $P^-$ together with $(n-1)|P^-|+n$ dummy participants, where each dummy participant yields value $0$ to every team and is indifferent between all teams\;
  Let $A^+$ be the allocation obtained by executing \Cref{alg:swap-stable-rr} on $\widehat{P}^+$ with the teams in the forward order $1,2,\dots,n$\label{line:SS+}\;
  Let $A^-$ be the allocation obtained by executing \Cref{alg:swap-stable-rr} on $\widehat{P}^-$ with the teams in the backward order $n,n-1,\dots,1$\label{line:SS-}\;
  Return the allocation $A$ which is the union of $A^+$ and $A^-$ with the dummy participants removed\;
\end{algorithm*}

In spite of Propositions~\ref{prop:balanced-EF1} and \ref{prop:balanced-IS}, we show next that if we give up balancedness, we can attain EF1, swap stability, and individual stability simultaneously.
To this end, we combine our \Cref{alg:swap-stable-rr} with the \emph{double round-robin algorithm} introduced by \citet{AzizCaIg22}.
In the first phase, the participants who yield nonnegative value to at least one team are allocated by \Cref{alg:swap-stable-rr} in the forward order of the teams,
while in the second phase, the remaining participants are allocated by \Cref{alg:swap-stable-rr} in the backward order of the teams.
Intuitively, EF1 is guaranteed because, for each pair of teams $i$ and $j$ with $i<j$, $i$ does not envy $j$ in the first phase whereas $j$ does not envy $i$ in the second phase.
Moreover, we add a sufficient number of dummy participants, who yield value $0$ to every team and are indifferent between all teams, in order to guarantee individual stability.
This leads to each team receiving at least one dummy participant, and a beneficial deviation in the resulting situation can be captured by a beneficial swap between the deviating participant and a dummy participant.
The algorithm is formally described as \Cref{alg:SS-EF1}.

\begin{theorem}
\label{thm:individual-stable}
For any instance, \Cref{alg:SS-EF1} returns an EF1, swap stable, and individually stable allocation in polynomial time.
\end{theorem}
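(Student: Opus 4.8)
The plan is to analyze \Cref{alg:SS-EF1} by separately verifying each of the three properties, reducing as much as possible to the guarantees already established for \Cref{alg:swap-stable-rr} in \Cref{lem:EF1-1,lem:swap-stable,lem:polynomial-time}. The polynomial running time is immediate: the partition into $P^+$ and $P^-$ and the padding with dummy players take linear time, and the two invocations of \Cref{alg:swap-stable-rr} each run in polynomial time by \Cref{lem:polynomial-time}, with the instance size only polynomially inflated by the $O(n|P|)$ dummy players.

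For \textbf{swap stability}, I would first argue that the final allocation $A$ inherits swap stability from the two sub-allocations $A^+$ and $A^-$, each of which is swap stable by \Cref{lem:swap-stable}. The key observations are: (i)~a beneficial swap within $A^+$ or within $A^-$ alone is ruled out directly; (ii)~a swap between a real player $p\in P^+$ and a real player $q\in P^-$ cannot benefit anyone because $p$ has positive value to its assigned team while $q$ has negative value everywhere, so moving $q$ onto a team that currently holds $p$ strictly worsens that team (and symmetrically), meaning the no-one-worse-off condition already fails. One must also handle swaps involving dummy players, but since dummies are removed from $A$, the only swaps to check in $A$ are between real players; I would phrase the argument so that the sign separation between $P^+$ and $P^-$ kills all cross-phase swaps.

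For \textbf{EF1}, the intuition stated in the text is the guide: fix teams $i<j$. In the first (forward) phase on $\widehat P^+$, team~$i$ is served before team~$j$ in each round, so by the EF1 guarantee of \Cref{lem:EF1-1} applied to the padded goods instance, $i$ does not envy $j$ up to one player; since $\widehat P^+$ consists of nonnegative-value players (the dummies have value $0$, the reals in $P^+$ have $\max_i v_i\ge 0$), EF[1,1] collapses to EF1, and removing the dummies does not create envy because dummies contribute $0$. Symmetrically, in the second (backward) phase on $\widehat P^-$ team~$j$ is served before $i$, so $j$ does not envy $i$ up to one player among the nonpositive-value players. Combining additively, for the pair $(i,j)$ with $i<j$ the possible envy of $i$ toward $j$ is confined to the goods phase and is EF1 there, while the possible envy in the chores direction is handled in the other phase; I would need to check carefully that merging the two phases preserves a single-player removal on the correct side, using that $i$'s deficit relative to $j$ comes only from $P^+$ and $j$'s from $P^-$. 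This bookkeeping is the main obstacle: I must ensure the removals demanded by the two separate EF1 certificates compose into one valid EF1 certificate for the combined allocation, and that adding the chores (all nonpositive) to team $i$'s bundle only helps the EF1 inequality in the direction already controlled.

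For \textbf{individual stability}, the crucial device is that the padding guarantees every team receives at least one dummy player in each phase (since there are more than $n$ dummies of each sign). Suppose some player $p$ has a beneficial deviation from team $i$ to team $j$, making $p$ better off and neither $i$ nor $j$ worse off. Because team $j$ holds a dummy $d$ of value $0$ that $j$ is indifferent about, I would convert this deviation into a \emph{swap} between $p$ and $d$: player $p$ moves to $j$ (weakly improving $p$, by assumption strictly), team $i$ loses $p$ (not worse off by hypothesis), team $j$ gains $p$ and loses only the value-$0$ dummy (not worse off, since $j$ was not worsened by receiving $p$ and loses nothing of value), and $d$ is indifferent between all teams so is not worsened. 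Thus the deviation would be a beneficial swap in the padded allocation $A^+$ or $A^-$, contradicting the swap stability already established by \Cref{lem:swap-stable} at the padded level. The care needed here is to run this argument at the level of the padded allocation (before dummies are removed) and to match the sign of $p$ with the correct phase; this reduction of individual stability to swap stability via dummy players is the conceptual heart of why the padding was introduced, so I would present it last and in full detail.
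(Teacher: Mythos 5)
Your overall architecture matches the paper's: reduce each property to the guarantees of \Cref{alg:swap-stable-rr}, and convert a beneficial deviation into a swap with a dummy player to get individual stability---that last argument is essentially identical to the paper's and is correct (though the reason every team holds a dummy is not that there are ``more than $n$ dummies'' but that $|\widehat{P}^+|=n(|P^+|+1)$, so the balanced output gives each team $|P^+|+1$ slots while only $|P^+|$ real players exist). However, there is a genuine gap rooted in a misreading of $P^+$: it is defined by $\max_{i\in T}v_i(p)\ge 0$, so a player in $P^+$ may have strictly \emph{negative} value for some teams. Hence $\widehat{P}^+$ is \emph{not} a nonnegative-value instance, and your claim that EF[1,1] collapses to EF1 on it is false (the collapse does hold for $\widehat{P}^-$, since $p\in P^-$ means every team values $p$ negatively). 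The missing idea is the paper's auxiliary claim that in $A^+$ every player $p\in P^+$ ends up in a team $i$ with $v_i(p)\ge 0$, proved by noting that if $v_j(p)\ge 0>v_i(p)$ for some $p\in A_i^+$, then swapping $p$ with a dummy in $A_j^+$ would improve the matching $\mu$ inside \Cref{alg:swap-stable-rr}, a contradiction. This claim is load-bearing exactly where your sketch hand-waves: your cross-phase swap-stability step asserts without justification that ``$p$ has positive value to its assigned team'' (which is precisely this claim, and without it a team could profitably trade a negatively-valued $P^+$ player for a less negative $P^-$ player), and the EF1 composition needs it to discard the removal $X'\subseteq A^+_j$ from the phase-one EF[1,1] certificate, via $v_j(X')\ge 0$ and hence $v_j(A^+_j)\ge v_j(A^+_j\setminus X')$.

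The EF1 bookkeeping as you sketch it would also fail quantitatively: EF1 in \emph{each} phase for the same ordered pair composes only to envy up to \emph{two} players. The paper's composition instead uses \emph{exact} envy-freeness in the favored phase---for $i<j$, team $i$ does not envy $j$ at all with respect to $A^+$ (priority plus equal bundle sizes, thanks to the padding), and symmetrically $j$ does not envy $i$ at all with respect to $A^-$---so that the single permitted removal is spent entirely in the unfavored phase, where one side of the EF[1,1] certificate is dropped by a sign argument: $Y\subseteq A^-_j$ consists of negative-value players, so $v_i(A^-_j\setminus Y)\ge v_i(A^-_j)$, and $v_j(X')\ge 0$ by the claim above. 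Note also that your attribution of directions is reversed: for $i<j$, team $i$'s potential envy of $j$ arises only from the chores phase and $j$'s only from the goods phase, not the other way around. All of this is fixable along the paper's lines, but as written the proposal substitutes a false premise for the lemma that actually makes the argument go through.
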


\begin{proof}
We show that \Cref{alg:SS-EF1} has the desired properties. 
Since \Cref{alg:swap-stable-rr} runs in polynomial time, so does \Cref{alg:SS-EF1}.
Note that $|\widehat{P}^+|=n(|P^+|+1)$ and $|\widehat{P}^-|=n(|P^-|+1)$, so every team has at least one dummy participant in each of $A^+$ and $A^-$.
Also, as \Cref{alg:swap-stable-rr} outputs a swap stable allocation, each of the allocations $A^+$ and $A^-$ is swap stable.
In addition, each participant $p\in P^+$ is allocated to a team that values her nonnegatively, i.e., $v_i(p)\ge 0$ for all $i\in T$ and $p\in A_i^+$.
Indeed, if $v_j(p)\ge 0>v_i(p)$ for some $i,j\in T$ and $p\in A_i^+$, the swap between $p$ and a dummy participant in~$A_j^+$ would lead to a better matching than $\mu$ in \Cref{alg:swap-stable-rr}, a contradiction.

We now prove that the allocation $A$ returned by \Cref{alg:SS-EF1} is EF1, swap stable, and individually stable.

\paragraph{EF1}
Consider any pair of teams $i$ and $j$ where $i<j$.

First, consider $i$'s envy for $j$.
In the first phase, $i$ has priority over $j$ and both teams receive the same number of participants, so $i$ does not envy $j$ with respect to $A^+$ (i.e., $v_i(A^+_i)\ge v_i(A^+_j)$).
Also, as $A^-$ is EF[1,1], there exist $X\subseteq A^-_i$ and $Y\subseteq A^-_j$ such that $|X|, |Y|\le 1$ and $v_i(A^-_i\setminus X)\ge v_i(A^-_j\setminus Y)$.
Hence, we obtain
\begin{align*}
    v_i(A_i\setminus X)
    &= v_i(A^+_i)+v_i(A^-_i\setminus X)
    \ge v_i(A^+_j)+v_i(A^-_j\setminus Y)
    \ge v_i(A^+_j)+v_i(A^-_j)
    =v_i(A_j);
\end{align*}
here, the second inequality holds because each participant in $Y$ yields negative value to every team.
Thus, $i$ does not envy~$j$ by more than one participant.

Next, consider $j$'s envy for $i$.
In the second phase, $j$ has priority over $i$ and both teams receive the same number of participants, so $j$ does not envy $i$ with respect to $A^-$ (i.e., $v_j(A^-_j)\ge v_j(A^-_i)$).
Also, as $A^+$ is EF[1,1], there exist $X'\subseteq A^+_j$ and $Y'\subseteq A^+_i$ such that $|X'|, |Y'|\le 1$ and $v_j(A^+_j\setminus X')\ge v_j(A^+_i\setminus Y')$.
Note that $v_j(X')\ge 0$ since $j$ only receives participants with nonnegative value in the first phase.
Hence, we obtain
\begin{align*}
    v_j(A_j)
    &=v_j(A^+_j)+v_j(A^-_j)
    \ge v_j(A^+_j\setminus X')+v_j(A^-_j)
    \ge v_j(A^+_i\setminus Y')+v_j(A^-_i)
    =v_j(A_i\setminus Y').
\end{align*}
Thus, $j$ does not envy $i$ by more than one participant.

\paragraph{Swap stability} 
Suppose to the contrary that the swap between some $p\in A_i$ and $q\in A_j$ is a beneficial swap in $A$.
Since each of $A^+$ and $A^-$ is swap stable, it cannot be that $p,q\in P^+$ or $p,q\in P^-$.
Thus, without loss of generality, we may assume that $p\in P^+$ and $q\in P^-$.
We have $v_i(p)\ge 0$ but $v_i(q)<0$, which means that the swap is not beneficial for team~$i$, a contradiction.

\paragraph{Individual stability} 
Suppose to the contrary that there is a beneficial deviation of participant~$p$ from team~$i$ to team~$j$ in $A$. 
If $p\in P^+$ (resp., $p\in P^-$), the swap between $p$ and a dummy participant in $A_j^+$ (resp., $A_j^-$) would be a beneficial swap in $A^+$ (resp., in $A^-$), contradicting the swap stability of $A^+$ (resp., $A^-$).
\end{proof}

\section{Pareto Optimality}
\label{sec:PO}

In this section, we turn our attention to Pareto optimality, which is a stronger requirement than both swap stability and individual stability.
Firstly, while it is easy to check whether an allocation is swap stable or individually stable by checking for all (polynomial number of) possible beneficial swaps or deviations, the same is not true for PO.

\begin{theorem}
\label{thm:PO-hardness}
Deciding whether an allocation is PO or not is coNP-complete, even for two teams with identical valuations, nonnegative-value participants, and a balanced allocation.
\end{theorem}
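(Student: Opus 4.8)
The plan is to establish both directions of coNP-completeness. For membership in coNP, I would observe that the complementary problem---deciding whether a given allocation $A$ is \emph{not} PO---lies in NP: a Pareto-dominating allocation $A'$ serves as a polynomial-size certificate, and checking that no party is worse off in $A'$ while at least one is better off takes polynomial time. The substance of the proof is therefore the coNP-hardness, which I would obtain by reducing from \textsc{Subset-Sum}.

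Before describing the reduction, I would record the structural observation that drives it. When both teams share the same valuation $v$ and $A'$ Pareto-dominates $A$, neither team is worse off, so $v(A_1')\ge v(A_1)$ and $v(A_2')\ge v(A_2)$; summing these and using $v(A_1')+v(A_2')=v(P)=v(A_1)+v(A_2)$ forces both inequalities to be equalities. Hence \emph{every} Pareto improvement preserves each team's value exactly, so no team can be the strictly-better-off party, and any improvement must come entirely from the player side subject to the hard constraint $v(A_1')=v(A_1)$. This invariance is precisely where a subset-sum condition can be embedded.

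Given a \textsc{Subset-Sum} instance with nonnegative integers $a_1,\dots,a_k$ and target $t\le\sum_i a_i$, I would build an instance with two teams of identical valuation $v$ and the following players: gadget players $g_1,\dots,g_k$ with $v(g_i)=a_i$, each indifferent between the two teams; one special player $p^*$ with $v(p^*)=t$ and $1\succ_{p^*}2$; and $k-1$ dummy players of value $0$, each indifferent between the teams. The input allocation is $A=\big(\{g_1,\dots,g_k\},\{p^*,d_1,\dots,d_{k-1}\}\big)$, which is balanced since $|A_1|=|A_2|=k$, uses only nonnegative values and identical valuations, and satisfies $v(A_1)=\sum_i a_i$ and $v(A_2)=t$.

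The final step is to verify that $A$ is Pareto-dominated if and only if the \textsc{Subset-Sum} instance is a yes-instance. By the structural observation, any improvement preserves both team values, and the only player who can be made strictly better off is $p^*$, since every gadget and dummy is indifferent; thus an improvement must relocate $p^*$ to team~$1$. To keep $v(A_1)$ equal to $\sum_i a_i$ after adding $v(p^*)=t$, one must move players of total value $t$ off team~$1$, and since only the gadgets carry value there, this is exactly a subset $S$ with $\sum_{i\in S}a_i=t$; such a move leaves every indifferent player unharmed while making $p^*$ strictly happier, yielding a genuine Pareto improvement. Conversely, if no subset sums to $t$, then $p^*$ cannot be relocated while preserving the team values and no other player admits a strict gain, so $A$ is PO. The main obstacle is exactly this tightness argument---guaranteeing that in a no-instance there is no Pareto improvement of \emph{any} kind---and the team-value-invariance observation is what overcomes it by collapsing the entire space of candidate improvements to the single subset-sum decision.
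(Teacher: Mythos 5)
Your proposal is correct and follows essentially the same route as the paper: a reduction from \textsc{Subset Sum} with the subset-sum items as players on team~$1$, a special player of value equal to the target who strictly prefers team~$1$ placed on team~$2$, and zero-value dummies to make the allocation balanced, with the converse direction driven by the observation that identical valuations force any Pareto improvement to preserve both team values exactly. The only (immaterial) difference is that the paper has all non-special players strictly prefer team~$2$---so the moved subset players are also strictly better off and dummies provably cannot move---whereas you make them indifferent and handle the dummies via the value-invariance argument; both variants are sound.
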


\begin{proof}
Checking that an allocation is Pareto dominated by another given allocation can be done in polynomial time, so the problem is in coNP.
To prove coNP-hardness, we reduce from \textsc{Subset Sum}.
An instance of \textsc{Subset Sum} consists of positive integers $b_1,\dots,b_r$ and $s$; it is a Yes-instance if and only if the sum of some subset of the $b_i$'s is exactly $s$.

Given an instance $(b_1,\dots,b_r;s)$ of \textsc{Subset Sum}, we create two teams with identical valuations for $2r$ participants; the values are $b_1,\dots,b_r,s,0,0,\dots,0$, where $0$ occurs $r-1$ times.
Participant~$p_{r+1}$ (with value $s$) prefers team~$1$ to team~$2$, while all other participants prefer team~$2$ to team~$1$.
Consider a balanced allocation in which the first $r$ participants are in team~$1$ while the other $r$ participants are in team~$2$.
We claim that this allocation admits a Pareto improvement if and only if $(b_1,\dots,b_r;s)$ is a Yes-instance.
Indeed, if $\sum_{i\in I}b_i = s$ for some $I\subseteq [r]$, then exchanging participants $p_i$  for $i\in I$ with $p_{r+1}$ yields a Pareto improvement: both teams are indifferent while all participants involved are better off.
For the converse direction, note that an exchange that yields a Pareto improvement cannot involve $p_{r+2},\dots,p_{2r}$, so such an exchange must involve $p_{r+1}$ along with a subset $P'\subseteq\{p_1,\dots,p_r\}$.
Since the teams have identical valuations, this exchange can be a Pareto improvement only when the $b_i$'s corresponding to $P'$ sum up to exactly~$s$.
\end{proof}

Note that even though the same decision problem is also coNP-complete for two teams with one-sided preferences \citep[Thm.~1]{AzizBiLa19}, it becomes trivial for any number of teams with identical valuations and one-sided preferences, because every allocation is PO in that case.
Also, the proof of \Cref{thm:PO-hardness} can be adapted to show a similar hardness for PO within the set of balanced allocations.
Indeed, we can use the same instance with the exception that all participants in team~$2$ prefer team~$1$ to team~$2$.
Then, there exists a Pareto improvement respecting the balancedness constraint if and only if the instance of \textsc{Subset Sum} is a Yes-instance.

In light of \Cref{thm:PO-hardness}, we cannot hope to reach a PO allocation in polynomial time by starting with an arbitrary allocation and iteratively finding Pareto improvements.
However, a PO allocation can be efficiently computed by simply assigning each participant to a team with the highest value for her, breaking ties in favor of a team that the participant prefers most.
Can we attain PO along with fairness for the teams?
The next example shows that round-robin-based algorithms such as \Cref{alg:swap-stable-rr,alg:SS-EF1} do not work, even for two teams with identical valuations and nonnegative-value participants.

\begin{example}
Consider the following instance with $n = 2$ and $m = 8$:
\begin{itemize}
\item $v_i(p_1) = v_i(p_2) = 4$, $v_i(p_3) = v_i(p_4) = 3$, $v_i(p_5) = v_i(p_6) = 2$, and $v_i(p_7) = v_i(p_8) = 1$ for $i\in \{1,2\}$;
\item $1\succ_{p_j}2$ for $j\in\{1,2,7,8\}$ and $2\succ_{p_j}1$ for $j\in\{3,4,5,6\}$.
\end{itemize}
Given this instance, \Cref{alg:swap-stable-rr,alg:SS-EF1} return an allocation $A$ that assigns to each team exactly one participant from each of the sets $\{p_1,p_2\}$, $\{p_3,p_4\}$, $\{p_5,p_6\}$, and $\{p_7,p_8\}$.
However, $A$ is Pareto dominated by the allocation $A' = (\{p_1,p_2,p_7,p_8\}, \{p_3,p_4,p_5,p_6\})$.
\end{example}

Nevertheless, 
for two teams and arbitrary-value participants, we can find an EF1 and PO allocation by extending the \emph{generalized adjusted winner procedure} of \citet{AzizCaIg22}.
The algorithm operates in a similar way as Aziz et al.'s algorithm, but we need to employ a tie-breaking rule among participants with the same ratio between the teams' values.

\begin{algorithm*}[!tb]
\caption{For computing an EF1 and PO allocation for two teams}\label{alg:gaw}
Assign each participant with zero value for both teams to a team that she prefers (breaking ties arbitrarily), and assume from now on that $P$ is the set of remaining participants\;
Let $P^*_1 = \{ p \in P \mid v_1(p) \geq 0, v_2(p)\leq 0 \}$ and $P^*_2 = \{ p \in P \mid v_1(p) \leq 0, v_2(p)\geq 0 \}$\;
Let $P^+ = \{ p \in P \mid v_1(p) > 0, v_2(p)> 0 \}$ and $P^- = \{ p \in P \mid v_1(p) < 0, v_2(p) < 0 \}$\;
Assume without loss of generality that the participants in $P^+\cup P^-$ are $p_1,p_2,\dots,p_r$, and relabel them so that $|v_{1}(p_1)|/|v_{2}(p_1)|\le |v_{1}(p_2)|/|v_{2}(p_2)|\le \dots \le |v_{1}(p_r)|/|v_{2}(p_r)|$.\label{line:tiebreak} 
Moreover, for participants with the same ratio, place them in the following order: \newline
(1) those in~$P^+$ who strictly prefer team~$2$ and those in~$P^-$ who strictly prefer team~$1$; \newline
(2) those in~$P^+ \cup P^-$ who like both teams equally; \newline
(3) those in~$P^+$ who strictly prefer team~$1$ and those in~$P^-$ who strictly prefer team~$2$\;
 Let $(A_1,A_2)\ot (P^+\cup P^*_1,P^-\cup P^*_2)$\label{line:initial-alloc}\; 
\For{$i\ot 1,2,\dots,r$}{
\lIf{team~$2$ does not envy team~$1$ by more than one participant}{\textbf{break}}
  \lIf{$p_i\in P^+$}{Move participant $p_i$ from team~$1$ to team~$2$ (i.e., $A_1\ot A_1\setminus \{p_i\}$ and $A_2\ot A_2 \cup \{p_i\}$)}
  \lElse{Move participant $p_i$ from team~$2$ to team~$1$ (i.e., $A_1\ot A_1\cup \{p_i\}$ and $A_2\ot A_2 \setminus \{p_i\}$)}
}
\Return $(A_1,A_2)$\;
\end{algorithm*}

\begin{theorem}
\label{thm:gaw}
Given any instance with two teams, there exists an algorithm that outputs an allocation that is EF1, PO, and team-PO in time $O(m^2)$.
\end{theorem}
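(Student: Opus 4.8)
The running-time bound is the easy part: after an $O(m\log m)$ sort by the ratio $|v_1(p)|/|v_2(p)|$ with the stated tie-break, the loop runs $O(m)$ iterations, each of which checks the envy condition and updates the bundles in $O(1)$ time using maintained partial sums, so the total is comfortably within $O(m^2)$. The substance of the proof I would hang on a single \emph{supporting ratio} $\alpha>0$ for the returned allocation $(A_1,A_2)$: I will show that every player on team~$1$ satisfies $v_1(p)\ge \alpha\,v_2(p)$ and every player on team~$2$ satisfies $v_1(p)\le \alpha\,v_2(p)$. Take $\alpha=|v_1(p^\ast)|/|v_2(p^\ast)|$, where $p^\ast$ is the last player moved by the loop (in the degenerate cases where the loop makes no move, resp.\ all $r$ moves, pick $\alpha$ below, resp.\ above, every ratio in $P^+\cup P^-$, which is possible since all these ratios are positive). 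Establishing the property is a short sign check against the ratio order: an unmoved player of $P^+$ has ratio $\ge\alpha$, hence $v_1\ge\alpha v_2$; a moved player of $P^-$ has ratio $\le\alpha$, which after the sign flip again gives $v_1\ge\alpha v_2$; symmetric statements hold on team~$2$; and every player of $P^\ast_1$ (resp.\ $P^\ast_2$) satisfies the inequality automatically because $\alpha>0$ and its two values have opposite signs. The tie-breaking of \Cref{alg:gaw} is exactly what keeps the property valid for the players whose ratio equals~$\alpha$.

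For EF1, team~$2$'s side is immediate: the loop exits precisely when team~$2$ no longer envies team~$1$ by more than one player, and if it runs to completion then team~$1$ ends up holding only players that team~$2$ values nonpositively, so team~$2$ does not envy team~$1$ at all. For team~$1$ I would inspect the state $B$ just before the move of $p^\ast$; the loop performed that move, so in $B$ team~$2$ envied team~$1$ by more than one player, which (removing $p^\ast$) gives $v_2(B_1)-v_2(p^\ast)>v_2(B_2)$. Combining this with the supporting-ratio inequalities $v_1(B_1)\ge\alpha\,v_2(B_1)$ and $v_1(B_2)\le\alpha\,v_2(B_2)$ together with $v_1(p^\ast)=\alpha\,v_2(p^\ast)$, and cancelling the factor $\alpha>0$, yields that after the move team~$1$ does not envy team~$2$ once $p^\ast$ is deleted from the relevant bundle; the two subcases $p^\ast\in P^+$ and $p^\ast\in P^-$ differ only in whether $p^\ast$ is removed from $A_2$ or from $A_1$. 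If no move was made, the initial allocation already has $v_1(A_1)\ge 0\ge v_1(A_2)$.

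The supporting-ratio property says that $A$ places each player on the team maximising its contribution to the weighted welfare $W\coloneqq v_1(A_1)+\alpha\,v_2(A_2)$, so $A$ maximises $W$; since $\alpha>0$, any team-Pareto improvement would strictly increase $W$, so $A$ is team-PO. For full PO I would argue that a Pareto improvement $A'$ cannot make any team strictly better off (that would be a team-Pareto improvement), so both teams keep their values; hence $A'$ also maximises $W$ and must agree with $A$ on every player whose ratio differs from $\alpha$ (those have a strict preference in $W$), while reassigning any zero-value player only moves it off its favourite team and helps no one. Thus $A'$ merely reshuffles the \emph{boundary} players (ratio exactly $\alpha$, where $v_1=\alpha v_2$, so that preserving team~$1$'s value automatically preserves team~$2$'s) without changing either team's value.

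The crux—and the step I expect to be the main obstacle—is ruling out such a player-improving reshuffle. Here I would use that the loop moves a \emph{prefix} of the boundary players in the tie-break order, which runs from the players most eager to be moved (type~(1)) to those least eager (type~(3)). Pick a player $x$ made strictly happier by $A'$; then $x$ strictly prefers its new team, so it is of type~(1) or~(3), and this pins the prefix cut tightly relative to $x$'s position. Writing $D^{\mathrm{out}}$ and $D^{\mathrm{in}}$ for the boundary players that $A'$ moves from team~$1$ to team~$2$ and vice versa, the no-one-is-worse-off condition forces every member of $D^{\mathrm{out}}$ to weakly prefer team~$2$ and every member of $D^{\mathrm{in}}$ to weakly prefer team~$1$; combined with the pinned cut, this forces $D^{\mathrm{out}}$ and $D^{\mathrm{in}}$ to consist of players whose $v_1$-values have strictly opposite signs (one side lying in $P^+$ and nonempty because it contains $x$, the other contained in $P^-$), so that $v_1(D^{\mathrm{out}})$ and $v_1(D^{\mathrm{in}})$ cannot be equal. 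This contradicts value preservation, and the symmetry $P^+\leftrightarrow P^-$, team~$1\leftrightarrow$ team~$2$ reduces all remaining cases to this one. The delicate bookkeeping is precisely this value-preserving exchange across the cut—the same subset-sum obstruction that makes verifying PO coNP-hard in general (\Cref{thm:PO-hardness})—which the tie-breaking rule is engineered to make sign-inconsistent.
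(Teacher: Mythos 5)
Your proof is correct, and it reaches the same destination as the paper by a noticeably more self-contained route. The paper's proof of \Cref{thm:gaw} delegates EF1, team-PO, and the running time to the analysis of \citet{AzizCaIg22}, and likewise imports from that work the structural fact that all players exchanged between $A$ and a putative Pareto improvement $A'$ share a common ratio $\alpha$; it then derives the contradiction by a three-way case split on where the last moved player $p_i$ falls in the tie-break order (\Cref{fig:adjusted_winner}). You instead prove everything from one ``supporting ratio'' lemma---$v_1(p)\ge\alpha v_2(p)$ on team~$1$ and $v_1(p)\le\alpha v_2(p)$ on team~$2$---which simultaneously yields team-PO (via maximization of $W=v_1(A_1)+\alpha v_2(A_2)$), the ratio-$\alpha$ restriction on exchanged players (uniqueness of the per-player maximizer of the contribution to $W$ off the boundary), and a direct computation for team~$1$'s side of EF1 from the pre-move envy inequality $v_2(B_1)-v_2(p^\ast)>v_2(B_2)$; and for the crux you anchor the case analysis on a strictly improved player $x$ rather than on $p_i$, which automatically disposes of the paper's Case~2 (an indifferent boundary player can never be the witness $x$, so if the cut lies in block~(2) no strict improvement exists). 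I verified your sign analysis in all four sub-cases ($x\in D^{\mathrm{out}}$ or $x\in D^{\mathrm{in}}$, $x$ unmoved of type~(1) or moved of type~(3)): in each, the prefix-cut position pinned by $x$ forces $D^{\mathrm{out}}$ and $D^{\mathrm{in}}$ into opposite sign classes, so one team's utility strictly changes, contradicting value preservation. Three small imprecisions, none fatal: (i) your parenthetical ``one side lying in $P^+$ and nonempty because it contains $x$'' is literally true only in two of the four sub-cases---when $x$ is a moved type-(3) player its side lies in $P^-$ and the strict change appears with the opposite sign (or on team~$2$'s value)---though the $P^+\leftrightarrow P^-$, team~$1\leftrightarrow 2$ symmetry you invoke does legitimately cover this; (ii) for ratio-$\alpha$ players the supporting inequality holds with \emph{equality} regardless of the tie-break, so the tie-breaking rule is not what ``keeps the property valid'' there---its real role is exactly where you use it, in the reshuffle argument; (iii) the claimed $O(1)$ per-iteration EF1 check is optimistic (it needs a running max/min of $v_2$ over changing bundles), but $O(m)$ per iteration still meets the $O(m^2)$ bound. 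What your approach buys is independence from the internals of Aziz et al.'s proof; what the paper's buys is brevity, at the cost of leaning on an ``observe from the proof of'' citation for the $\alpha$-structure that you derive cleanly from $W$-maximization.
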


\begin{proof}
The algorithm is shown as \Cref{alg:gaw}.
Since it is a version of the generalized adjusted winner procedure with specific tie-breaking, EF1, team-PO, and the running time follow from the work of \citet{AzizCaIg22}.
In particular, the tie-breaking in \Cref{line:tiebreak} takes time $O(m)$ and does not add to the overall running time.

It remains to show that the output allocation is PO.
First, participants with zero value for both teams are already assigned to a team that they prefer, and such participants do not affect the utility of either team no matter which team they are assigned to, so we may safely ignore them.
We claim that at any point from \Cref{line:initial-alloc} onward, the allocation~$A$ in the algorithm is PO.
Suppose to the contrary that there exists a Pareto improvement $A'=(A'_1,A'_2)$ of $A$.
Since this intermediate allocation~$A$ is team-PO \citep{AzizCaIg22}, we have
\begin{align*}
    v_1(A_1)=v_1(A'_1) \quad \text{and} \quad v_2(A_2)=v_2(A'_2).
\end{align*}
We can assume that participants in $P^*_1$ and $P^*_2$ stay in team~$1$ and $2$, respectively, because transferring such a participant makes neither team better off and at least one team worse off, contradicting team-PO.
Thus, in the following argument, we assume that only participants in $P^+\cup P^-$ are exchanged.
In addition, we observe from the proof of \citet{AzizCaIg22} that, for some value $\alpha$, 
$|v_1(p)|/|v_2(p)|=\alpha$ for all $p\in (A_1\cap A'_2)\cup (A_2\cap A'_1)$ (i.e., the exchanged participants between $A$ and $A'$).
We derive a contradiction by splitting the argument according to $p_i$, the last participant we moved in the for-loop (see \Cref{fig:adjusted_winner}). If we did not move any participant, we can apply the argument in Case~1.

\begin{figure*}[!t]
\begin{center}
  \begin{tikzpicture}[scale=0.8]
    \draw[thick,fill=blue!20] (-0.1,0)  rectangle (3.9,2);
    \draw[thick,fill=red!5] (4,0)  rectangle (16,2);
    \draw[thick,fill=blue!20] (16.1,0) rectangle (20.1,2);
    \draw (8,0) -- (8,2);
    \draw (12,0) -- (12,2);    
    \draw[above,blue!50!black] (1.9,2) node {$|v_1(p)|/|v_2(p)|<\alpha$};
    \draw[above, red!50!black] (10,2) node {$|v_1(p)|/|v_2(p)|=\alpha$};
    \draw[above,blue!50!black] (18.1,2) node {$|v_1(p)|/|v_2(p)|>\alpha$};
    \draw (6,1.5)  node[font=\small] {$p\in P^+\text{ and }2\succ_p 1$};
    \draw (6,0.5)  node[font=\small] {$p\in P^-\text{ and }1\succ_p 2$};
    \draw (10,1.5) node[font=\small] {$p\in P^+\text{ and }1\sim_p 2$};
    \draw (10,0.5) node[font=\small] {$p\in P^-\text{ and }1\sim_p 2$};
    \draw (14,1.5) node[font=\small] {$p\in P^+\text{ and }1\succ_p 2$};
    \draw (14,0.5) node[font=\small] {$p\in P^-\text{ and }2\succ_p 1$};
    \draw[<->,below,thick] (-0.1,-.3)  -- (7.95,-.3);
    \draw[<->,below,thick] (8.05,-.3)  -- (11.95,-.3);
    \draw[<->,below,thick] (12.05,-.3) -- (20.1,-.3);
    \node at (4,-.7)  {Case 1};
    \node at (10,-.7) {Case 2};
    \node at (16,-.7) {Case 3};    
  \end{tikzpicture}
  \caption{The order of participants in $P^+\cup P^-$ in the proof of \Cref{thm:gaw}.}\label{fig:adjusted_winner}
\end{center}
\end{figure*}

\paragraph{Case 1}
Suppose that one of the following holds: 
\begin{enumerate}[label=(\roman*)]
\item $|v_1(p_i)|/|v_2(p_i)| < \alpha$;
\item $|v_1(p_i)|/|v_2(p_i)| = \alpha$ and $p_i \in P^+$ with $2 \succ_{p_i} 1$; or
\item $|v_1(p_i)|/|v_2(p_i)| = \alpha$ and $p_i \in P^-$ with $1 \succ_{p_i} 2$. 
\end{enumerate} 
We see that every participant $p$ in $A_1 \cap P^-$ (resp.,~$A_2 \cap P^+$) with the ratio $|v_1(p)|/|v_2(p)|=\alpha$, if exists, strictly prefers team~$1$ (resp.,~team $2$), so such a participant cannot be exchanged for a Pareto improvement. 
Thus, $A_1\cap A'_2$ (resp., $A_2\cap A'_1$) consists only of participants in $P^+$ (resp., $P^-$).
Because $(A_1\cap A'_2)\cup (A_2\cap A'_1)$ is nonempty, the utility of team~$1$ is lower in $A'$ than in~$A$, which implies that $A'$ cannot be a Pareto improvement.

\paragraph{Case 2}
Suppose that $|v_1(p_i)|/|v_2(p_i)| = \alpha$, $p_i \in P^+\cup P^-$, and $1 \sim_{p_i} 2$.
Every participant $p$ in $A_1 \cap (P^+\cup P^-)$ with the ratio~$\alpha$ weakly prefers team~$1$, while 
every $p$ in $A_2 \cap (P^+\cup P^-)$ with the ratio $\alpha$ weakly prefers team~$2$. 
Note that any participant in $(A_1\cap A'_2) \cup (A_2\cap A'_1)$ likes both teams equally, because participants in $P^+\cup P^-$ with ratio~$\alpha$ and a strict preference are already allocated to their preferred team.
This together with the team-PO of $A$ implies that no participant is better off in $A'$ than in $A$. Thus, $A'$ is not a Pareto improvement.

\paragraph{Case 3}
This case is similar to Case~1.
Suppose that one of the following holds: 
\begin{enumerate}[label=(\roman*)]
\item $|v_1(p_i)|/|v_2(p_i)| > \alpha$;
\item $|v_1(p_i)|/|v_2(p_i)| = \alpha$ and $p_i \in P^+$ with $1 \succ_{p_i} 2$; or
\item $|v_1(p_i)|/|v_2(p_i)| = \alpha$ and $p_i \in P^-$ with $2 \succ_{p_i} 1$. 
\end{enumerate}
We see that every participant $p$ in $A_1 \cap P^+$ (resp., $A_2 \cap P^-$) with the ratio $\alpha$, if exists, strictly prefers team $1$ (resp., team~$2$), so such a participant cannot be exchanged for a Pareto improvement. 
Thus, $A_1\cap A'_2$ (resp., $A_2\cap A'_1$) consists only of participants in $P^-$ (resp., $P^+$).
Because $(A_1\cap A'_2)\cup (A_2\cap A'_1)$ is nonempty, the utility of team~$2$ is lower in $A'$ than in $A$, which implies that $A'$ cannot be a Pareto improvement.\\

In each of the three cases, we arrive at a contradiction. Therefore, we conclude that $A$ is PO.
\end{proof}

Although EF1, PO, and team-PO can be guaranteed simultaneously in the case of two teams, EF1 and participant-PO are already incompatible in this case.

\begin{proposition}
\label{prop:EF1-participant-PO}
Even for two teams with identical valuations and nonnegative-value participants, there does not necessarily exist an EF1 and participant-PO allocation.
\end{proposition}

\begin{proof}
Consider an instance with $n = 2$ and $m = 4$ such that each team has value $1$ for each participant and every participant prefers team~$1$ to team~$2$.
The only participant-PO allocation assigns all participants to team~$1$, but this allocation is not EF1.
\end{proof}

A similar counterexample holds for balanced allocations if we consider participant-PO within the set of balanced allocations.
Specifically, consider an instance with $n = 2$ and $m = 4$ such that each team has value~$0$ for each of two participants who prefer team~$1$, and value $1$ for each of two participants who prefer team~$2$.
The unique allocation that is participant-PO within the set of balanced allocations assigns the first two participants to team~$1$ and the last two participants to team~$2$, but this allocation is not EF1.

Note also that since PO is a stronger notion than individual stability, Proposition~\ref{prop:balanced-IS} implies that we cannot guarantee PO and balancedness simultaneously.
While one could ask whether there always exists a balanced allocation that is EF1 and PO within the set of balanced allocations, such a question remains unsolved even for one-sided preferences.\footnote{\citet{ShoshanSeHa23} explored a related question with category constraints but only addressed the case $n = 2$.}

We now move on to the general setting where the number of teams can be arbitrary.
Unfortunately, even for \emph{nonpositive-value} participants and one-sided preferences, it is unknown whether EF1 and PO can always be satisfied together \citep{EbadianPeSh22,GargMuQi22}.
We therefore restrict our attention to nonnegative-value participants in the remainder of this section.
By building upon a well-known result of \citet{CaragiannisKuMo19}, we can establish the existence of an EF1 and PO allocation.
For any allocation~$A$, its \emph{Nash welfare} is defined as the product $\prod_{i\in T}v_i(A_i)$.
An allocation is said to be a \emph{maximum Nash welfare (MNW) allocation} if it maximizes the Nash welfare among all allocations.\footnote{If the maximum possible Nash welfare is $0$, an MNW allocation should yield nonzero utility to the largest possible number of teams and, subject to that, maximize the product of utilities of these teams.}

\begin{theorem}\label{thm:MNW}
For any instance with nonnegative-value participants, there exists an EF1 and PO allocation.
\end{theorem}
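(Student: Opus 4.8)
The plan is to exhibit a maximum Nash welfare (MNW) allocation, refined according to the players' preferences. Recall that the Nash welfare $\prod_{i\in T}v_i(A_i)$ depends only on the teams' valuations, and that \citet{CaragiannisKuMo19} showed every MNW allocation is EF1 when valuations are additive and nonnegative. Since the EF1 condition involves only the teams' valuations, this guarantee is inherited by \emph{any} MNW allocation irrespective of the player preferences. The only thing left to secure is full PO (which also accounts for player preferences), rather than the mere team-PO that the classical argument delivers.

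The key observation is that any Pareto improvement of an MNW allocation must leave every team's utility unchanged. Indeed, suppose $A'$ Pareto dominates an MNW allocation $A$. Then no team is worse off, so $v_i(A'_i)\ge v_i(A_i)$ for all $i\in T$, whence $\prod_{i\in T}v_i(A'_i)\ge \prod_{i\in T}v_i(A_i)$. By maximality of the Nash welfare this is an equality, which forces $v_i(A'_i)=v_i(A_i)$ for every team~$i$. (In the degenerate case where the maximum Nash welfare is $0$, the same conclusion follows from the refined MNW definition in the footnote: $A'$ can neither enlarge the set of teams receiving positive value nor increase the product over that set.) Consequently, in any such Pareto improvement the party that is made better off must be a player, and no player is worse off.

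With this in hand, I would restrict attention to the set $\mathcal{M}$ of all MNW allocations, which is finite and nonempty, and select an element $A\in\mathcal{M}$ that is maximal with respect to the player-Pareto order within $\mathcal{M}$; that is, $A$ is not player-Pareto dominated by any other allocation in $\mathcal{M}$. Such a maximal element exists by finiteness. I claim this $A$ is PO. Suppose not, and let $A'$ Pareto dominate $A$. By the key observation, $A'$ preserves every team's utility, so $A'\in\mathcal{M}$; moreover the improved party is a player and no player is worse off, so $A'$ player-Pareto dominates $A$ within $\mathcal{M}$, contradicting the maximality of $A$. Finally, since $A\in\mathcal{M}$, it is EF1 by the result of \citet{CaragiannisKuMo19}. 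Hence $A$ is both EF1 and PO.

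The step I expect to require the most care is the degenerate zero Nash welfare case: one must verify that the refined MNW definition still forces any Pareto improvement to preserve each team's utility (both the identity of the teams receiving positive value and the product over them), so that the reduction from PO to player-Pareto-optimality within $\mathcal{M}$ holds uniformly across both the positive and the zero regimes.
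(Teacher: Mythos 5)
Your proof is correct and takes essentially the same approach as the paper: both restrict attention to the set of MNW allocations, invoke \citet{CaragiannisKuMo19} for EF1, and pick an allocation that is Pareto-maximal within that set, using the observation that any Pareto improvement of an MNW allocation preserves every team's utility and hence remains MNW. Your only deviations are minor refinements---selecting a player-Pareto-maximal element rather than a fully Pareto-maximal one within the MNW set, and explicitly verifying the degenerate zero-Nash-welfare case via the footnote's refined definition, which the paper leaves implicit.
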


\begin{proof}
Let $\mathcal{W}$ be the set of all MNW allocations, and let $A$ be an allocation that is PO within~$\mathcal{W}$---such an allocation must exist because otherwise there would be an infinite sequence of Pareto improvements in~$\mathcal{W}$.
It is known that every MNW allocation is EF1 \citep{CaragiannisKuMo19}, so $A$ is EF1.
We claim that $A$ is PO within the set of all allocations.
Suppose to the contrary that there is a Pareto improvement~$A'$ of~$A$.
Since $v_i(A_i') \ge v_i(A_i)$ for all $i\in T$, $A'$ must also be an MNW allocation.
However, this contradicts the assumption that $A$ is PO within $\mathcal{W}$.
\end{proof}

Given \Cref{thm:MNW}, a natural question is whether there exists a polynomial-time algorithm that computes an allocation guaranteed by the theorem.
However, this question is open even for one-sided preferences.\footnote{A pseudopolynomial-time algorithm for this problem  was given by \citet{BarmanKrVa18}.}
Next, we demonstrate that, in two special cases, such an algorithm exists.
The first case is when the teams have \emph{binary valuations}, meaning that each team has value either $0$ or $1$ for each participant.
In this case, it turns out that \Cref{alg:SS-EF1} already computes an EF1 and PO allocation in polynomial time.
(With binary valuations, the set $\widehat{P}^-$ in \Cref{alg:SS-EF1} is empty, so the algorithm can be simplified.)
Note that for one-sided preferences and binary valuations, an MNW allocation can be found in polynomial time \citep{DarmannSc15,BarmanKrVa18-2}, and such an allocation is guaranteed to be EF1 and PO \citep{CaragiannisKuMo19}.

\begin{theorem}
\label{thm:EF1-PO-binary}
For any instance with binary valuations, \Cref{alg:SS-EF1} computes an EF1 and PO allocation in polynomial time.
\end{theorem}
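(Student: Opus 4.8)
Since binary valuations are a special case of nonnegative valuations, EF1 and the polynomial running time are already guaranteed by \Cref{thm:individual-stable}. Moreover, with binary valuations every player belongs to $P^+$, so $\widehat{P}^-$ consists solely of dummy players and the second phase contributes nothing; thus the returned allocation $A$ is exactly the projection (onto the real players) of the matching $\mu^*$ produced by running \Cref{alg:swap-stable-rr} on $\widehat{P}^+ = P \cup D$, where $D$ is the set of $(n-1)|P|+n$ dummies. The only property left to establish is that $A$ is PO, and this is where the binary structure and the dummy padding do the work.

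The first half of the plan is to show that $\mu$ (and hence $A$) maximizes the total team value $\sum_{i\in T} v_i(A_i)$. With $0/1$ weights $w(q,p)=v_{f(q)}(p)$, lexicographically maximizing the weights in the position order $1,2,\dots$ is precisely the greedy algorithm for the transversal matroid whose independent sets are the position-sets that can be simultaneously matched, through value-$1$ edges, to distinct players. Because all elements carry equal weight, this greedy produces a maximum independent set, i.e.\ a maximum-cardinality value-$1$ matching; the abundance of dummies (each of value $0$ to every team) guarantees that every value-$1$ choice the greedy wants to make is feasible, since the remaining value-$0$ positions can always be completed. Consequently $\sum_i v_i(A_i)=\sum_i u_i$ equals the maximum possible total value, where $u_i$ denotes the number of value-$1$ positions assigned to team~$i$. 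In particular $A$ is team-PO: a Pareto improvement can never make one team strictly better off while keeping every team at least as well off, as that would raise the total value. Hence in any Pareto improvement $A'$ of $A$, every team has exactly the same value as in $A$, so the strictly-better-off party must be a (real) player and no player is worse off.

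The second half converts this into a contradiction with \Cref{line:matching-special}. Because $v_i(A'_i)=u_i$ for every $i$, the allocation $A'$ gives each team~$i$ exactly $u_i$ players it values at~$1$ and all of its other players value~$0$; I therefore lift $A'$ to a perfect matching $\nu^*$ of $G^*$ by sending team~$i$'s value-$1$ players to its $u_i$ target-$1$ positions, its value-$0$ real players to target-$0$ positions, and the dummies to the remaining target-$0$ positions. This is feasible because the number of target-$0$ positions left uncovered by real players is $\sum_i\bigl((|P|+1)-|A'_i|\bigr)=n(|P|+1)-|P|=(n-1)|P|+n=|D|$, exactly the number of dummies, each of which (being indifferent, hence of rank~$1$, and of value~$0$) can fill any target-$0$ position. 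Since the rank of each player's team only weakly decreases from $A$ to $A'$ and strictly decreases for at least one real player, while every dummy keeps rank~$1$, the total player-rank of $\nu^*$ is strictly smaller than that of $\mu^*$, contradicting the choice of $\mu^*$ as a rank-minimizing matching. Therefore $A$ admits no Pareto improvement.

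I expect the main obstacle to be the utilitarian-optimality claim of the second paragraph: one must verify rigorously that, thanks to the calibrated dummy count, the lexicographic greedy is never blocked---that no value-$1$ opportunity allowed by the matroid is ruled out by the need to place value-$0$ players at the target-$0$ positions. Once this is secured, the team side freezes and the remaining lifting-and-counting argument against $\mu^*$'s rank-minimality is routine.
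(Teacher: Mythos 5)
Your overall architecture is viable, and your second half---lifting the Pareto improvement $A'$ to a perfect matching $\nu^*$ of $G^*$ (team~$i$'s value-$1$ players to its $u_i$ value-$1$ positions, value-$0$ players and dummies to value-$0$ positions, with the count $\sum_i\bigl((|P|+1)-|A'_i|\bigr)=|D|$) and contradicting the rank-minimality of $\mu^*$ from \Cref{line:matching-special}---is essentially the paper's own closing step, made more explicit. Where you genuinely diverge is in how you freeze the teams' utilities. The paper never proves global utilitarian optimality of $A$; it argues pointwise instead: if $v_{A(p)}(p)=0<1=v_{A'(p)}(p)$ while $A'(p)\succsim_p A(p)$, then swapping $p$ with a dummy player of team $A'(p)$ would be a beneficial swap, contradicting the swap stability of $A^+$ (the same device used for individual stability in \Cref{thm:individual-stable}); summing $v_{A(p)}(p)\ge v_{A'(p)}(p)$ over all $p$ against $\sum_{i}v_i(A_i)\le\sum_i v_i(A'_i)$ forces $v_i(A_i)=v_i(A'_i)$ for every team. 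That route only needs team values to be frozen along Pareto improvements, not against all allocations, and it reuses machinery already established.

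The gap in your version sits exactly where you flagged it, and it is real: your justification of the utilitarian-optimality lemma does not go through as stated. The lexicographic maximization at position $q$ is constrained not only by the weight-$1$ commitments $S_{q-1}$ at earlier positions but also by the weight-$0$ commitments at positions in $[q-1]\setminus S_{q-1}$---a perfect matching witnessing weight $1$ at $q$ must simultaneously give weight $0$ at every earlier $0$-locked position, so the lex-max is not literally matroid greedy until this is shown. ``Abundance of dummies'' does not settle it: there are exactly $|Q|-|P^+|$ dummies, while the number of $0$-locked positions can reach $q-1-|S_{q-1}|$, which exceeds the dummy supply whenever value-$1$ edges are scarce (e.g., many real players valued $0$ by every team); the $0$-locked positions must then also be filled by real players, and your argument gives no reason a compatible assignment exists. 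The lemma is true, but repairing it takes an exchange argument: augment the matching realizing the current locks along its symmetric difference with an independence witness for $S_{q-1}\cup\{q\}$, and observe that the one displaced player $p^*$'s old position $z^*$ can absorb the player $p_0$ freed from $q$, since if $(z^*,p_0)$ were a value-$1$ edge then $S_{z^*-1}\cup\{z^*\}$ would be independent, contradicting the $0$-lock at $z^*$. With that lemma proved---or with the paper's swap-with-dummy pointwise claim substituted for it, which sidesteps the issue entirely---your proof is complete.
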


\begin{proof}
Since EF1 and polynomial-time computability were already shown in the proof of \Cref{thm:individual-stable}, it is sufficient to establish PO.

Let $A$ be the outcome of \Cref{alg:SS-EF1}, and suppose to the contrary that there is a Pareto improvement $A'$ of $A$.
For each participant~$p$, we denote by $A(p)$ and $A'(p)$ the team that $p$ is allocated to in $A$ and $A'$, respectively.
Note that $A'(p)\succsim_p A(p)$ for all $p\in P$ and $v_i(A'_i)\ge v_i(A_i)$ for all $i\in T$.
We claim that $v_{A(p)}(p)\ge v_{A'(p)}(p)$ for each participant $p$.
Indeed, if this is not the case, then  $v_{A(p)}(p)=0$ and $v_{A'(p)}(p)=1$ for some~$p$; we know that  $A'(p)\succsim_p A(p)$ for this~$p$.
However, a similar proof as that for individual stability in \Cref{thm:individual-stable} shows that such a deviation by $p$ from $A(p)$ to $A'(p)$, which hurts neither $p$ nor $A(p)$ and strictly helps $A'(p)$, cannot exist, thereby proving the claim.

Now, since $A'$ is a Pareto improvement of $A$, we have \[\sum_{p\in P}v_{A(p)}(p)=\sum_{i\in T} v_i(A_i)\le \sum_{i\in T}v_i(A'_i)=\sum_{p\in P}v_{A'(p)}(p).\]
Since $v_{A(p)}(p)\ge v_{A'(p)}(p)$ for all $p\in P$, we must have $v_{A(p)}(p) = v_{A'(p)}(p)$ for all $p\in P$ and $v_i(A_i) = v_i(A_i')$ for all $i\in T$.
Thus, we can construct a better matching than $\mu^*$ in \Cref{alg:swap-stable-rr} on $\widehat{P}^+$ (\Cref{line:SS+} of \Cref{alg:SS-EF1}) by a round-robin sequence in which each team $i$ picks participants in $A'_i$ as early as possible, because the Pareto improvement makes no participant worse off and at least one participant strictly better off.
However, this contradicts the definition of $\mu^*$.
\end{proof}

Next, we focus on the case of three teams with identical valuations and specific participant preferences.

\begin{algorithm*}[!htb]
\caption{For computing an EF1 and PO allocation for three teams with the conditions of \Cref{thm:three-teams}}\label{alg:three-teams}
\For{$i\ot 1,2,3$}{
Assign each participant with zero value of type $S_i$ to team $i$ (where $S_i$ denotes the type of participants who prefer team $i$)\;
} 
\While{there is at least one unassigned participant}{
Let $i$ be a team with the least current value. If there is more than one such team, choose a team~$i$ for which there is at least one unassigned participant of type~$S_i$, if possible\; \label{line:team-tiebreak}
\lIf{there is an unassigned participant of type~$S_i$}{Assign any such participant to team~$i$}
\lElseIf{there is only one type of participants left}{Assign any remaining participant to team~$i$}
\lElse{Denote the other two types by $S_j$ and $S_k$. Let $f(S_j)$ be the total value that team~$j$ would receive if all unassigned participants of type $S_j$ were assigned to it (in addition to the already assigned participants in team~$j$), and define $f(S_k)$ analogously for team $k$. Assign a participant of the type with the higher $f$-value to team $i$, breaking ties between types arbitrarily and breaking ties among participants in favor of higher-value participants\label{line:participant-tiebreak}}
}
\Return the current allocation $(A_1,A_2,A_3)$\;
\end{algorithm*}

\begin{theorem}
\label{thm:three-teams}
Suppose that there are $n = 3$ teams with identical valuations, all participants yield nonnegative value, and each participant prefers one team and is indifferent between the other two teams.
Then, there exists an algorithm that computes an EF1 and PO allocation in polynomial time.
\end{theorem}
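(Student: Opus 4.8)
The plan is to design a greedy, value-balancing algorithm tailored to this restricted structure and argue both EF1 and PO directly. Since the three teams share identical valuations, I will write $v \coloneqq v_1 = v_2 = v_3$, and since each player strictly prefers exactly one team and is indifferent between the other two, I can partition the players into three \emph{types} $S_1, S_2, S_3$, where $S_i$ consists of the players whose favorite is team~$i$. The key observation is that a player of type $S_i$ is ``neutral'' with respect to being placed in team~$j$ or team~$k$ (the two non-favorites), so the only preference-relevant decision is whether a type-$S_i$ player lands in team~$i$ or not. This decouples the fairness problem (equalizing the teams' values, which depend only on $v$) from the preference problem (placing each player in her favorite team when possible).

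First I would handle the zero-value players: assign every zero-value player of type $S_i$ to team~$i$ outright. This is clearly optimal for that player and harmless for every team, so it cannot interfere with either EF1 or PO. For the remaining positive-value players I would run a greedy loop that repeatedly picks a team $i$ of \emph{minimum current value} and assigns it a player, preferring to give team~$i$ an unassigned player of its own type $S_i$ whenever possible (this is the intent of the tie-breaking in \Cref{line:team-tiebreak} and the first conditional of \Cref{alg:three-teams}). When team~$i$ has no more type-$S_i$ players available, I must assign it a player of a foreign type; here I would use the $f$-value rule, assigning a player of the foreign type $S_j$ whose \emph{remaining reserve} $f(S_j)$ is larger — intuitively, committing the ``more plentiful'' foreign type to team~$i$ now, so that the scarcer foreign type is kept in reserve to satisfy its own home team later.

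For EF1, I would argue along the standard envy-cycle/round-robin lines: because the algorithm always feeds the currently poorest team, after each assignment the value gap between any two teams is bounded by the value of a single player, and the ``greedily give to the minimum'' discipline guarantees that any envy team~$j$ has toward team~$i$ can be eliminated by deleting one player from $A_i$. For PO, I would use the same structural idea as in \Cref{thm:MNW} and \Cref{thm:EF1-PO-binary}: since the valuations are identical and additive, any Pareto improvement $A'$ cannot increase the total value $\sum_i v(A_i) = v(P)$, which is fixed; hence a Pareto improvement can only reshuffle players among the non-favorite teams to make some players happier while keeping every team's value unchanged. I would then show that the $f$-value rule precisely rules this out — the greedy reserve strategy ensures that every player who \emph{could} be moved to her favorite team without lowering any team's value has already been placed there, so no strictly-improving player reassignment that preserves all team values exists.

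The main obstacle is the PO argument in the foreign-assignment step, specifically proving that the $f$-value tie-breaking leaves no ``room'' for a value-preserving preference improvement. The delicate point is that a Pareto improvement could simultaneously reroute several players of different types through the three teams in a cycle — e.g.\ a type-$S_2$ player in team~$1$ moving to team~$2$ while a type-$S_1$ player in team~$2$ moves to team~$1$ — keeping all three team values exactly equal while strictly helping both players. I would need to show that whenever the algorithm is forced to place a foreign player in team~$i$, it is because team~$i$'s own type is exhausted \emph{and} the home teams of the available foreign types cannot absorb them at equal value; the reserve comparison via $f(S_j)$ versus $f(S_k)$ is what certifies that the scarcer type was correctly prioritized for its home team, so that no such value-neutral cycle of beneficial moves remains. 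Formalizing this as an exchange argument — assuming a value-preserving Pareto improvement and deriving a contradiction with the last foreign assignment made by the algorithm — is where the real work lies.
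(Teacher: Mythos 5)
You have reconstructed the paper's algorithm faithfully (feed the minimum-value team, prefer its own type, otherwise use the $f$-value rule) and your EF1 sketch matches the paper's, but your proposal stops exactly where the real proof has to start: you explicitly defer the PO argument (``formalizing this as an exchange argument \ldots is where the real work lies''), and the intuition you offer for the $f$-rule does not by itself dispose of the value-preserving multi-team reshuffles that you yourself flag as the obstacle. The paper closes this gap in two steps that are absent from your proposal. First, a structural reduction: assume WLOG that the types are exhausted in the order $S_1,S_2,S_3$; then team~$2$ can only ever receive players of types $S_2,S_3$ and team~$3$ only of type $S_3$. In any Pareto improvement $A'$ all team values are preserved (identical additive valuations fix the total), and since $A_3$ contains only type-$S_3$ players with $v(A_3)=v(A_3')$, one gets $A_3'=A_3$. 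Your feared three-team cycle therefore collapses to a two-team exchange: type-$S_3$ players move from $A_2$ to $A_1'$ against players from $A_1$ to $A_2'$ (at least one of type $S_2$) of equal total value. Second, a quantitative lemma that kills this exchange outright: \emph{every} type-$S_2$ player in $A_1$ has value strictly greater than the \emph{total} value of all type-$S_3$ players in $A_2$, so the two sides of the exchange cannot balance.

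The proof of that lemma uses both tie-breaking rules of \Cref{alg:three-teams}, one of which your proposal never mentions. Let $p$ be the last type-$S_2$ player assigned to team~$1$. Since $S_3$ outlasts $S_2$, a type-$S_3$ player was available at that moment, so the $f$-rule gives $f(S_2)\ge f(S_3)$ (note the paper's $f(S_j)$ includes team~$j$'s already-assigned value, not just the unassigned ``reserve'' as in your description), and after $p$'s assignment $f(S_3)-f(S_2)\le v(p)$. The quantity $f(S_3)$ only decreases until the first type-$S_3$ player $\widehat{p}$ is assigned to team~$2$; at that moment the team tie-break in \Cref{line:team-tiebreak} forces $v(A_2'')<v(A_3'')$ for the partial allocation $A''$, and $v(A_2'')=f(S_2)$ since only type-$S_3$ players remain, whence the total $S_3$-value ever entering team~$2$ is at most $f(S_3)-v(A_3'')<f(S_3)-f(S_2)\le v(p)$. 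The final step---that every type-$S_2$ player in $A_1$ has value at least $v(p)$---relies on the rule of breaking ties among players in favor of higher-value players, which is missing from your account of the algorithm; without it, an earlier, cheaper type-$S_2$ player could sit in $A_1$ and the lemma (hence the PO argument) breaks. So the gap is genuine: the decisive exchange argument is left as a plan, and the plan as stated omits a load-bearing ingredient.
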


\begin{proof}
The algorithm is shown as \Cref{alg:three-teams}, where for $i\in[3]$, we denote by $S_i$ the type of participants who prefer team~$i$.
It is clear that the algorithm runs in polynomial time.
Since the algorithm always assigns a participant to a team~$i$ with the least current value, no other team envies~$i$ by more than one participant at this point.
Hence, the same is true for all pairs of teams during the entire execution of the algorithm, which means that the returned allocation is EF1.\footnote{Alternatively, the algorithm can be seen as a special case of \citet{LiptonMaMo04}'s \emph{envy cycle elimination algorithm} for identical valuations.}

We now show that the allocation is PO.
Assume without loss of generality that the types run out in the order $S_1,S_2,S_3$.
In particular, team~$1$ may receive participants of all three types, team~$2$ may only receive participants of type $S_2$ and $S_3$, and team~$3$ may only receive participants of type~$S_3$.
Suppose for contradiction that there exists a Pareto improvement $A' = (A_1',A_2',A_3')$ of the output allocation~$A$.
Denoting the common team valuation by $v$, we have $v(A_i) = v(A_i')$ for all $i\in\{1,2,3\}$.
Since all participants with zero value are already with their preferred team in $A$, they must remain with their team in $A'$.
Moreover, since $A_3$ only contains participants of type~$S_3$ and $v(A_3) = v(A_3')$, it must be that $A_3 = A_3'$.
So, from $A$ to $A'$, some participants of type $S_3$ are moved from $A_2$ to $A_1'$, while some participants of type $S_2$ or $S_3$ (at least one participant of type $S_2)$ are moved from $A_1$ to $A_2'$, where both sets of participants have the same total value.
We will show that every participant of type~$S_2$ in~$A_1$ has a strictly larger value than the total value of all participants of type~$S_3$ in $A_2$.
This is sufficient to obtain the desired contradiction.

Consider the moment when the algorithm assigns the last participant $p$ of type~$S_2$ to team~$1$.
Since participants of type~$S_3$ run out after those of type~$S_2$, there is at least one participant of type~$S_3$ available at this moment.
The choice of the algorithm to assign a participant of type~$S_2$ to team~$1$ implies that $f(S_2) \ge f(S_3)$.
After $p$'s assignment, $f(S_2)$ decreases by $v(p)$, so it holds that $f(S_3) - f(S_2) \le v(p)$ at this point.
Now, because participants of type~$S_2$ have not run out before $p$'s assignment, the first assignment of a participant $\widehat{p}$ of type~$S_3$ to team~$2$ must occur after $p$'s assignment.
Between $p$'s assignment and $\widehat{p}$'s assignment, some participants of type~$S_3$ may be assigned to team~$1$---this only decreases $f(S_3)$.
Hence, directly before $\widehat{p}$'s assignment, we still have $f(S_3) - f(S_2) \le v(p)$. 
Moreover, at this point, the partial allocation $A''$ satisfies $v(A''_2) < v(A''_3)$ (if $v(A''_2) = v(A''_3)$, the algorithm should have assigned $\widehat{p}$ to team~$3$ due to the tie-breaking rule in \Cref{line:team-tiebreak}), and only participants of type~$S_3$ are left, i.e., $v(A''_2) = f(S_2)$.
Therefore, the total value of participants of type~$S_3$ assigned to team~$2$ is at most $f(S_3) - v(A''_3) < f(S_3) - v(A''_2) =  f(S_3) - f(S_2) \le v(p)$, i.e., this value is strictly less than $v(p)$.
On the other hand, by the tie-breaking on participants (\Cref{line:participant-tiebreak}), every participant of type~$S_2$ assigned to team~$1$ has value at least $v(p)$.
This completes the proof.
\end{proof}

Finally, we provide a pseudopolynomial-time algorithm for the case where the number of teams is constant.

\begin{theorem}
\label{thm:const}
For any instance with a constant number of teams, each of which has a nonnegative integer value for each participant,
an EF1 and PO allocation can be computed in pseudopolynomial time.
\end{theorem}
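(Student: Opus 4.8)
The plan is to reduce the problem to computing a maximum Nash welfare (MNW) allocation that, among all allocations inducing the same team-utility profile, minimizes the total rank of the players, and then to realize this via a pseudopolynomial dynamic program over team-utility profiles.

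First I would settle correctness. By \Cref{thm:MNW} together with the fact (from \citet{CaragiannisKuMo19}) that every MNW allocation is EF1, any MNW allocation is automatically EF1, so EF1 comes for free and the only task is to enforce PO. The key structural observation is that any Pareto improvement of an MNW allocation must leave every team's utility unchanged. Indeed, if $A'$ Pareto-dominates an MNW allocation $A$, then $v_i(A'_i)\ge v_i(A_i)$ for all teams $i$; since $A$ already maximizes the Nash welfare, these inequalities must all be equalities (when the maximum welfare is positive this is immediate, and when it is $0$ the same conclusion follows from the lexicographic definition in the footnote, because a team that is nonzero in $A$ stays nonzero in $A'$ while the number of nonzero teams and the product over them are already maximal). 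Hence $A'$ realizes the same team-utility profile $\mathbf{u}^\ast=(u_1,\dots,u_n)$ as $A$, and its strict improvement must benefit some player, i.e.\ $A'(p)\succsim_p A(p)$ for all $p$ with at least one strict relation. Consequently $\sum_{p}\mathrm{rank}_p(A'(p))<\sum_{p}\mathrm{rank}_p(A(p))$. Therefore, if $A$ is chosen to minimize the total player rank among all allocations with profile $\mathbf{u}^\ast$, no such $A'$ can exist, so $A$ is PO. Thus it suffices to (i) identify an MNW team-utility profile and (ii) realize it by a rank-minimizing allocation.

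Next I would set up the dynamic program. Since every $v_i(p)$ is a nonnegative integer, each team's utility lies in $\{0,1,\dots,V\}$ with $V\coloneqq\sum_{p\in P}\max_{i\in T}v_i(p)$, so there are at most $(V+1)^n$ team-utility profiles, which is pseudopolynomial because $n$ is constant. Processing the players one at a time, I maintain a table indexed by the partial utility vector accumulated so far, storing for each reachable vector the minimum total rank incurred; a transition assigns the next player to one of the $n$ teams, adding its value to the corresponding coordinate and its rank to the stored cost. After all $m$ players are processed, the table records, for every achievable full profile, the minimum total rank witnessing it (with back-pointers for reconstruction). I then evaluate the Nash welfare (lexicographically, to cover the zero case) of each achievable profile, select any MNW-optimal profile $\mathbf{u}^\ast$, and output its rank-minimizing allocation. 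The running time is $O\!\left(m\,n\,(V+1)^n\right)$, which is pseudopolynomial for constant $n$.

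I expect the main obstacle to be the correctness argument rather than the dynamic program: specifically, pinning down that a Pareto improvement of an MNW allocation cannot change any team's utility, and handling this cleanly in the degenerate case where the maximum Nash welfare is $0$ (where one must argue via the number of teams receiving positive utility and the product over those teams). Once that structural lemma is in place, the reduction of PO to rank-minimization over a fixed profile, together with the pseudopolynomiality of the profile-indexed dynamic program, are both routine.
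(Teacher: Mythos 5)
Your proposal is correct and follows the same high-level strategy as the paper's proof of \Cref{thm:const}: a pseudopolynomial dynamic program over team-utility vectors (at most $(1+m\cdot v_{\max})^n$ of them, so pseudopolynomial for constant $n$), selection of a profile attaining maximum Nash welfare so that EF1 follows from \citet{CaragiannisKuMo19}, and an optimization over the players' preferences \emph{within} that profile so that PO follows from the fact that a Pareto improvement of an MNW allocation must preserve every team's utility. The one substantive difference is the player-side objective: the paper's table entry $H(\bm{u},j)$ stores an allocation maximizing the players' happiness \emph{lexicographically} in reverse player order, whereas you minimize the \emph{total sum of ranks}. Both certify PO---under your scalarization, a player-side Pareto improvement strictly decreases the rank sum (by transitivity of the weak preferences, a strictly better team has strictly smaller rank and an equally preferred team the same rank), while under the paper's objective it yields a lexicographically better allocation. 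Your additive objective is arguably a cleaner fit for the DP, since table-filling becomes a plain shortest-path computation storing one number plus a back-pointer per state, avoiding the paper's need to store and lexicographically compare partial allocations; you also spell out the structural lemma (including the degenerate zero-welfare case, argued via the number of positive-utility teams and the product over them), which the paper compresses into ``similarly to the proof of \Cref{thm:MNW}.'' One point worth making explicit in a final write-up: EF1 involves the cross-utilities $v_i(A_j)$, not just the profile $(v_i(A_i))_{i\in T}$; it nonetheless comes for free because the Nash welfare (including its lexicographic zero-case refinement) depends only on the profile, so \emph{any} allocation realizing the MNW-optimal profile is itself an MNW allocation and hence EF1.
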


\begin{proof}
Let $v_{\max}\coloneqq\max_{i\in T,\,p\in P}v_i(p)$.
We construct a table $H$ which classifies all possible utility vectors for teams that can be attained by allocating the first $j$ participants $p_1,\dots,p_j$.
The entry $H(\bm{u},j)$ indicates whether there exists an allocation $A$ of participants $p_1,\dots,p_j$ such that $\bm{u}=(v_1(A_1),\dots,v_n(A_n))$.
Moreover, if there exists such an allocation, $H(\bm{u},j)$ is an allocation that maximizes the participants' happiness lexicographically with respect to the reverse participant order (i.e., maximizes participant $p_j$'s happiness, then maximizes participant $p_{j-1}$'s happiness, and so on) among such allocations.
Note that the utility of a team for an allocation is an integer belonging to the range $[0,\, m\cdot v_{\max}]$. Hence, the size of the table $H$ is $O(m\cdot (1+m\cdot v_{\max})^n)$, which is pseudopolynomial when $n$ is a constant.
We can fill in the entries of the table according to the following recursive formula, where $\chi_i$ denotes the $i$th unit vector of length $n$, that is, the $k$th coordinate is $1$ if $k=i$ and $0$ otherwise.
\begin{itemize}
    \item For $j=0$, the entry $H(\bm{u},j)$ is $(\emptyset,\dots,\emptyset)$ if $\bm{u}=(0,\dots,0)$, and $\bot$ otherwise.\footnote{The symbol $\bot$ stands for a ``null'' value.}
    \item For $j=1,2,\dots,m$, the entry $H(\bm{u},j)$ is $\bot$ if $H(\bm{u}-v_i(p_j)\cdot\chi_{i},\,j-1)=\bot$ for all $i\in T$.
    Otherwise, let $i^*$ be a team that $p_j$ prefers the most among the teams $i$ such that $H(\bm{u}-v_i(p_j)\cdot\chi_{i},\,j-1)\ne\bot$. If there are multiple such teams, we select a team that yields a lexicographically optimal allocation for $p_1,\dots,p_{j-1}$ with respect to the reverse participant order. Then, the entry is the allocation such that $p_1,\dots,p_{j-1}$ are allocated as in $H(\bm{u}-v_{i^*}(p_j)\cdot\chi_{i^*},\,j-1)$ while $p_j$ is allocated to $i^*$.
\end{itemize}
The entries $H(\bm{u},j)$ can be computed in $O(nm)$ time each in a bottom-up manner, so we can construct the table $H$ in $O(nm^2\cdot (1+m\cdot v_{\max})^n)$ time.
Now, by using the table, we can pick a utility vector $\bm{u}^*$ that corresponds to an MNW allocation (of all $m$ participants).
Similarly to the proof of \Cref{thm:MNW}, we can then conclude that the allocation $H(\bm{u}^*,m)$ is EF1 and PO. 
\end{proof}

\section{Justified Envy-Freeness}
\label{sec:justifiedEF}

In this section, we consider justified EF.
Note that if $m = n$ and all participants yield nonnegative value to every team, the existence of an EF1 and justified EF allocation follows from the celebrated result in two-sided matching of \citet{GaleSh62} (with arbitrary tie-breaking).
We show that, perhaps surprisingly, this guarantee cannot be extended to the case $m > n$.

\begin{proposition}
\label{prop:justifiedEF-counterexample}
Even for two teams and nonnegative-value participants who prefer the same team, there does not necessarily exist an EF1 and justified EF allocation.
\end{proposition}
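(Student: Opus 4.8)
The plan is to construct an explicit small instance with two teams and nonnegative-value players—all of whom prefer the same team—and then exhaustively verify that every EF1 allocation admits justified envy. Because the statement allows me complete freedom over the valuations and preferences (subject to nonnegativity and a shared favorite team), the natural approach is to engineer a tension between two requirements. Justified EF will force a highly-valued player to be placed in the popular team, while EF1 will force the popular team to give up something, creating a configuration where the displaced player has justified envy. I would aim for the smallest number of players that makes this tension unavoidable, likely $m=3$ or $m=4$, since with $m=2$ the Gale--Shapley guarantee is close at hand.

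First I would fix the two teams, say team~$1$ (the commonly preferred team) and team~$2$, with every player satisfying $1\succ_p 2$. Then I would choose the team valuations so that team~$1$ and team~$2$ disagree sharply on the players' worths. A promising design: give one player a very high value to team~$1$ but low value to team~$2$, and give the remaining players moderate values. Since all players prefer team~$1$, justified envy arises whenever a player $p$ in team~$2$ is such that team~$1$ holds some player $q$ with $v_1(p)>v_1(q)$; that is, team~$1$ must not be holding anyone strictly less valuable (to team~$1$) than a player sitting in team~$2$. Combined with EF1 on the team side—which caps how far team~$2$'s bundle can fall short of team~$1$'s after removing one player—this pins down the feasible allocations to a very short list.

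The key steps, in order, are: (i)~write down the instance (valuations for both teams and the shared strict preference for team~$1$); (ii)~characterize the justified-EF condition as the requirement that the set of players in team~$1$ are exactly the ones team~$1$ values most highly—i.e., no player in team~$2$ outranks (in team~$1$'s valuation) any player in team~$1$; (iii)~enumerate the small number of allocations that are justified EF and check each one against EF1 for the team side, showing each fails; and (iv)~conversely enumerate the EF1 allocations and exhibit, in each, a player in team~$2$ whose team-$1$ value exceeds that of some player held by team~$1$, witnessing justified envy. Because $n=2$, checking EF1 reduces to the two inequalities $v_1(A_1\setminus X)\ge v_1(A_2\setminus Y)$ and $v_2(A_2\setminus X')\ge v_2(A_1\setminus Y')$ with single-element removals, and with nonnegative values these simplify further since one need only delete from the envied side.

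The main obstacle will be tuning the numerical values so that the two constraints are genuinely incompatible rather than merely tight: I need team~$1$'s EF1 envy toward team~$2$ to force it to concede a player that justified EF insists it retain, while simultaneously ensuring team~$2$'s EF1 constraint and team~$2$'s own valuation do not accidentally open up an escape allocation. Getting the value gaps right—so that removing any single player cannot repair team~$1$'s envy without shifting a high-$v_1$ player into team~$2$—is the delicate part, and I would verify robustness by making the case analysis explicit over all partitions of the (few) players. Once a working instance is found, the write-up is just the enumeration, so the creative effort is concentrated entirely in the instance design.
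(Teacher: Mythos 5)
Your proposal correctly identifies the mechanism the paper exploits---since every player strictly prefers team~$1$, justified EF is exactly the condition that team~$1$'s bundle is a top set under $v_1$ (no player in team~$2$ with strictly higher $v_1$-value than some player in team~$1$), and the counterexample must make every such ``prefix'' allocation fail EF1. But as it stands the proposal has a genuine gap: it is a search plan, not a proof. The entire mathematical content of this proposition is the concrete instance, and you never exhibit one; you explicitly defer ``tuning the numerical values'' as the ``main obstacle,'' which is precisely the step that needs to be done and verified. Nothing in the proposal rules out the possibility that no such instance exists---indeed, your candidate $m=3$ provably cannot work: with three nonnegative-value players all preferring team~$1$, assigning team~$1$ the single $v_1$-maximal player $p_1$ is always justified EF (by maximality) and always EF1, since $v_1(p_1)\ge v_1(\{p_2,p_3\})-\max\{v_1(p_2),v_1(p_3)\}$ and team~$2$ cannot envy a singleton bundle by more than one player. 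So the delicate tuning you postponed is not routine bookkeeping; it is the crux, and $m=4$ is in fact the minimum size at which the tension becomes unavoidable.

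For comparison, the paper's proof is exactly the completed version of your plan: $m=4$, team~$1$ values $3,3,2,2$, team~$2$ values $1,1,0,0$, all players strictly preferring team~$1$. The argument then runs in your direction~(iv): EF1 for team~$2$ (which has value only for $p_1,p_2$) forces team~$2$ to receive one of $p_1,p_2$, say $p_2$; EF1 for team~$1$ then forces team~$1$ to hold one of $p_3,p_4$, say $p_3$; but then $p_2$ in team~$2$ has justified envy toward $p_3$ since $v_1(p_2)=3>2=v_1(p_3)$. Note the role of team~$2$'s valuation here, which your sketch underplays: the values $1,1,0,0$ are what make team~$2$'s EF1 constraint bite (team~$2$ must get one of the two high-$v_1$ players), while the gap $3$ versus $2+2$ makes team~$1$'s EF1 constraint force a concession. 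To turn your proposal into a proof you would need to supply such numbers and carry out the short case analysis you outline in steps~(iii)--(iv); with the characterization in your step~(ii) this is only a handful of allocations to check, but until it is done the claim is unproven.
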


\begin{proof}
Consider an instance with $n = 2$ and $m = 4$ such that team~$1$ has value $3,3,2,2$ for the four participants, respectively, team~$2$ has value $1,1,0,0$, respectively, and all participants strictly prefer team~$1$ to team~$2$.
Team~$2$ needs at least one of $p_1$ and $p_2$ in order for EF1 to be fulfilled---assume without loss of generality that it receives $p_2$.
Given this, team~$1$ needs at least one of $p_3$ and $p_4$ in order for EF1 to be fulfilled---assume without loss of generality that it receives $p_3$.
However, this results in $p_2$ having justified envy toward $p_3$.
\end{proof}

In fact, the incompatibility in Proposition~\ref{prop:justifiedEF-counterexample} persists even if EF1 is weakened to \emph{envy-freeness up to $k$ participants (EF$k$)} for any fixed~$k$, where $k$ participants may be removed (rather than just one participant) in order to eliminate the envy.
To see this, consider a similar instance with $n = 2$ and $m = 4k$ such that
\begin{itemize}
\item $v_1(p_j) = 3$ and $v_2(p_j) = 1$ for $j\in[k+1]$;
\item $v_1(p_j) = 2$ and $v_2(p_j) = 0$ for $j\in\{k+2,\dots,4k\}$;
\item $1\succ_{p_j} 2$ for $j\in[4k]$.
\end{itemize}
Team~$2$ needs at least one of $p_1,\dots,p_{k+1}$ for EF$k$ to be satisfied. 
Given this, in order to avoid justified envy, team~$2$ must receive all of $p_{k+2},\dots,p_{4k}$ as well.
However, this results in EF$k$ from team~$1$ toward team~$2$ being violated.

Can we efficiently determine whether an EF1 and justified EF allocation exists in a given instance?
The following theorem gives a negative answer, provided that P $\ne$ NP.

\begin{table*}[!tb]
	\footnotesize

	\centering
	\begin{tabular}{llcccc}
		\toprule
        & & participant $a$ & participant $b$ & participant $c_j$ $(c_j \in C)$ & participant $p_w$ $(w \in V)$\\ 
        \midrule
        \multicolumn{2}{l}{Preferences} & $s \succ d \succ \cdots $ & $s \succ d \succ \cdots $ &  arbitrary & $s \succ t_{\phi(w)} \succ \cdots $\\  
        \midrule
        &team $s$ & $1$ & $1$ & $1- \varepsilon$ &  $1$\\  
        Valuations&team $t_e$ ($e \in E$) & $0$ & $0$ & $0$ &  $1$ or $0$ (see caption)\\  
        &team $d$ & $1$ & $1$ & $\varepsilon$ &  $\varepsilon$\\  
		\bottomrule
	\end{tabular}

\caption{Participant preferences and team valuations in the proof of \Cref{thm:justifiedEF-hardness}.
Team~$t_e$ has value~$1$ for participant~$p_w$ if $w$ is an endpoint of $e$, and $0$ otherwise.
}
\label{table:team:valuation}
\end{table*}

\begin{theorem}
\label{thm:justifiedEF-hardness}
Even for nonnegative-value participants with strict preferences, deciding whether there exists an EF1 and justified EF allocation is NP-complete.
\end{theorem}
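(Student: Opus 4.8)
The plan is to prove NP-completeness by establishing membership in NP and then giving a polynomial-time reduction from a known NP-complete problem. Membership is immediate: given a candidate allocation, both EF1 and justified EF can be verified in polynomial time by checking all pairs of teams and all pairs of players respectively, so the certificate is the allocation itself. The substance lies in the hardness direction, and the table of valuations already shown (Table~\ref{table:team:valuation}) dictates the structure: there is a special ``source'' team~$s$, a team~$d$ that acts as a dumping ground, one team $t_e$ for each edge $e \in E$ of some graph, vertex-players $p_w$ for $w \in V$, clause-or-gadget players $c_j \in C$, and two distinguished players $a, b$. This strongly suggests a reduction from a graph problem such as \textsc{Vertex Cover} or \textsc{Independent Set}, where team~$s$ can only be ``satisfied'' (kept EF1-happy relative to~$d$) when exactly the right set of vertex-players is withheld from it.

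First I would pin down the source of the hardness. The two players~$a$ and~$b$ both value $s$ highest and yield value~$1$ to both $s$ and~$d$; their role is to force an EF1 tension at team~$s$, since $s$ receiving too few high-value players will envy~$d$ by more than one player, while players assigned away from their top choice~$s$ create justified-envy pressure. The vertex-players~$p_w$ prefer $s$ first and then the specific edge-team $\phi(w)$; the edge-team $t_e$ values $p_w$ at~$1$ exactly when $w$ is an endpoint of~$e$ and~$0$ otherwise. I expect the intended encoding to be that an EF1-and-justified-EF allocation exists iff the graph admits a vertex cover (or independent set) of a prescribed size: each edge-team $t_e$ must be ``covered'' by receiving an endpoint vertex-player, and the justified-envy constraint toward~$s$ (together with the $\varepsilon$-perturbed clause players~$c_j$, which tune the exact EF1 thresholds) forces the selected vertices to form a cover of bounded size.

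The key steps, in order, would be: (1)~fix the reduction source and the target graph/budget problem, writing out the full instance per Table~\ref{table:team:valuation} including how $\varepsilon$ is chosen (small enough that the $\varepsilon$-terms only break ties and never override integer-value comparisons); (2)~prove the forward direction, that a valid combinatorial solution (e.g.\ a vertex cover of the right size) yields an allocation that is simultaneously EF1 and justified EF, by explicitly assigning vertex-players to cover their edge-teams, routing the remainder and the clause-players~$c_j$ to~$d$, and verifying both properties hold; and (3)~prove the converse, that any EF1 and justified EF allocation induces a combinatorial solution, by arguing that justified envy toward~$s$ forces each~$p_w$ assigned away from~$s$ to go to its edge-team $\phi(w)$ (not to~$d$, which $s$ envies), that EF1 at~$s$ versus~$d$ bounds how many such players may be withheld, and that each edge-team needs a covering endpoint.

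The main obstacle will be step~(3), the converse direction: I must show that the justified-EF and EF1 constraints are tight enough to \emph{force} the desired combinatorial structure, with no ``cheating'' allocations slipping through. Concretely, the delicate point is ruling out allocations in which vertex-players are dumped onto~$d$ or onto the wrong edge-team: the justified-envy condition ($j \succ_p i$ together with $v_j(p) > v_j(\text{someone in }A_j)$) and the EF1-at-$s$ condition must interlock so that the only feasible way to relieve $s$'s envy of~$d$ is to supply edge-teams with genuine endpoints. Getting the arithmetic of the~$\varepsilon$ terms and the clause-players' count exactly right, so that the EF1 threshold at~$s$ corresponds precisely to the cover-size budget and nothing weaker suffices, is where the reduction's correctness truly rests.
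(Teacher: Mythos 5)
Your high-level guess matches the paper---a reduction from \textsc{Independent Set}/\textsc{Vertex Cover} in which a gadget forces team~$s$'s bundle to encode the solution---but two of your concrete steps would fail, and the mechanism that actually makes the reduction work is missing. First, your covering framing is backwards. In the paper's construction the edge teams do \emph{not} need to receive endpoint players; the binding constraint is the complementary one: team~$s$ must not hold \emph{both} endpoints of any edge $e$, since otherwise $t_e$, which positively values only those two players, envies $s$ by more than one player. In the yes-direction allocation, every edge team $t_e$ with $e \neq \phi(w)$ for all unselected $w$ receives no vertex player at all, and EF1 still holds for it. A proof requiring each $t_e$ to be covered cannot work in any case: the reduction arranges $|E| \ge |V|$, so there are not enough vertex players to cover all $|E|$ edge teams. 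Second, your forward direction (``routing the remainder and the clause-players $c_j$ to $d$'') breaks EF1 at $s$: the $(k+1)(\tau-1)+1$ players of $C$ are each worth $1-\varepsilon$ to $s$, so dumping them all on $d$ gives $d$ a bundle worth roughly $(k+1)(\tau-1)(1-\varepsilon)$ from $s$'s viewpoint while $s$ holds value only $k+1$; no single removal repairs that. The paper instead spreads $C$ evenly, $k+1$ players to every team other than $s$ (with one team $t_{e^*}$ getting $k+2$), precisely so that $s$'s envy toward each team is removable by one player.

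The central missing idea is how the gadget forces $s$ to take at least $k$ vertex players in the converse direction---the step you flag as the crux but leave unresolved. The paper's chain is: (i) EF1 at $d$ together with the justified envy of $a$ and $b$ (who rank $d$ second and are worth $1 > \varepsilon$ to $d$, while everything else is worth at most $(\eta-2)\varepsilon < 1$ to $d$) forces at least one of $a,b$ onto $d$; (ii) with, say, $b$ at $d$, no $c_j$ can sit at $s$, since $b$ would have justified envy toward it ($v_s(b)=1 > 1-\varepsilon = v_s(c_j)$); (iii) pigeonhole: the $(k+1)(\tau-1)+1$ players of $C$ occupy at most $\tau-1$ teams, so some team holds at least $k+2$ of them, a bundle worth $(k+2)(1-\varepsilon)$ to $s$; (iv) EF1 then forces $s$ to obtain value greater than $k-1$ from outside the gadget (it gets at most $1$ from $\{a,b\}\cup C$), hence at least $k$ vertex players, and independence follows from the edge-team constraint above. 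Note also that, contrary to your sketch of step~(3), justified EF does \emph{not} force an unselected $p_w$ to its edge team $t_{\phi(w)}$; nothing about the placement of unselected vertex players needs to be forced in the converse direction. The injection $\phi\colon V \to E$ matters only in the yes-direction, where it parks each unselected $p_w$ at a team that values it, one per team. Without steps (i)--(iv) and with the covering and dumping errors above, the reduction's correctness does not go through.
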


\begin{proof}
The problem belongs to NP since it can be verified in polynomial time whether a given allocation is EF1 and justified EF. 
We prove the NP-hardness by reducing from {\sc Independent Set}, the NP-complete problem of deciding whether a graph $G$ admits an independent set of size~$k$, i.e., a set of $k$ vertices no two of which are connected by an edge \citep[p.~194]{GareyJo79}.
Consider an instance $(G,k)$ of {\sc Independent Set}, where $G=(V,E)$. 
Without loss of generality,\footnote{If $k = 1$ or $|V| \le 3$, the problem is trivial. If $|V| > |E|$, we repeatedly remove an isolated vertex and decrease $k$ by $1$ each time until $|V| \le 2|E|$, then add an isolated clique of size $|V|$ and increase $k$ by $1$.} assume that $k \geq 2$ and $|E| \geq |V| \ge 4$. 
We construct an instance of our problem as follows. 

We set $\tau \coloneqq 2+|E|$, which will be the number of teams in our instance, and $\eta \coloneqq |V|+(k+1)(\tau-1)+3$, which will be the number of participants in our instance. Clearly,  $\eta>\tau$ since $(k+1)(\tau-1)>\tau$ for $\tau \geq 2$.
We set $\varepsilon \in (0,1)$ such that $\varepsilon < \min\{1/(\eta-2),1-k/(k+1)\}$; hence, we have $1 > (\eta -2)\varepsilon$ and $(k+1)(1-\varepsilon)>k$. 

Create one special team $s$. 
For each edge $e \in E$, create an edge team $t_e$. 
For each vertex $w \in V$, create a vertex participant $p_w$.  
The special team $s$ assigns value $1$ to each vertex participant $p_w$ for $w \in V$. 
Each edge team $t_e$ assigns value $1$ to a vertex participant $p_w$ if $w$ is an endpoint of edge $e$ and value~$0$ otherwise. 
Create an injective map $\phi \colon V \rightarrow E$; this is possible since $|V|\le |E|$.
Each vertex participant~$p_w$ prefers $s$ the most and $t_{\phi(w)}$ the second most, and has an arbitrary preference over the other teams (including ones that will be defined later). 

For the special team $s$, we create the following gadget $I$ that forces $s$ to get $k$ vertex participants in our desired allocation. The gadget $I$ consists of two teams $s$ and $d$, two participants $a$ and $b$, and $(k+1)(\tau -1)+1$ participants $c_{j}$ for $j \in [(k+1)(\tau -1)+1]$. Let $C=\{c_1,c_2,\ldots,c_{(k+1)(\tau -1)+1}\}$. Participants $a$ and $b$ prefer $s$ the most and $d$ the second most, and have an arbitrary preference over the other teams. Each $c_j \in C$ has an arbitrary (strict) preference over the teams. 
Teams $s$ and $d$ are the only teams that have positive value for the participants in the gadget $I$. 
Team $s$ assigns value $1$ to each of $a$ and $b$ and value $1-\varepsilon$ to each participant $c_j \in C$. 
Team $d$ assigns value $1$ to each of $a$ and $b$, and value $\varepsilon$ to the remaining participants (including those outside the gadget). 
See Table~\ref{table:team:valuation} for a summary of the participants' preferences and teams' valuations in our constructed instance. 
This completes the description of our construction, which clearly runs in polynomial time. 

To finish the proof, we show that there exists an allocation~$A$ satisfying EF1 and justified EF if and only if there exists an independent set $S \subseteq V$ of size $k$ in $G$.

\vspace{2mm}

($\Rightarrow$)
First, we show that if there exists an allocation~$A$ satisfying EF1 and justified EF, then there exists an independent set $S \subseteq V$ of size $k$ in $G$. 
Suppose that there exists such an allocation~$A$. 
We first claim that in $A$, team $s$ must receive at least $k$ vertex participants. 
To see this, consider the gadget $I$. 
Observe that $d$ must be allocated at least one of the participants $a$ and $b$. 
Indeed, if $d$ receives no participant, then since $\eta >\tau $, there is a team that is allocated at least two participants positively valued by $d$, contradicting the EF1 property of $A$.
Hence, $d$ is allocated at least one participant. 
Moreover, if $d$ is allocated neither $a$ nor $b$, then these two participants must be allocated to team $s$, since otherwise either $a$ or $b$ would have justified envy toward a participant allocated to team $d$; however, this would violate the EF1 condition from the viewpoint of $d$ since the maximum value $d$ can achieve from participants other than $a$ and $b$ is $(\eta -2)\varepsilon$, which is strictly less than $1$. 
Thus, at least one of $a$ and $b$ must be allocated to team $d$; without loss of generality, assume that $b$ is allocated to $d$. 
This means that none of the participants $c_j \in C$ can be allocated to team $s$, since otherwise $b$ would have justified envy toward such a participant. 
Hence, at most $\tau -1$ teams receive a participant from $C$; since $|C| = (k+1)(\tau-1)+1$, at least one team $t \neq s$ gets at least $k+2$ such participants. 
Thus, the value that team $s$ can obtain from the gadget $I$ is at most $1$, and there is a team $t$ that receives a subset of participants whose total value is at least $(k+2)(1-\varepsilon)$ from the viewpoint of team $s$. Since $A$ is EF1, team $s$ must receive value at least $(k+1)(1-\varepsilon)-1$, which is strictly greater than $k-1$, from outside of the gadget $I$.
The only way this is possible is for team $s$ to receive at least $k$ vertex participants. 

Now, let $S'$ denote the set of vertices in $V$ that correspond to the vertex participants allocated to team $s$. 
We have seen that $S'$ contains at least $k$ vertices. 
We claim that $S'$ forms an independent set in $G$. 
To see this, suppose for contradiction that $S'$ contains a pair of vertices that form an edge $e$ in $G$. 
Then, the edge team $t_e$ does not receive the participants positively valued by itself and therefore envies team $s$ by more than one participant, a contradiction. 
Thus, $S'$ is an independent set of size at least $k$, and any subset $S\subseteq S'$ of size $k$ forms an independent set as well.

\vspace{2mm}

($\Leftarrow$)
Conversely, suppose that there exists an independent set $S$ of size $k$ in $G$. 
Take an edge $e^*$ such that $\phi(w)=e^*$ for some $w \in S$; note that since $\phi$ is injective, we have $e^* \neq \phi(w)$ for all $w \in V \setminus S$. 
Consider the following allocation $A$. 
First, it allocates all the vertex participants $p_w$ with $w \in S$ to $s$ and each of the remaining $|V|-k$ vertex participants $p_w$ to the edge team $t_{\phi(w)}$; then, it allocates participant~$a$ to team $s$ and participant~$b$ to team $d$; finally, it allocates the participants $c_j \in C$ to the teams other than $s$ in such a way that the edge team $t_{e^*}$ receives $k+2$ such participants and every other team receives $k+1$ such participants. 

We first show that $A$ is EF1. 
\begin{itemize}
    \item Consider team $s$. Team $s$ gets value $k+1$. Team $d$ gets value $1+(k+1)(1-\varepsilon)$ from the viewpoint of $s$ and the envy of $s$ would disappear by removing participant $b$. 
    Every edge team $t_e$ with $\phi(w)=e$ for some $w \in V \setminus S$ gets value at most $1+(k+1)(1-\varepsilon)$ from the viewpoint of $s$ and the envy of~$s$ would disappear upon removing the vertex participant $p_w$ allocated to $t_e$. 
    The edge team $t_{e^*}$ gets value $(k+2)(1-\varepsilon)$ from the viewpoint of $s$ and the envy of~$s$ would disappear upon removing one of the participants allocated to $t_{e^*}$.  
    Every other edge team $t_e$ with $e \neq e^*$ and $e \neq \phi(w)$ for all $w \in V \setminus S$ obtains value at most $(k+1)(1-\varepsilon)$ from the viewpoint of $s$ and therefore $s$ does not envy $t_e$. 
    \item Consider an arbitrary edge team $t_e$. Team $t_e$ does not envy $d$ since $d$ does not receive a vertex participant. Team $t_e$ does not envy $s$ by more than one participant since $s$ is allocated at most one vertex participant $p_w$ corresponding to an endpoint of $e$. 
    Team $t_e$ does not envy edge team $t_{e'}$ with $e' \neq \phi(w)$ for all $w \in V \setminus S$ since $t_e$ assigns value $0$ to participants in $C$. 
    Team $t_e$ may envy edge team $t_{e'}$ with $e' = \phi(w)$ for some $w \in V \setminus S$ but the envy can be eliminated by removing the vertex participant allocated to $t_{e'}$. 
    
    \item Consider team $d$. Team $d$ does not envy the other teams by more than one participant, since $d$ receives value $1+(k+1)\varepsilon$ and every team other than $d$ gets value at most $1$ from the viewpoint of $d$ after removing some participant assigned to the team (recall that $1>(\eta-2)\varepsilon \geq (k+2) \varepsilon$). 
\end{itemize}

Now, it remains to show that $A$ satisfies justified EF. 
\begin{itemize}
    \item Consider an arbitrary vertex participant $p_w$. 
    Each participant $p_w$ with $w \in S$ does not have justified envy toward the other participants since she is allocated to her first choice. 
    Each participant $p_w$ with $w \not \in S$ is allocated to the edge team $t_{\phi(w)}$ and she may envy the participants allocated to $s$; however, these participants are valued at $1=v_s(p_w)$ by $s$, which means the envy is not justified. 
    \item Consider participant $a$. 
    Participant $a$ does not envy the other participants since she is allocated to her first choice.
    \item Consider participant $b$. 
    Participant $b$ may envy the participants allocated to $s$, but these participants are valued at $1=v_s(b)$ by $s$, which means the envy is not justified. 
    \item Consider participant $c_j \in C$. 
    Since every team values $c_j$ no more than any other participant, any envy that $c_j$ has toward another participant is not justified.
\end{itemize}
This completes the proof.
\end{proof}

Despite \Cref{thm:justifiedEF-hardness}, we show next that the problem becomes efficiently solvable if there are two teams.
Note that this special case covers the example in the proof of Proposition~\ref{prop:justifiedEF-counterexample}.

\begin{algorithm*}[!tb]
  \caption{For deciding whether an EF1 and justified EF allocation exists under the conditions of \Cref{thm:justifiedEF-algo-two} }\label{alg:justifiedEF-two}
  Let $P_1$ be the set of participants who prefer team~$1$, and $P_2 = P\setminus P_1$; \quad \emph{// Participants in $P_2$ prefer team~$2$}\\
  Let $V_1 = \{v_1(p)\mid p\in P\}$ and $V_2 = \{v_2(p)\mid p\in P\}$\;
  Let $W_1 = \{(x_1,n_1)\mid \text{either } x_1\in V_1 \text{ and } n_1\in [m], \text{ or } x_1 = -\infty \text{ and } n_1 = 0 \}$; \quad \emph{// $x_1$ represents the maximum value for team~$1$ of a participant in $P_1$ assigned to team~$2$, and $n_1$ represents the number of participants attaining this maximum value}\\
  Let $W_2 = \{(x_2,n_2)\mid \text{either } x_2\in V_2 \text{ and } n_2\in [m], \text{ or } x_2 = -\infty \text{ and } n_2 = 0 \}$; \quad \emph{// $x_2$ represents the maximum value for team~$2$ of a participant in $P_2$ assigned to team~$1$, and $n_2$ represents the number of participants attaining this maximum value}\\
  \For{$(x_1,n_1)\in W_1$ and $(x_2,n_2)\in W_2$}{
  Assign all participants $p\in P$ such that $v_1(p) < x_1$ to team~$2$, and all participants $p\in P$ such that $v_2(p) < x_2$ to team~$1$\; \label{line:assign-1}
  Assign all participants $p\in P_1$ such that $v_1(p) > x_1$ to team~$1$, and all participants $p\in P_2$ such that $v_2(p) > x_2$ to team~$2$\; \label{line:assign-2}
  \lIf{there is a conflict in the preceding assignments (i.e., some participant has been assigned to both teams)}{\textbf{continue}}
  Let $Q_1$ be the participants from $P_1$ who are still unassigned; \quad \emph{// $v_1(p) = x_1$ for every $p\in Q_1$}\\
  Let $Q_2$ be the participants from $P_2$ who are still unassigned; \quad \emph{// $v_2(p) = x_2$ for every $p\in Q_2$}\\
  Among the participants in $Q_1$, assign to team~$2$ the ones with the highest value for team~$2$ (breaking ties arbitrarily) so that there are a total of $n_1$ participants $p\in P_1$ with $v_1(p) = x_1$ assigned to team~$2$\;
  \lIf{the preceding assignment is not possible}{\textbf{continue}}
  Assign all remaining participants in $Q_1$ to team~$1$\;
  Among the participants in $Q_2$, assign to team~$1$ the ones with the highest value for team~$1$ (breaking ties arbitrarily) so that there are a total of $n_2$ participants $p\in P_2$ with $v_2(p) = x_2$ assigned to team~$1$\;
  \lIf{the preceding assignment is not possible}{\textbf{continue}}
  Assign all remaining participants in $Q_2$ to team~$2$\;
  \lIf{the resulting allocation is EF1}{\Return this allocation}
  } 
  \Return None\;
\end{algorithm*}

\begin{theorem}
\label{thm:justifiedEF-algo-two}
For two teams and nonnegative-value participants with strict preferences, there is a polynomial-time algorithm that decides whether an EF1 and justified EF allocation exists (and, if so, computes such an allocation).
\end{theorem}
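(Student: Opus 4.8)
The plan is to prove that Algorithm~\ref{alg:justifiedEF-two} is correct. The starting point is a structural characterization of justified EF that is special to two teams with strict preferences. Since every player strictly prefers exactly one team, a player $p\in P_1$ placed in team~$2$ has justified envy toward $q\in A_1$ precisely when $v_1(p)>v_1(q)$, and $p$ never has justified envy toward a teammate. Collecting these conditions, $A$ is justified EF if and only if $\max_{p\in P_1\cap A_2}v_1(p)\le \min_{q\in A_1}v_1(q)$ and $\max_{p\in P_2\cap A_1}v_2(p)\le \min_{q\in A_2}v_2(q)$, where an empty side imposes no constraint. This is exactly what the guessed thresholds encode: $x_1$ is the largest team-$1$ value among $P_1$-players sent to team~$2$, and $x_2$ is the analogous quantity for team~$2$. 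As each threshold ranges over a set of size $O(m)$ and each count $n_1,n_2$ over $\{0\}\cup[m]$, the sets $W_1,W_2$ have size $O(m^2)$ and the outer double loop runs a polynomial number of times.

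Next I would argue that, once $(x_1,n_1)$ and $(x_2,n_2)$ are fixed, the placement of almost every player is forced. The characterization requires every member of $A_1$ to have team-$1$ value at least $x_1$, so every player with $v_1(p)<x_1$ must go to team~$2$; dually every $P_1$-player with $v_1(p)>x_1$ must stay in team~$1$, and symmetrically for $x_2$ (Lines~\ref{line:assign-1}--\ref{line:assign-2}). If these forced placements collide, no justified-EF allocation realizes the guess and it is discarded. The only remaining freedom lies in the boundary sets $Q_1$ and $Q_2$, whose members satisfy $v_1(p)=x_1$ (respectively $v_2(p)=x_2$); the counts $n_1,n_2$ fix how many of them cross over, and any such split automatically preserves justified EF. Thus every justified-EF allocation arises from exactly one guess together with a resolution of $Q_1$ and $Q_2$.

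The crux is to show that among all resolutions consistent with a fixed guess, the greedy tie-breaking used by the algorithm attains EF1 whenever any resolution does. The key observation is that the two EF1 requirements decouple: because all members of $Q_1$ share the same value $x_1$ for team~$1$, reshuffling them while keeping $n_1$ fixed leaves $v_1(A_1)$, $v_1(A_2)$, and $\max_{q\in A_2}v_1(q)$ unchanged, so team~$1$'s envy depends only on the counts and on how $Q_2$ is split; symmetrically, team~$2$'s envy depends only on the counts and on how $Q_1$ is split. It therefore suffices to optimize each side independently. For team~$2$ I would maximize the slack $v_2(A_2)-v_2(A_1)+\max_{q\in A_1}v_2(q)$ over the choice of which $n_1$ members of $Q_1$ enter team~$2$, and prove by a single-swap exchange argument that moving a higher-$v_2$ member into team~$2$ in place of a lower-$v_2$ one never decreases this slack: the swap raises $v_2(A_2)-v_2(A_1)$ by $2\Delta$ while lowering the max-term by at most $\Delta$, where $\Delta\ge0$ is the gap in $v_2$-values, for a net change of at least $\Delta\ge0$. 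Hence the greedy ``highest $v_2$ to team~$2$'' choice maximizes the slack, so it is EF1 for team~$2$ iff some consistent resolution is; the symmetric claim (highest $v_1$ members of $Q_2$ to team~$1$) handles team~$1$.

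Putting these together finishes the proof: every allocation the algorithm returns is justified EF by the threshold construction and is tested for EF1 before being returned, while conversely, if an EF1 and justified-EF allocation exists, then running the loop with its own thresholds and counts yields, after the forced assignments and greedy tie-breaking, an allocation that is justified EF and, by the decoupling and exchange argument, also EF1. Since the loop has polynomially many iterations and each does polynomial work, the algorithm runs in polynomial time. I expect the main obstacle to be the decoupling-plus-exchange step of the third paragraph: verifying that the purely local, greedy tie-breaking within $Q_1$ and $Q_2$ cannot be beaten by a globally cleverer assignment, and in particular controlling the $\max$ term in each team's EF1 slack.
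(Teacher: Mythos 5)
Your proposal is correct and follows essentially the same route as the paper's proof: iterate over all guesses $(x_1,n_1,x_2,n_2)$, argue that justified EF forces the assignments in Lines~\ref{line:assign-1}--\ref{line:assign-2} and is automatically preserved by any resolution of $Q_1,Q_2$, resolve the boundary sets greedily, and check EF1 directly. Your third paragraph merely fleshes out, via the decoupling observation and the $2\Delta$-versus-$\Delta$ exchange argument, the step that the paper asserts in one sentence (that assigning to the other team the boundary players with the highest value for that team suffices for checking EF1) --- a valid elaboration rather than a different approach.
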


\begin{proof}
Let $P_1$ be the set of participants who prefer team~$1$, and define $P_2$ analogously.
Given an allocation, we let $x_1$ be the maximum value for team~$1$ of a participant in $P_1$ who is assigned to team~$2$, and let $n_1$ be the number of participants attaining this maximum value.
If there is no such participant, we let $x_1 = -\infty$ and $n_1 = 0$.
Define $x_2$ and $n_2$ analogously.
The idea behind the algorithm is that, once we fix $x_1,n_1,x_2,n_2$, we can efficiently check whether there is an EF1 and justified EF allocation consistent with these values.
This allows us to iterate over all possible values of these parameters.
The pseudocode of the algorithm is given as \Cref{alg:justifiedEF-two}.

To see that the algorithm is correct, consider an EF1 and justified EF allocation with the associated values $x_1,n_1,x_2,n_2$.
If a participant $p\in P$ with $v_1(p) < x_1$ is assigned to team~$1$, then a participant $p'\in P_1$ with $v_1(p') = x_1$ assigned to team~$2$ would have justified envy toward~$p$.
Hence, any participant $p\in P$ with $v_1(p) < x_1$ must be assigned to team~$2$; this is also vacuously true if $x_1 = -\infty$.
Moreover, by definition of $x_1$, any participant $p\in P_1$ with $v_1(p) > x_1$ must be assigned to team~$1$.
Analogous arguments apply to $v_2$.
It follows that the relevant participants must be assigned as in \Cref{line:assign-1,line:assign-2} of \Cref{alg:justifiedEF-two}.
Moreover, once these assignments are made, justified EF is guaranteed.

At this point, the participants who may still be unassigned are participants $p\in P_1$ with $v_1(p) = x_1$ and participants $p\in P_2$ with $v_2(p) = x_2$.
Let $Q_1$ be the set of participants $p\in P_1$ with $v_1(p) = x_1$ who are still unassigned, and define $Q_2$ analogously.
Since team~$1$ is indifferent between participants in $Q_1$, in order to check whether EF1 can be satisfied, it suffices to assign to team~$2$ a subset of these participants with the highest value for team~$2$, with the size of the subset chosen so that the definition of $n_1$ is fulfilled.
A similar statement holds for $Q_2$.
The correctness of the algorithm then follows from the fact that it checks all possible values of $x_1,n_1,x_2,n_2$.

The number of possible values of $(x_1,n_1,x_2,n_2)$ is $O(m^4)$, and for each $(x_1,n_1,x_2,n_2)$ the algorithm takes time $O(m)$.
Hence, \Cref{alg:justifiedEF-two} runs in time $O(m^5)$.
\end{proof}

Finally, we prove that if the two teams have identical valuations, then an EF1 and justified EF allocation always exists.

\begin{theorem}
\label{thm:justifiedEF-algo-two-iden}
For two teams with identical valuations and nonnegative-value participants, there exists an EF1 and justified EF allocation, and such an allocation can be computed in polynomial time.
\end{theorem}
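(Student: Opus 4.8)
The plan is to construct the allocation explicitly rather than search, exploiting the fact that with identical valuations justified envy has a very rigid form. Write $v$ for the common valuation, and let $P_1,P_2,P_0$ denote the players who strictly prefer team~$1$, strictly prefer team~$2$, and are indifferent, respectively. Since $p$ has justified envy toward $q$ only if $p$ strictly prefers $q$'s team and $v(p)>v(q)$, an allocation $(A_1,A_2)$ is justified EF if and only if every $p\in P_1\cap A_2$ satisfies $v(p)\le\min_{q\in A_1}v(q)$ and every $p\in P_2\cap A_1$ satisfies $v(p)\le\min_{q\in A_2}v(q)$. The idea is to engineer an allocation in which one team never holds a player who prefers the other team, so that only one of these two families of constraints can ever be active, and then to satisfy that family trivially by moving only minimum-value players.

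Concretely, first I would assign every player to a most preferred team (breaking indifference arbitrarily), obtaining $(B_1,B_2)$, and relabel the teams so that $v(B_1)\ge v(B_2)$. This initial allocation is justified EF (nobody is on a strictly worse team) but may badly violate EF1, since team~$2$ can envy team~$1$ by many players. Crucially, $A_1$ contains no player from $P_2$, and this stays true under the repair step, which only removes players from team~$1$. The repair is to repeatedly move the current minimum-value player of team~$1$ to team~$2$, stopping the first time team~$2$ no longer envies team~$1$ by more than one player. Because players leave team~$1$ in non-decreasing order of value, at every moment the moved players are exactly the lowest-valued ones, so each moved player has value at most $\min_{q\in A_1}v(q)$ (equality is fine, as justified envy requires a strict value gap). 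Thus the constraint on $P_1\cap A_2$ is automatically met throughout and the constraint on $P_2\cap A_1$ is vacuous, so justified EF is preserved by design and the only thing left to guarantee is EF1 at the stopping time.

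For EF1 I would track $D\coloneqq v(A_1)-v(A_2)$. For two teams with identical nonnegative valuations, EF1 is equivalent to $-\max_{q\in A_2}v(q)\le D\le\max_{q\in A_1}v(q)$ (team~$2$ is not over-envied when $D\le\max_{A_1}v$, and team~$1$ is not over-envied when $D\ge-\max_{A_2}v$). The right endpoint is exactly my stopping condition, so it holds when the process halts; the process must halt with $A_1\ne\emptyset$, since once $|A_1|=1$ we have $D\le v(A_1)=\max_{A_1}v$ because $v(A_2)\ge 0$. The main obstacle is to rule out ``overshooting'': that at the first step where team~$2$ stops over-envying, team~$1$ does not already over-envy team~$2$. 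I would argue this from the last move. Letting $c$ be the value of the player moved on that step (the minimum of team~$1$ just before), the failure of the stopping condition one step earlier gives $D_{\mathrm{before}}>\max_{A_1}v\ge c$, hence after the move $D=D_{\mathrm{before}}-2c>-c$; since the moved player now lies in team~$2$ we have $\max_{A_2}v\ge c$, whence $D>-\max_{A_2}v$. Thus both EF1 inequalities hold at the stopping time. (If no move is needed, $D=v(B_1)-v(B_2)\ge 0\ge-\max_{A_2}v$ handles the left inequality.)

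Finally, polynomial time is immediate: one sort of the players by value plus at most $m$ moves, each maintainable in $O(m)$ time, and all correctness claims reduce to the two displayed inequalities and the bottom-set structure of the moved players. I expect the overshooting step to be the only delicate point; everything else is bookkeeping about which preference class each team is allowed to contain.
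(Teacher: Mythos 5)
Your proposal is correct, and it takes a genuinely different route from the paper. The paper reduces the problem to known machinery: it arranges the players on a ``valley'' path (team-1 fans sorted by decreasing value, then team-2 fans by increasing value), applies the lumpy-tie discrete cut-and-choose algorithm of Bil\`o et al.\ to get EF1, and then reads off justified EF from the monotone ordering on each side of the cut. You instead give a self-contained greedy repair: start from the favorite-team allocation, relabel so team~1 is richer, and repeatedly transfer team~1's minimum-value player until team~2's envy is within one player. The two key ingredients you supply in place of the citation both check out: the characterization of EF1 for identical nonnegative valuations as $-\max_{q\in A_2}v(q)\le D\le\max_{q\in A_1}v(q)$ (valid here since removal from the envied bundle is always optimal, and your loop provably halts before $A_1$ empties, so the maxima are well defined), and the overshoot bound $D = D_{\mathrm{before}}-2c > -c \ge -\max_{q\in A_2}v(q)$ for the last moved value $c$, which is exactly the delicate step and is argued correctly. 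Justified EF is preserved because transfers occur in non-decreasing value order (so every relocated team-1 fan has value at most $\min_{q\in A_1}v(q)$, and strictness of justified envy tolerates ties), team~1 never acquires team-2 fans, and indifferent players can never have justified envy. In effect your algorithm produces the same family of ``valley cut'' allocations as the paper's, but your proof is elementary and avoids the external EF1-on-a-path result, whereas the paper's route buys modularity and a connection to the lumpy-tie framework; both run in polynomial time (a sort plus at most $m$ transfers).
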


\begin{proof}
To show the theorem, we will align the participants on a path so that the values form a ``valley'', and apply a discrete ``cut-and-choose'' algorithm to construct an EF1 and justified EF allocation. 

Formally, given a valuation function $v$ and a path $\mathcal{P}=(p_1,p_2,\ldots,p_t)$ of participants, $p_j$ is called a \emph{lumpy tie} if 
\[
v(\{p_1,\ldots,p_j\}) \geq v(\{p_{j+1},\ldots,p_t\})
\]
and 
\[
v(\{p_1,\ldots,p_{j-1}\}) \leq v(\{p_{j},\ldots,p_t\}).
\]
Such a participant exists, e.g., by taking a participant $p_j$ with the smallest index $j$ such that $v(\{p_1,\ldots,p_j\}) \geq v(\{p_{j+1},\ldots,p_t\})$. 
Using the notion of a lumpy tie, \citet[Def.~3.1]{BiloCaFl22} developed the following cut-and-choose algorithm over a path that computes an EF1 allocation for two teams with identical valuations in polynomial time. 

\begin{algorithm*}[!h]
\caption{Discrete cut-and-choose over a path}
\label{alg:twoteams}
\SetInd{0.8em}{0.3em}
{\bf Input} $\mathcal{P}=(p_1,p_2,\ldots,p_t)$ and valuation function $v$\;
Let $p_j$ denote the leftmost lumpy tie of the path $P$\;
Let $L=\{p_1,\ldots,p_{j-1}\}$ and $R=\{p_{j+1},\ldots,p_t\}$\;
\If{$v(L) \geq v(R)$}
{\Return $A=(L,R \cup \{p_j\})$;}
\Else{
\Return $A=(L \cup \{p_j\},R)$;
}
\end{algorithm*}

Now, consider an instance consisting of two teams with the same nonnegative valuation $v$. 
For $i \in [2]$, denote by $S_i$ the set of participants who prefer team $i$ over the other team (ties broken arbitrarily). 
Relabel the participants in $S_1=\{y_1,y_2,\ldots,y_{\ell}\}$ so that $v(y_1) \leq v(y_2) \leq \cdots \leq v(y_{\ell})$, and relabel the participants in $S_2=\{z_1,z_2,\ldots,z_{h}\}$ so that $v(z_1) \leq v(z_2) \leq \cdots \leq v(z_h)$. 
Consider the path $\mathcal{P}^* \coloneqq (y_{\ell},\ldots,y_{1},z_1,\ldots,z_h)$.

Apply Algorithm~\ref{alg:twoteams} to $\mathcal{P}^*$ and $v$. 
Let $A$ denote the resulting allocation. 
The allocation $A$ is EF1 \citep[Prop.~3.2]{BiloCaFl22}. 
To show that $A$ is justified EF, assume first that the leftmost lumpy tie over $\mathcal{P}^*$ is a participant $y_i$ for some $i \in [\ell]$. 
Every participant $z_k$ for $k\in [h]$, as well as every participant $y_j$ with $j > i$, is already allocated to her favorite team.
Consider a participant $y_j$ with $j \leq i$ who is allocated to team~$2$. The participant $y_j$ may envy a participant $y_{j'}$ with $j' \geq i$ who is allocated to team $1$; however, since $v(y_{j'}) \geq v(y_j)$, this envy is not justified. 
The proof proceeds similarly if the leftmost lumpy tie over $\mathcal{P}^*$ is a participant $z_i$ for some $i \in [h]$. 
\end{proof}

\section{Conclusion}

In this work,
we have investigated the setting of fair division with two-sided preferences, which serves to model the allocation of participating players to sports teams, employees to branches of a restaurant chain, or volunteers to community service clubs.
Our focus is on fairness among the teams together with stability for both sides.
We showed that EF1, swap stability, and individual stability are compatible in this setting, and an allocation satisfying these properties can be computed in polynomial time even when the teams may have positive or negative values for the participants.
If all participants yield nonnegative value to the teams, an EF1 and PO allocation always exists, and such an allocation can be found efficiently provided that the values are binary.
Furthermore, we demonstrated that an EF1 and justified EF allocation does not always exist and determining whether such an allocation exists is NP-complete.

For future work, it would be worth examining the interplay between other common (one-sided) fairness notions and two-sided stability conditions such as swap stability and justified EF.
For instance, while we have concentrated on the important fairness notion of EF1, 
one might try to achieve an approximation of \emph{maximin share fairness (MMS)} \citep{Budish11,KurokawaPrWa18}, either in place of or in conjunction with EF1.
Note that EF1 implies a $1/n$-approximation of MMS for nonnegative-value participants \citep{AmanatidisBiMa18}, so our relevant results also hold for this approximation.
It could also be interesting to extend our results to accommodate teams with varying \emph{entitlements} \citep{FarhadiGhHa19,BabaioffEzFe21,BabaioffNiTa21,ChakrabortyIgSu21}; this would allow us to capture restaurant branches or community service clubs of different sizes.\footnote{For nonnegative-value participants and one-sided preferences, weighted generalizations of EF1 can be attained by \emph{picking sequences} (of which round-robin is a special case) \citep{ChakrabortyScSu21,ChakrabortySeSu22}.
As a result, we can achieve a weighted extension of EF1 together with swap stability by applying Algorithm~\ref{alg:swap-stable-rr} to the desired picking sequence instead of the round-robin sequence.}
Finally, one could attempt to bring other concepts from the rich matching literature into consideration as well.

\section*{Acknowledgments}

This work was partially supported by JSPS KAKENHI Grant Numbers JP17K12646, JP20K19739, JP21K17708, and JP21H03397, by JST PRESTO Grant Numbers JPMJPR2122 and JPMJPR20C1, by JST ERATO Grant Number JPMJER2301, by Value Exchange Engineering, a joint research project between Mercari, Inc.\ and the RIISE, by the Singapore Ministry of Education under grant number MOE-T2EP20221-0001, and by an NUS Start-up Grant.
We would like to thank the IJCAI 2023 reviewers and participants, as well as the Games and Economic Behavior reviewers, for their valuable feedback.

\bibliographystyle{plainnat}
\bibliography{main}

\appendix

\section{Quota Constraints}
\label{app:quota}

In some applications of fair division, not all allocations are feasible, that is, there may be constraints on the feasible allocations \citep{Suksompong21}.
For example, as discussed in \Cref{sec:intro}, the balancedness constraint is often desirable when allocating participating players to sports teams, as the rule of the sport may require every team to have a certain number of players.
In this appendix, we consider a broader class of cardinality constraints in which each team~$i$ has an upper quota $z_i$, meaning that the team cannot receive more than $z_i$ participants.
The quota can represent the capacity of the team, for instance, the number of volunteers that the community service club could take or the amount of tasks that the team can possibly perform.
We assume without loss of generality that $z_i$ is an integer in $\{0,1,\dots,m\}$; the unconstrained setting studied in the rest of this paper corresponds to taking $z_i = m$ for all $i\in T$.
To ensure that there is at least one feasible allocation, we also assume that $\sum_{i\in T}z_i \ge m$.
Note that if $\sum_{i\in T}z_i = m$, then each team~$i$ must receive exactly $z_i$ participants.

With quota constraints that may vary from one team to another, envy in the usual sense is sometimes unavoidable---if there are two teams with $z_1 = 2$ and $z_2 = 8$, and $10$ participants of identical positive value are to be allocated, then team~$1$ inevitably has huge envy toward team~$2$.
To handle such heterogeneous constraints, prior work has proposed the notion of \emph{feasible envy} \citep{WuLiGa21,DrorFeSe23}.\footnote{Feasible envy has been applied to more general constraints, including budget constraints \citep{WuLiGa21} and matroid constraints \citep{DrorFeSe23}.}
In our terminology, with nonnegative-value participants, a team~$i$ does not feasible-envy another team~$j$ if for every subset of $j$'s participants that fits within $i$'s quota, $i$'s value for that subset does not exceed her value for her own set of participants.
\citet[Thm.~2]{DrorFeSe23} showed that under one-sided preferences, if all participants yield nonnegative value, the \emph{capped round-robin algorithm} produces a ``feasible EF1'' allocation.
The capped round-robin algorithm is a simple modification of the vanilla round-robin algorithm where a team does not receive any more pick upon reaching its quota.

As it is, the notion of feasible envy does not work well with nonpositive-value participants.
Indeed, if there are two teams with quotas $z_1 = 2$ and $z_2 = 8$, and $10$ participants of identical \emph{negative} value are to be allocated, then team~$2$ inevitably has huge ``feasible envy'' toward team~$1$.
We propose the following definition, which accommodates both nonnegative-value and nonpositive-value participants simultaneously.

\begin{definition}
\label{def:quota-EF1}
In the setting with quotas, an allocation~$A$ is said to satisfy
\begin{itemize}
\item \emph{quota-EF} if for all distinct $i,j\in T$, there exists a subset $B_i\subseteq A_i$ of size $\min\{|A_i|,z_j\}$ such that for every subset $B_j\subseteq A_j$ of size $\min\{|A_j|,z_i\}$, it holds that $v_i(B_i) \ge v_i(B_j)$;
\item \emph{quota-EF1} if for all distinct $i,j\in T$, there exists a subset $B_i\subseteq A_i$ of size $\min\{|A_i|,z_j\}$ such that for every subset $B_j\subseteq A_j$ of size $\min\{|A_j|,z_i\}$, it holds that $v_i(B_i\setminus X)\ge v_i(B_j\setminus Y)$ for some $X\subseteq B_i$ and $Y\subseteq B_j$ with $|X\cup Y|\le 1$;
\item \emph{quota-EF[1,1]} if the same condition holds as for quota-EF1, except that $|X\cup Y|\le 1$ is replaced by $|X|,|Y|\le 1$.
\end{itemize}
\end{definition}

Several remarks on Definition~\ref{def:quota-EF1} are in order.

First, observe that if $z_i\le z_j$, then $|A_i|\le z_i\le z_j$ and thus $\min\{|A_i|,z_j\} = |A_i|$.
Analogously, if $z_i\ge z_j$, then $|A_j|\le z_j\le z_i$ and thus $\min\{|A_j|,z_i\} = |A_j|$.

Second, since the unconstrained setting corresponds to taking $z_i = m$ for all $i\in T$, we have $\min\{|A_i|,z_j\} = |A_i|$ and $\min\{|A_j|,z_i\} = |A_j|$ for all $i,j$, which means that the only choices of $B_i$ and~$B_j$ in Definition~\ref{def:quota-EF1} are $A_i$ and $A_j$, respectively.
Hence, quota-EF1 (resp., quota-EF[1,1]) reduces to EF1 (resp., EF[1,1]) in that setting.
By the remarks after Definition~\ref{def:EF1}, in the unconstrained setting, both quota-EF1 and quota-EF[1,1] coincide with EF1 for nonnegative-value participants as well as for nonpositive-value participants.

Third, suppose that all participants yield nonnegative value to every team.
If $z_i\le z_j$, we require comparing $A_i$ with \emph{every} subset $B_j\subseteq A_j$ of size $\min\{|A_j|,z_i\}$, whereas if $z_i\ge z_j$, we require comparing \emph{some} subset $B_i\subseteq A_i$ of size $\min\{|A_i|,z_j\}$ with $A_j$.
Our definition is thus stronger than the EF1 notion of \citet{WuLiGa21}, which---in the case $z_i\ge z_j$---only requires comparing the set $A_i$ itself with $A_j$.\footnote{\citet{DrorFeSe23} defined their feasible-EF1 notion slightly differently.}

Fourth, one could ask whether the condition ``there exists a subset $B_i\subseteq A_i$'' can be replaced by ``for every subset $B_i\subseteq A_i$''.
As the following example shows, the notion resulting from this modification may not be satisfiable, even with nonnegative-value participants only (or nonpositive-value participants only).

\begin{example}
Consider an instance consisting of $n = 2$ teams with quotas $z_1 = 5$ and $z_2 = 3$, and $m = 8$ participants.
Both teams have identical valuations over the participants---they have value $0$ for four of the participants and $1$ for each of the remaining four participants.

Since all participants yield nonnegative value, quota-EF1 and quota-EF[1,1] coincide.
We show next that if the condition ``there exists a subset $B_i\subseteq A_i$'' in Definition~\ref{def:quota-EF1} is replaced by ``for every subset $B_i\subseteq A_i$'', then quota-EF1 cannot be satisfied in this instance.
Call a participant yielding value~$1$ a ``heavy participant'' and one yielding value $0$ a ``light participant''.
Let us consider the four possibilities.
\begin{itemize}
\item Suppose that team~$2$ receives three light participants, so team~$1$ receives four heavy participants and one light participant. 
If team~$2$ considers the set $A_2$ consisting of three light participants against the set $B_1\subseteq A_1$ consisting of three heavy participants, then team~$2$ envies team~$1$ by more than one participant.
\item Suppose that team~$2$ receives one heavy participant and two light participants, so team~$1$ receives three heavy participants and two light participants. 
If team~$2$ considers the set $A_2$ consisting of one heavy participant and two light participants against the set $B_1\subseteq A_1$ consisting of three heavy participants, then team~$2$ envies team~$1$ by more than one participant.
\item Suppose that team~$2$ receives two heavy participants and one light participant, so team~$1$ receives two heavy participants and three light participants. 
If team~$1$ considers the set $B_1\subseteq A_1$ consisting of three light participants against the set $A_2$ consisting of two heavy participants and one light participant, then team~$1$ envies team~$2$ by more than one participant.
\item Suppose that team~$2$ receives three heavy participants, so team~$1$ receives one heavy participant and four light participants. 
If team~$1$ considers the set $B_1\subseteq A_1$ consisting of three light participants against the set $A_2$ consisting of three heavy participants, then team~$1$ envies team~$2$ by more than one participant.
\end{itemize}
Hence, the stronger version of quota-EF1 cannot be satisfied in this instance.

Observe that if we change the value of the last four participants from $1$ to $-1$ (so the instance consists only of \emph{nonpositive}-value participants), a similar argument still shows that the stronger version of quota-EF1 cannot be satisfied.
\end{example}

Next, we introduce a version of balancedness with respect to quotas.

\begin{definition}
In the setting with quotas, an allocation~$A$ is said to be \emph{quota-balanced} provided that the following holds: for $i,j\in T$, if $|A_i| \le |A_j| - 2$, then $|A_i| = z_i$.
\end{definition}

Quota-balancedness ensures that if a team receives at least two participants fewer than another team, this is because the former team has already reached its quota.
Note that without quotas, quota-balancedness reduces to the balancedness notion studied in the rest of this paper.

We can now state a generalization of \Cref{thm:balanced} in the presence of quota constraints.

\begin{theorem}
\label{thm:balanced-quota}
In the setting with quotas, for any instance, a quota-balanced allocation that satisfies quota-EF[1,1] and swap stability exists and can be computed in polynomial time. 
\end{theorem}

Since quota-EF[1,1] reduces to quota-EF1 for nonnegative-value participants as well as for nonpositive-value participants---for similar reasons as the unconstrained analogs EF[1,1] and EF1---we obtain the following corollary, which generalizes Corollary~\ref{cor:balanced}.

\begin{corollary}
In the setting with quotas, for any nonnegative-value participant instance, a quota-balanced allocation that satisfies quota-EF1 and swap stability exists and can be computed in polynomial time.
The same holds for any nonpositive-value participant instance.
\end{corollary}

To establish \Cref{thm:balanced-quota}, we combine our algorithm from \Cref{thm:balanced} with the capped round-robin algorithm of \citet{DrorFeSe23}.
Define a function $f\colon[m]\rightarrow [n]$ by following the round-robin sequence $1,2,\dots,n,1,2,\dots$, with the exception that if a number~$i$ has already appeared $z_i$ times, then we skip over it in the rest of the sequence.
Since we assume that $\sum_{i\in T}z_i \ge m$, this sequence is well-defined.
Our algorithm is the same as \Cref{alg:swap-stable-rr} except that we apply this new function~$f$; let us refer to it as the ``modified \Cref{alg:swap-stable-rr}''.

Observe that in the output allocation~$A$ of the modified \Cref{alg:swap-stable-rr}, the inequality $|A_i| \le |A_j| - 2$ can occur only if team~$i$ has reached its quota, so $A$ is quota-balanced.
Moreover, swap stability and polynomial running time can be shown as in the unconstrained setting (Lemmas~\ref{lem:swap-stable} and \ref{lem:polynomial-time}).
It therefore remains to prove that $A$ satisfies quota-EF[1,1].

\begin{lemma}
\label{lem:quota-EF1-1}
In the setting with quotas, the output allocation $A$ of the modified Algorithm~\ref{alg:swap-stable-rr} is quota-EF[1,1].
\end{lemma}

\begin{proof}
Fix arbitrary distinct $i,j\in T$.
We consider three cases. 

\underline{Case 1}: $|A_i| = |A_j|$.
Since $|A_i| \le z_i$ and $|A_j| \le z_j$, we have $\min\{|A_i|,z_j\} = |A_i|$ and $\min\{|A_j|,z_i\} = |A_j|$.
This means that the only choices of $B_i$ and $B_j$ in Definition~\ref{def:quota-EF1} are $A_i$ and $A_j$, respectively.
Then, the argument given in Lemma~\ref{lem:EF1-1} for the case $|A_i| = |A_j|$ holds as is.
(If $|A_i| = |A_j| = 0$, EF[1,1] from $i$ to $j$ holds trivially.)

\underline{Case 2}: $|A_i| > |A_j|$.
Since $|A_i|\le z_i$, we have $\min\{|A_j|,z_i\} = |A_j|$.
This means that the only choice of $B_j$ in Definition~\ref{def:quota-EF1} is $A_j$.
We consider two subcases.
\begin{itemize}
\item \underline{Case 2.1}: $|A_i| \ge |A_j| + 2$.
Since $A$ is quota-balanced, it must be that $z_j = |A_j|$, so $\min\{|A_i|,z_j\} = |A_j|$.
Let $B_i$ be the subset of $A_i$ consisting of the $|A_j|$ participants matched to the vertices with the lowest indices in $Q$.\footnote{The set $Q$ is defined as in \Cref{alg:swap-stable-rr}.}
The same argument as in the case $|A_i| = |A_j|$ holds, with $B_i$ taking the role of $A_i$.
\item \underline{Case 2.2}: $|A_i| = |A_j| + 1$.
Since $z_j\ge |A_j|$, we have $\min\{|A_i|,z_j\} = \min\{|A_j|+1,z_j\} \in \{|A_j|, |A_j| + 1\}$.
If $\min\{|A_i|,z_j\} = |A_j|$, the same argument as in Case~2.1 holds.
Else, $\min\{|A_i|,z_j\} = |A_j| + 1$, which means that $z_j \ge |A_j|+1$ and thus $i < j$.
Since $|A_i| = |A_j| + 1$, the only set $B_i\subseteq A_i$ of size $\min\{|A_i|,z_j\} = |A_j|+1$ is $A_i$ itself.
We may now apply the argument given in Lemma~\ref{lem:EF1-1} for the case $|A_i| > |A_j|$.
\end{itemize}

\underline{Case 3}: $|A_i| < |A_j|$.
Since $|A_j|\le z_j$, we have $\min\{|A_i|,z_j\} = |A_i|$.
This means that the only choice of $B_i$ in Definition~\ref{def:quota-EF1} is $A_i$.
We consider two subcases.
\begin{itemize}
\item \underline{Case 3.1}: $|A_i| \le |A_j| - 2$.
Since $A$ is quota-balanced, it must be that $z_i = |A_i|$, so $\min\{|A_j|,z_i\} = |A_i|$.
Let $k = |A_i|$, and consider any subset $B_j\subseteq A_j$ of size~$k$.
Observe that for each $\ell\in\{2,3,\dots,k\}$, the $\ell$th lowest index of a vertex in $Q$ matched to a participant in $B_j$ is higher than the $(\ell-1)$th lowest index of a vertex in $Q$ matched to a participant in $A_i$.
Then, a similar argument as in the case $|A_i| = |A_j|$ holds, with $B_j$ taking the role of $A_j$.

\item \underline{Case 3.2}: $|A_i| = |A_j| - 1$.
Since $z_i\ge |A_i|$, we have $\min\{|A_j|,z_i\} = \min\{|A_i|+1,z_i\} \in \{|A_i|, |A_i| + 1\}$.
If $\min\{|A_j|,z_i\} = |A_i|$, the same argument as in Case~3.1 holds.
Else, $\min\{|A_j|,z_i\} = |A_i| + 1$.
Let $k = |A_i|+1$.
Again, observe that for each $\ell\in\{2,3,\dots,k\}$, the $\ell$th lowest index of a vertex in $Q$ matched to a participant in $A_j$ is higher than the $(\ell-1)$th lowest index of a vertex in $Q$ matched to a participant in $A_i$.
Moreover, since $|A_i| = |A_j|-1$, the only set $B_j\subseteq A_j$ of size $\min\{|A_j|,z_i\} = |A_i|+1$ is $A_j$ itself.
Then, a similar argument as in Lemma~\ref{lem:EF1-1} for the case $|A_i| < |A_j|$ holds.
\end{itemize}

Hence, in all three cases, the allocation~$A$ is quota-EF[1,1].
\end{proof}

\end{document}